\documentclass[11pt]{amsart}
\usepackage[top=1.0in, bottom=1.0in, left=1.0in, right=1.0in]{geometry}

\usepackage{amsfonts, amssymb, amsmath, enumerate, mathtools,comment}
\usepackage{amsthm}
\usepackage{hyperref}
\usepackage[foot]{amsaddr}
\usepackage{bbm}
\usepackage{xcolor}
\usepackage[style=nature]{biblatex}
\usepackage{wasysym}
\hypersetup{
	colorlinks,
	linkcolor={red!40!gray},
	citecolor={blue!40!gray},
	urlcolor={blue!70!gray}
}

\usepackage{amsmath,amssymb}
\usepackage{xcolor,tikz}
\usetikzlibrary{matrix,positioning,calc,arrows.meta,fit,decorations.pathreplacing}
\usetikzlibrary{patterns}

\usepackage{graphicx}

\numberwithin{equation}{section}
\usepackage{amsfonts, amssymb, amsmath, enumerate}

\usepackage{mathrsfs}

\usepackage{mathtools}
\usepackage[normalem]{ulem}
\usepackage{comment}

\usepackage{bbm}
\usepackage{algorithm}
\usepackage{algorithmic}
\usepackage{tikz}

\usetikzlibrary{shapes.geometric, arrows.meta, positioning,calc,matrix,calc,positioning,arrows.meta}

\usepackage{amsmath,amssymb}
\usepackage{xcolor}

\usepackage{adjustbox}

\tikzstyle{roundedbox} = [
    rectangle, rounded corners,
    minimum width=3.5cm, minimum height=1cm,
    text centered, draw=black, fill=blue!20,
    text width=10cm, align=center
]
\tikzstyle{arrow} = [thick, ->, >=Stealth]

\usepackage{pgfplots}
\pgfplotsset{compat=newest}

\numberwithin{equation}{section}

\DeclareUnicodeCharacter{2113}{l}

\definecolor{darkgreen}{cmyk}{0.6,0,0.8,0}

\DeclareMathOperator{\ind}{\mathbbm{1}}

\DeclareMathOperator{\E}{\mathbb{E}}
\DeclareMathOperator{\R}{\mathbb{R}}
\DeclareMathOperator{\F}{\mathbb{F}}

\DeclareMathOperator{\Z}{\mathbb{Z}}

\DeclareMathOperator{\bS}{\mathbb{S}}
\DeclareMathOperator{\Pb}{\mathbb{P}}

\DeclareMathOperator*{\diag}{diag}
\DeclareMathOperator*{\Span}{span}
\DeclareMathOperator{\supp}{supp}
\DeclareMathOperator{\sign}{sign}

\DeclareMathOperator*{\tr}{tr}
\DeclareMathOperator*{\Tr}{Tr}

\DeclareMathOperator{\card}{card}

\DeclareMathOperator{\poly}{poly}

\DeclareMathOperator{\cov}{Cov}

\DeclareMathOperator{\spec}{spec}
\DeclareMathOperator{\row}{Row}
\DeclareMathOperator{\col}{Col}
\DeclareMathOperator{\mach}{mach}

\DeclareMathOperator{\fl}{fl}

\def \etc {,\ldots,}
\def \P {\mathbb{P}}

\newcommand{\cN}{\mathcal{N}}

\newcommand{\cE}{\mathcal{E}}

\newcommand{\1}{\mathbbm 1}

\DeclareMathOperator{\incomp}{Incomp}

\DeclareMathOperator{\comp}{Comp}

\DeclareMathOperator{\e}{\varepsilon}

\DeclarePairedDelimiter{\norm}{\lVert}{\rVert}

\DeclarePairedDelimiter{\abs}{\lvert}{\rvert}
\DeclarePairedDelimiter{\ip}{\langle}{\rangle}
\DeclarePairedDelimiter{\paren}{(}{)}
\DeclarePairedDelimiter{\sqbr}{[}{]}
\DeclarePairedDelimiter{\angbr}{\langle}{\rangle}
\DeclarePairedDelimiter{\cbr}{\{}{\}}

\newtheorem{theorem}{Theorem}[section]
\newtheorem{proposition}[theorem]{Proposition}
\newtheorem{corollary}[theorem]{Corollary}
\newtheorem{lemma}[theorem]{Lemma}

\newtheorem{remark}[theorem]{Remark}

\newtheorem{definition}[theorem]{Definition}

\begin{document}

\title[Well-Conditioned Oblivious Perturbations in Linear Space]{
Well-Conditioned Oblivious Perturbations in Linear Space 
}
\date{}
\author{Shabarish Chenakkod$^*$}
\email{shabari@umich.edu}
\author{Micha{\l} Derezi\'nski$^*$}
\email{derezin@umich.edu}
\author{Xiaoyu Dong$^\dagger$}
\email{xdong@nus.edu.sg}
\author{Mark Rudelson$^*$}
\email{rudelson@umich.edu}
\thanks{Partially supported by DMS 2054408, CCF 2338655, and a Google ML and Systems Junior Faculty Award. This work was done in part while MD was visiting the Simons Institute for the Theory of Computing.}
\address{$^*$University of Michigan, Ann Arbor, MI, USA}
\address{$^\dagger$National University of Singapore, Singapore}

\begin{abstract}
Perturbing a deterministic $n$-dimensional matrix with small Gaussian noise is a cornerstone of smoothed analysis of algorithms [Spielman and Teng, JACM 2004], as it reduces the condition number of the input to $O(n)$, and with it the complexity of many matrix algorithms. However, when deployed algorithmically, these perturbations are expensive due to the cost of generating and storing $n^2$ Gaussian random variables. We propose a perturbation that requires generating and storing $O(n)$ random numbers in $O(\log n)$ bits of precision, and reduces the condition number of any deterministic matrix to $O(n)$, matching Gaussian perturbations. Our result in particular implies a better complexity for the perturbed conjugate gradient algorithm, showing that we can solve an $n\times n$ linear system in linear space to within an arbitrarily small constant backward error using $O(n)$ matrix-vector products.

In our construction, we introduce the concept of a pattern matrix, which is a dense deterministic matrix that maps all sparse vectors into dense vectors, and we combine it with a sparse perturbation whose entries are dependent and located in a non-uniform fashion. In order to analyze this construction, we develop new techniques for lower bounding the smallest singular value of a random matrix with dependent entries.
\end{abstract}

\maketitle

\thispagestyle{empty}
\newpage
\setcounter{page}{1}

\section{Introduction}
\label{s:intro}

Going beyond the worst-case guarantees is a central challenge in closing the theory-practice gap for modern algorithm analysis. One of the most effective tools towards achieving this goal has been \emph{smoothed analysis}. Proposed by Spielman and Teng \cite{spielman2004smoothed}, this framework considers the performance of an algorithm on an arbitrary input perturbed randomly by small Gaussian noise. Such small perturbations are known to escape the hard corner cases that arise in the analysis of numerous algorithms, which has led to new guarantees for classical methods such as the simplex algorithm in linear programming \cite{spielman2004smoothed} or Gaussian elimination for solving systems of linear equations \cite{sankar2006smoothed}, as well as a new understanding of other popular algorithms, including in machine learning \cite{kalai2009learning,rakhlin2011online}, game theory \cite{daskalakis2024smooth}, and computational linear algebra \cite{burgisser2013condition,cifuentes2022polynomial}.

Most techniques in smoothed analysis require ensuring that a random perturbation of the input escapes certain degenerate corner cases, which are typically instances where an appropriate matrix is nearly singular. This is attained by leveraging tools from random matrix theory to show that a perturbed matrix is sufficiently well-conditioned to ensure good performance of the algorithm. The simplest and most classical formulation of this perturbation property, which directly arises as a non-degeneracy condition for many algorithms in computational linear algebra, is the following.
\begin{definition}\label{d:oblivious-perturbation}
An $n\times n$ random matrix $R$ is a
  $(\kappa, \e,\delta)$-(well-conditioned oblivious) perturbation if for
  any deterministic matrix $A\in\R^{n\times n}$ with unit spectral norm (denoted $\|A\|=1$), we have:
  \begin{align*}
    \Pb\Big(\kappa(A+\varepsilon R)\leq
    \kappa\quad\text{and}\quad\|R\|\leq 1\Big)\geq 1-\delta,\qquad\text{where}\qquad\kappa(M)=\|M\|\|M^{-1}\|.
  \end{align*}
\end{definition}
In other words, a $(\kappa,\e,\delta)$-perturbation is one that, for any matrix, with probability $1-\delta$ introduces at most $\e$ relative error and reduces its condition number to at most $\kappa$. We refer to this as an \emph{oblivious} perturbation because the same distribution over $R$ has to apply to any input matrix $A$. Sankar, Spielman, and Teng \cite{sankar2006smoothed} showed that an appropriately scaled Gaussian matrix is a $(\kappa,\varepsilon,\delta)$-perturbation with condition number that scales linearly with $n$, namely $\kappa=O(n/\varepsilon\delta)$, and they used this to provide smoothed analysis of Gaussian elimination in finite-precision arithmetic. This perturbation property also controls the smoothed complexity of many iterative numerical algorithms such as Krylov subspace methods \cite{greenbaum1989behavior} and interior-point methods \cite{renegar1995incorporating}, and is a necessary component in the stability analysis of others, such as algorithms for matrix diagonalization \cite{banks2023pseudospectral}.

Smoothed analysis can be viewed from a purely algorithmic perspective, where one actually performs the perturbation before solving the problem instance. This viewpoint has been particularly prevalent in numerical linear algebra \cite{sankar2006smoothed,davies2008approximate}, where a predominant way to evaluate an algorithm is through \emph{backward error}, i.e., the distance to the nearest problem instance solved by its output \cite{Higham:2002:ASNA,derezinski2026towards}. In particular, an algorithm that applies a $(\kappa,\varepsilon,\delta)$-perturbation and then solves the resulting task incurs at most $\varepsilon$ backward error. The notion of backward error dates back to von Neumann and Goldstine \cite{von1947numerical} in the early days of computing, and it was developed as a rigorous framework by Wilkinson in the late 1950s \cite{wilkinson1960error,wilkinson1961error} in order to analyze the effect of perturbations occurring as a result of rounding errors. The success of smoothed analysis has shown that injecting deliberate perturbations into the data that go beyond the effect of rounding errors can in fact be beneficial in attaining strong backward error performance in modern algorithms. 

While Gaussian perturbations are often the most convenient for the analysis, they are typically not the most practical (or sometimes not even tractable) due to their high cost in both randomness and space requirements. For instance, an $n\times n$ Gaussian matrix requires generating $n^2$ Gaussian random variables, each taking $O(\log n)$ random bits when properly discretized, which is not only expensive to generate, but may also far exceed the storage requirements of the input data when it is represented in a compressed format (e.g., due to sparsity or other structure). Since we are not restricted to Gaussian perturbations when seeking backward error guarantees, this motivates the central question of this paper:
\begin{center}
    \textit{
    How many random bits do we need to generate an $n\times n$ well-conditioned oblivious perturbation?
    }
\end{center}
A natural strategy for reducing the random bit complexity of a random matrix is to sparsify it by replacing most of its entries with zeros. This places our question in the context of a line of works in random matrix theory studying the spectral properties of sparse matrices. In this context, Tao and Vu \cite{tao2008random} showed that a $\poly(n)$-conditioned perturbation can be obtained from a matrix with $n^{1+\theta}$ randomly placed non-zero entries for any fixed $\theta > 0$. Later, Shah, Srivastava, and Zeng \cite{shah2024sparse} optimized their analysis, reducing the sparsity to $O(n\log^2n)$ non-zero entries, i.e., $O(n\log^3n )$ random bits since $\log n$ bits are used for each entry, at the cost of a super-polynomial dependence in the condition number, $\kappa = n^{O(\frac{\log n}{\log\log n})}$. They also observed that $n\log n$ non-zeros and $n\log^2 n$ random bits is the natural limit for this approach since below that limit the resulting perturbation becomes singular due to coupon collector problem, and thus cannot be well-conditioned.

\subsection{Main result}
In this work, we improve upon the prior results in two ways. First, we construct a well-conditioned oblivious perturbation using only $O(n\log n)$ random bits, overcoming the coupon collector barrier from previous approaches. In particular, we show that our perturbation can be stored using $O(n)$ numbers of $\poly(n)$ size. Second, we show that the condition number of this perturbation is $O(n)$, i.e., it scales \emph{linearly} with the dimension, matching dense Gaussian matrices.
\begin{theorem}\label{t:main}
    Given $n\geq 1$ and $\varepsilon,\delta\in(0,1)$, using $n\log(n)\cdot\poly(1/\delta)$ random bits we can generate an $n\times n$ $(\kappa,\varepsilon,\delta)$-well-conditioned oblivious perturbation with condition number $\kappa= n\cdot\poly(1/\varepsilon\delta)$.
\end{theorem}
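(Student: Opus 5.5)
We will construct the perturbation as $R = c\,(P S + S' P')$-type products, but to keep the plan concrete we take $R = \frac{c}{\sqrt{n}}\, S\, W$, where $W$ is a fixed dense deterministic \emph{pattern matrix} and $S$ is a sparse random matrix with $O(n)$ nonzeros. We choose $W$ to be a normalized Hadamard/DFT-like matrix (or a suitably structured deterministic matrix with $\|W\|\le 1$) with the property that every $s$-sparse vector $x$ is mapped to a vector $Wx$ that is ``spread'': a constant fraction of its coordinates have magnitude at least $\|x\|/\sqrt{n}$ up to $\poly(1/\delta)$ factors. For $S$ we take $n\cdot\poly(1/\delta)$ nonzero entries with independent $\pm1$ or discretized-Gaussian values on $O(\log n)$ bits, with positions chosen non-uniformly: each row receives a bounded number of entries, and the entries in a row are dependent through a shared random sign or scale. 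This uses $n\log n\cdot\poly(1/\delta)$ random bits. The bound $\|R\|\le 1$ with probability $1-\delta/2$ follows from $\|W\|\le 1$, bounded row and column sparsity of $S$, and a standard bound of the form $\|S\|\lesssim \sqrt{\text{max row/column degree}}$, after rescaling by $c$.

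The main work is to show that $s_{\min}(A+\varepsilon R)\gtrsim \varepsilon\,\poly(\varepsilon\delta)/n$ with probability $1-\delta/2$. Since $\|A+\varepsilon R\|\le 2$, this gives $\kappa\le n\poly(1/\varepsilon\delta)$. We follow the compressible/incompressible decomposition of Rudelson--Vershynin, applied to the unit vectors $x$ with $\|(A+\varepsilon R)x\|$ small.

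\emph{Compressible vectors.} These are vectors close to $s$-sparse ones with $s=cn$. For them we use the pattern matrix: $Wx$ is spread, so $SWx$ is a sum over rows of independent sparse linear forms in spread vectors. Each coordinate of $SWx$ then has Lévy concentration at least $\gtrsim\|x\|/\sqrt n$ at scale $1/\sqrt n$. Tensorization gives $\P(\|(A+\varepsilon R)x-y\|\le \varepsilon\eta)\le e^{-cn}$ for fixed $x$, uniformly in the shift $y=-Ax$. We then union bound over an $\eta$-net of compressible vectors, whose size is $(C/\eta)^{cn}$, combined with the norm bound on $R$.

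\emph{Incompressible vectors.} We use the invertibility-via-distance lemma: $s_{\min}\ge \frac{\eta}{\sqrt n}\min_k \mathrm{dist}(C_k,H_k)$ holds up to probability loss $O(\delta)$, where $C_k$ are the columns and $H_k$ is the span of the others. This reduces the problem to a small-ball estimate for $\langle C_k,v\rangle$ with $v$ a unit normal to $H_k$. Here $v$ is random but independent of $C_k$ only after decoupling the dependent structure. We will group the randomness so that a block of the sparse randomness affects a single column through $W$, which gives the required conditional independence.

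This last step is the main obstacle, since the entries of $R$ are dependent and sparse. To get a linear-in-$n$ bound we need $\P(|\langle C_k,v\rangle - a|\le t/\sqrt{n}\cdot\ldots)\lesssim t+\poly(\delta)$, with no $\log$ losses. Our first attempt will be to show that the normal $v$ is itself incompressible, by applying the compressible-vector bound to the transpose. Then $W^\top$ applied to $v$ (or the relevant mixing) yields many coordinates of size $\gtrsim 1/\sqrt n$. A Littlewood--Offord/Esseen bound for the $\Theta(1)$ independent random weights in the column, combined with the continuous (discretized Gaussian) distribution of the values, then gives small-ball probability $O(t)+n^{-\omega(1)}$. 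Averaging over $k$ as in Rudelson--Vershynin and choosing $t=\poly(\varepsilon\delta)$ completes the bound.
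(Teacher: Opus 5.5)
\paragraph{The incompressible step fails for your construction.} Your plan breaks down at the step you flag as the main obstacle, and the problem is structural, not technical. With the product form $R=c\,SW$ and a dense $W$, no ``block of the sparse randomness'' affects a single column. Since $C_k=c\,SWe_k$, each entry $S_{ij}$ enters the entire $i$-th row of $R$.

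Worse, condition on the other columns $\{C_j\}_{j\neq k}$. This is the same as conditioning on $SWe_j$ for $j\neq k$, because $A$ is deterministic. It fixes $s_i^\top We_j$ for every row $s_i$ of $S$ and every $j\neq k$, so $s_i$ is confined to a line $s_i+t\,u$ with $u\perp\{We_j\}_{j\neq k}$. For orthogonal $W$ we have $u=We_k$, which is spread by the very property you require of $W$. Hence $s_i$ is the unique $O(1)$-sparse point on that line. So $C_k$ is a deterministic function of the remaining columns, and no Littlewood--Offord or Esseen bound for $\langle C_k,v\rangle$ given $v$ is possible.

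The same product form also breaks your transpose argument for incompressibility of the normal. The factor $R^\top=c\,W^\top S^\top$ has the dense matrix on the wrong side. For orthogonal $W$, $\|(A^\top+\varepsilon R^\top)x\|=\|(WA^\top+\varepsilon c\,S^\top)x\|$ is a purely sparse perturbation acting on sparse $x$; for example, $S^\top e_j$ carries only $O(1)$ random entries. This cannot give the $e^{-cn}$ tail that the net argument needs.

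The paper avoids both problems by using an \emph{additive} perturbation $R=(R_1+R_2)/2$. The dense part $R_1=\frac{1}{\rho\sqrt n}D_1VD_2$ carries independent signs on \emph{both} sides. The signs $D_1$ handle compressible vectors for $M$. The signs $D_2$, together with the pattern property of $V^\top$, make the random normal incompressible (Lemma~\ref{lem:randnormincomp}). The sparse part $R_2$ has column-local randomness, namely $J_n$ and the signs $X_{l,n}$, on which the distance argument runs.

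\paragraph{Other missing ingredients.} First, Hadamard- or DFT-type matrices are not pattern matrices. Proposition~\ref{prop:hadamardctrex} gives a $\sqrt n$-sparse vector whose Walsh--Hadamard transform is also $\sqrt n$-sparse, so spreading fails for $cn$-sparse vectors. You need something like a random $\pm1$ matrix or the limited-independence construction of Proposition~\ref{prop:pattern}.

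Second, the norm bound does not follow from what you specified. If you only bound the number of entries per row and place them independently, the maximal column degree is of order $\log n/\log\log n$. Then $\|S\|$, and hence $\|SW\|$, grows with $n$; this is the heavy-row obstacle transposed. Rescaling to fix it would destroy the linear-in-$n$ condition number. Forcing bounded degrees in both directions, by trimming or by regular patterns, creates exactly the dependence between columns that the paper handles by conditioning on $J_n\cap I_{\mathrm{heavy},n}$ (Lemma~\ref{lem:spreadlight}). Your plan has no counterpart to this step.

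Third, even in a design where each column carries its own randomness, $\Theta(1)$ random positions per column hit spread coordinates of the normal only with constant probability, not $1-O(\delta)$. The number of positions must grow with $1/\delta$; the paper takes $K\asymp\delta^{-2}\alpha^{-3}$ with Rademacher signs, which is where the $\poly(1/\delta)$ factor in the bit count comes from.

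Finally, with $\|W\|\le 1$ your prefactor $1/\sqrt n$ makes $\|R\|=O(1/\sqrt n)$. The perturbation is therefore too weak by a factor of $\sqrt n$.
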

In order to overcome the coupon collector barrier for uniformly sparse perturbations, we develop a new construction, which adds together two crucial components:
\begin{enumerate}
\item A dense deterministic matrix whose rows and columns are multiplied by $O(n)$ random signs; 
\item A sparse random matrix with $O(n)$ random sign entries placed in a non-uniform fashion.
\end{enumerate}
Our analysis of this construction is centered on lower bounding the smallest singular value of the perturbed matrix, $s_n(M)$ for $M=A+\varepsilon R$, which is the infimum of $\|Mx\|_2$ over all $x$ in the unit sphere. Our starting point is a standard approach of decomposing the sphere into compressible vectors (those that are sufficiently close to a sparse vector) and incompressible vectors (all others). 

In order to handle the compressible vectors, we introduce the concept of a \emph{pattern matrix}, which is a dense deterministic matrix that maps all sparse vectors into dense vectors. While the existence of a pattern matrix is relatively straightforward (e.g., a dense random sign matrix), obtaining one using $O(n)$ random bits requires a more elaborate construction. Equipped with such a matrix, we then randomize it by multiplying its rows and columns by independent random signs, and show that this perturbation is sufficient to handle all compressible vectors  (see Section \ref{s:overview-R1} for more details).

In order to handle the incompressible vectors, we return to the sparse perturbations, and show that $O(n)$ non-zero entries are sufficient in this case. However, a na\"ive implementation that places the entries uniformly at random hits an obstacle: it leads to a few \emph{heavy rows}, i.e., those with disproportionately more non-zeros than the rest, which blows up the norm of the perturbation. To address this, we can simply zero out those heavy rows, but this introduces a non-trivial dependency between the entries, which is not handled by standard techniques. Here, our key contribution is developing a novel argument for analyzing sparse matrices with such dependencies (see Section~\ref{s:overview-R2} for more details).

\subsection{Application: Solving linear systems in linear space}

As a motivating application for our main result, we consider solving a large system of linear equations, $Ax=b$, where the access to $A\in\R^{n\times n}$ is restricted to matrix-vector products with the matrix and its transpose, i.e., queries of the form $v\rightarrow Av$ and $v\rightarrow A^\top v$, and the space is limited to storing $O(1)$ such vectors, i.e., linear in the dimension $n$. This is the predominant model of computation for solving very large linear systems in practice \cite{greenbaumbook,saad2003iterative}, and it has also garnered significant theoretical attention \cite{peng2024solving,liu2026numerical,derezinski2026matrix}. The most popular class of algorithms for this task are Krylov subspace methods, such as Conjugate Gradient \cite{hestenes1952methods}, LSQR \cite{paige1982lsqr}, MINRES \cite{minresoriginal}, LSMR \cite{fong2011lsmr}, and others \cite{saad2003iterative}. These are iterative methods which use the matrix-vector products to gradually build a subspace of $\R^n$ with growing dimension, and return the best approximate solution that lies in that subspace. Once the subspace reaches dimension $n$, it is guaranteed to contain the exact solution (if one exists), and as a result, a Krylov method should return that solution after $O(n)$ matrix-vector queries, which is known to be optimal in the worst case \cite{derezinski2026matrix}. Unfortunately, this is only true in exact arithmetic. In fact, all of the Krylov methods mentioned above exhibit numerical instability in finite precision such that their actual matrix-vector complexity scales with the condition number of $A$. Recently, there have been several efforts to address this, but each of them has resulted in a trade-off either in the space complexity \cite{peng2024solving} or in the matrix-vector complexity \cite{liu2026numerical}.

In the following result, we demonstrate that this trade-off can be avoided if we evaluate the algorithm in terms of backward error. Concretely, we show that using $O(n)$ matrix-vector queries, we can solve a linear system  in linear space to within an arbitrarily small constant backward error.
\begin{theorem}\label{t:linear-intro}
    Given $\varepsilon\in(0,1)$, matrix-vector product access to $A,A^\top\in\R^{n\times n}$, and vector~$b\in\R^n$, using $O(n)$ numbers with $O(\log(n/\varepsilon))$ bits of precision and $O(n\log n)$ random bits, after $n\cdot\poly(1/\varepsilon)$ matrix-vector queries we can compute $x$ such that, with probability $0.9$,
    \begin{align*}
        \tilde Ax=b\quad\text{for some $\tilde A$ such that }\|\tilde A-A\|\leq \varepsilon \|A\|.
    \end{align*}
\end{theorem}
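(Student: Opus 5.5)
The plan is to reduce Theorem~\ref{t:linear-intro} to Theorem~\ref{t:main} by perturbing and then running a stable Krylov method on the perturbed system. First, estimate $\|A\|$ up to a constant factor using $O(\log n)$ steps of the power method with $A^\top A$ from a random start. This only gives a lower estimate with constant probability, so I would use Lanczos-type estimation, or accept a constant-factor underestimate at constant probability. Rescale so that $\|A\|\le 1$ and $\|A\|\ge c$.

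Next, generate the perturbation $R$ of Theorem~\ref{t:main} with parameters $\varepsilon'=c\varepsilon/2$ and $\delta=0.05$. This costs $O(n\log n)$ random bits and $O(n)$ stored numbers, since $R$ is a sign-randomized pattern matrix plus an $O(n)$-sparse matrix. Its matrix-vector products can be applied in $O(n)$ space, assuming the pattern matrix admits a fast implicit product; I take this from the paper's construction. Set $M=A+\varepsilon' R$. With probability $0.95$ we have $\|R\|\le 1$ and $\kappa(M)\le n\cdot\poly(1/\varepsilon)$. Then $\tilde A:=M$, rescaled back, satisfies $\|\tilde A-A\|\le \varepsilon'\le\varepsilon\|A\|$. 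It remains to approximately solve $Mx=b$ and convert the residual into a backward error.

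Solve $Mx=b$ via conjugate gradient on the normal equations $M^\top M x=M^\top b$ (CGNE/LSQR). Each iteration uses one product with $A$, one with $A^\top$, and products with $R,R^\top$; the method stores $O(1)$ vectors. The iteration count for residual $\eta$ in exact arithmetic is $O(\kappa(M)\log(1/\eta))$. This is too many, since $\kappa\sim n\,\poly(1/\varepsilon)$ would give $n\,\poly(1/\varepsilon)\log(1/\eta)$. I would therefore aim for a residual $\eta=\poly(\varepsilon)$ relative to $\|b\|$ and absorb the residual into the backward error. Suppose $Mx=b-r$ with $\|r\|\le\eta\|M\|\|x\|$. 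Then $(M+rx^\top/\|x\|^2)x=b$, and the extra backward error is at most $\eta\|M\|$. So relative residual $\eta\lesssim\varepsilon$ in the normwise sense suffices, and CG needs $O(\kappa(M)\log(1/\varepsilon))=n\,\poly(1/\varepsilon)$ iterations. This bound on $\|r\|$ follows from the standard estimate $\|r\|\le \kappa\cdot(\text{error})$ against $\|b\|\le\|M\|\|x_*\|$, together with $\|x\|\approx\|x_*\|$, which holds once the error is small.

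The main obstacle is finite precision. CG with $O(\log(n/\varepsilon))$ bits must still converge in $O(\kappa\log(1/\varepsilon))$ iterations. I would invoke Greenbaum's analysis \cite{greenbaum1989behavior}: finite-precision CG behaves like exact CG on a matrix whose eigenvalues lie in tiny intervals around those of $M^\top M$. This gives the Chebyshev bound with $\kappa$ inflated by a constant, provided the unit roundoff is $\le \poly(1/(n\kappa))$, i.e.\ $O(\log(n/\varepsilon))$ bits. Rounding $A$'s products to this precision adds a further backward error of $\poly(\varepsilon/n)$, which is negligible. A union bound over the norm-estimation, perturbation, and solver failure events gives success probability $0.9$.
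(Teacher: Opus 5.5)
\textbf{Verdict: same architecture as the paper, but the finite-precision step, where the theorem's content lies, is not justified as written.}

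Your outline follows the paper's proof. You estimate $\|A\|$, form $A+\varepsilon ZR$ with $\kappa=O(n/\varepsilon^3)$, run CG on the normal equations for $O(\kappa\log(1/\varepsilon))$ iterations using Greenbaum's analysis with $O(\log(n\kappa/\varepsilon))$ bits, and turn the residual into a backward error via Lemma~\ref{l:backward-formula}.

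\textbf{The matvec errors in CG.} Your claim that rounding in the products with $A$ ``adds a further backward error of $\poly(\varepsilon/n)$'' does not hold. In the matvec model each query returns $Aw+e_w$ with $\|e_w\|_2\le\epsilon\|A\|\|w\|_2$, and $e_w$ depends on $w$. So there is no fixed matrix $A+E$ on which CG runs exactly, and these errors cannot be moved into the final $\tilde A$. Moreover, CG on $M^\top M$ sees that operator only through composed inexact products. Their errors are normwise, not of the per-operation floating-point type covered by Greenbaum's theorem, which is stated for CG executed in $\epsilon$-arithmetic.

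The paper closes this with an idea your proposal lacks: a random rank-one shift $\tilde A=A+\gamma\alpha\|A\|\mathbf{1}_n\mathbf{1}_n^\top$, with $\gamma$ uniform on a grid (Lemma~\ref{l:away-from-zero}).
\begin{itemize}
\item The shift forces every entry of $\tilde A^\top\tilde A$ to be at least $\alpha^2\|\tilde A\|^2/L$.
\item Hence any normwise-accurate matvec with $\tilde A^\top\tilde A$ can be realized as $\fl(\tilde A^\top\tilde A w)$ in $L\epsilon$-arithmetic (Lemmas~\ref{l:matvec-to-arithmetic}, \ref{l:perturbed-matvec}, \ref{l:composed-matvec}).
\item Greenbaum's theorem then applies verbatim, and the shift, of norm $2^{-\Omega(B)}\|A\|$, is removed at the end (Lemma~\ref{l:perturbed-cg}).
\end{itemize}
The alternative is to verify directly that the Paige--Greenbaum analysis uses only a normwise bound on each matvec error.

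\textbf{The norm estimate.} This step needs the same care. It must also succeed with probability close to $1$, since underestimating $\|A\|$ by a large factor destroys the conditioning guarantee. Your sketch does not analyze power iteration with inexact matvecs. The paper (Lemma~\ref{l:power}) does the following:
\begin{itemize}
\item It uses a Hutchinson estimate of $\|A\|_F$ to produce $O(\log n)$ candidate scales $\tilde Z$.
\item For each candidate it runs $z_k=M^kz_0$ with $M=A^\top A/\tilde Z^2+\frac18 I$ from a discretized Gaussian start, and takes the maximum of the resulting estimates.
\item The shift gives $\|M^kz_0\|_2\ge 8^{-k}\|z_0\|_2=\|z_0\|_2/\poly(n)$. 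This keeps the accumulated matvec error a $2^{-\Omega(B)}$ relative error, at a total cost of $O(\log^2 n)$ queries.
\end{itemize}

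\textbf{The residual-to-backward-error step.} Your route to $\|r\|_2\le\eta\|M\|\|x\|_2$ through $\|x\|\approx\|x_*\|$ would cost an extra factor of order $\log\kappa$, i.e.\ up to $\log n$. CG on $M^\top Mx=M^\top b$ controls the residual, and converting a residual bound into a relative error bound loses a factor $\kappa$. Instead use $\|b\|_2\le\|r\|_2+\|M\|\|x\|_2$. This gives $\|r\|_2\le\frac{\eta}{1-\eta}\|M\|\|x\|_2$ directly, which is Lemma~\ref{l:forward-to-backward}.
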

The guarantee is attained by Conjugate Gradient (CG) implemented in finite precision on the normal equations of a perturbed instance of the problem. Crucially, we leverage both of the main features of our perturbation construction from Theorem \ref{t:main}. In particular, for any fixed $\varepsilon>0$:
\begin{enumerate}
    \item The perturbation can be stored in linear space, i.e., $O(n)$ numbers with $O(\log n)$ precision, matching the memory requirement of CG;
    \item The condition number of the perturbed instance scales linearly with $n$, so that CG requires only $O(n)$ matrix-vector queries, matching its complexity in exact arithmetic.
\end{enumerate}
The details of our finite-precision model, which builds off of existing numerical stability analysis of CG \cite{greenbaum1989behavior,musco2018stability}, are provided in Section \ref{s:linear}. In particular, our technical result (Theorem \ref{t:linear-main}) allows for inexact matrix-vector products with $A$, as though they were also performed in $O(\log n)$ precision, and our framework can be easily adapted to other iterative methods for solving linear systems.

\subsection{Further discussion of related work}
One of the settings where the analysis of the smallest singular value of zero mean sparse random matrices with i.i.d. entries perturbed by deterministic shifts has been previously considered is in the context of proving the circular law for random matrices  \cite{tao2008random, rudelsontikhomirov, basak2019circular}.  Formally, these look at the smallest singular value of a (complex valued) matrix $A+R$, for a deterministic matrix  $A$ and a random matrix $R$ with independent entries distributed as $x \cdot \delta$ where $x$ is some complex valued random variable and $\delta \sim \text{Bernoulli}(\rho)$. To prove the circular law, it suffices to consider a diagonal shift $A = zI$ for $z \in \mathbb{C}$. However, as discussed earlier, Tao and Vu \cite{tao2008random} prove a more general result with arbitrary shift which applies to our setting and requires $n^{1+\theta}$ random bits. Notably, \cite{rudelsontikhomirov, basak2019circular}  attain better sparsity, but their approaches only work for a diagonal shift, and thus do not apply to our setting.

Shah, Srivastava and Zeng \cite{shah2024sparse} study sparse perturbations in the context of fast matrix diagonalization algorithms. In this setting, they require bounding the so-called \emph{eigenvector condition number} and the \emph{minimum eigenvalue gap} of a perturbed matrix, in order to attain the so-called pseudospectral shattering \cite{banks2023pseudospectral}. A smallest singular value bound for a perturbed matrix, such as the one in our main result, appears as a central step in their argument. The problem of derandomizing pseudospectral shattering perturbations was recently posed by B\"urgisser and Srivastava \cite{amsel2026linear}. Our results can be viewed as a step in that direction.

Another approach towards using less randomness and less space in smoothed analysis is taken by Shah and Silwal \cite{shah2021smoothed}, who consider \emph{low-rank perturbations}. However, since a low-rank perturbation is inherently singular, it cannot be fully oblivious to the input matrix and so their results do not apply in our setting.

Stability of Krylov methods in finite precision has been a topic of significant interest. For the CG algorithm, Greenbaum \cite{greenbaum1989behavior} gives a  complexity guarantee that scales with  the condition number of the matrix, while Musco, Musco and Sidford \cite{musco2018stability} show that a polynomial dependence on the condition number is necessary for CG in $o(n/\log\kappa(A))$ precision. To get around this issue, Peng and Vempala \cite{peng2024solving} use a block Krylov method, which yields an improved complexity but increases the space requirement to $O(n^2)$, while Liu, Nguyen, and Yang \cite{liu2026numerical} employ an  approach based on finite field methods to give an algorithm that uses linear space but requires $\tilde O(n^2)$ matrix-vector~products.
While the above works focus on the forward error, Derezi\'nski, Nakatsukasa, and Rebrova \cite{derezinski2026towards} study the matrix-vector complexity of solving linear systems with respect to the backward error. However, their results apply only to positive semidefinite systems in exact arithmetic.


\section{Overview of the Techniques}\label{sec:outline}
In this section, we provide a high-level overview of the construction and analysis for the well-conditioned oblivious perturbation that gives rise to Theorem \ref{t:main}.

Recall that our goal is to control the condition number of the perturbed matrix $M = A + \varepsilon R$ for a random perturbation that we will define gradually throughout the section. By assumption, the starting matrix $A$ has $\norm{A} \le 1$ and we will ensure by construction that $\norm{R} \leq 1$ with high probability, which means that $\norm{M} \leq 2$. Thus to control the condition number, it suffices to look at the smallest singular value. The following is our main technical result.

\begin{theorem} \label{coro:pertur}
     Let $0<\e<1$ and $0<\delta<1$. Then there exist $n \times n$ random matrices $R_n$ and constants $c_{\ref{coro:pertur}.1}, c_{\ref{coro:pertur}.2}, c_{\ref{coro:pertur}.3}$ such that $\|R_n\|\leq 1$ with probability $1-\delta$ and for any $n \times n$ deterministic matrix $A$ with  $\norm{A}= 1$ and $n\ge c_{\ref{coro:pertur}.1}\delta^{-5}$, we have
    \[ \Pb \paren*{ s_n(A+ \varepsilon R_n)  \ge c_{\ref{coro:pertur}.2} \e^3 \delta^2n^{-1} } \ge 1-\delta \]
    and $R_n$ can be generated from $c_{\ref{coro:pertur}.3}\delta^{-2}n \log n$ random bits. 
\end{theorem}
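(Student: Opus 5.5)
We construct $R_n = \tfrac12(R_1 + R_2)$ and run the usual compressible/incompressible decomposition of the unit sphere. Here $R_1 = c\,n^{-1/2} D_1 P D_2$, where $P$ is a deterministic $n\times n$ \emph{pattern matrix} with $\pm1$ entries and $\|P\|\lesssim\sqrt n$, and $D_1,D_2$ are independent diagonal matrices of random signs ($2n$ random bits). The matrix $R_2 = c\,(\delta^{-1}d)^{-1/2} S$ is sparse. To build $S$, we place roughly $dn$ random signs with $d\approx\delta^{-2}$, each at a uniformly chosen position, so $O(\delta^{-2}n\log n)$ random bits. We then zero out every heavy row and column whose count exceeds $C\delta^{-1}d$. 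The norm bound $\|R_n\|\le 1$ with probability $1-\delta/3$ follows from $\|P\|\lesssim\sqrt n$ and from a standard bound on sparse sign matrices with bounded row and column degree, via Schur's test or a Bernstein bound for the light part. The upper bound $\|A+\varepsilon R_n\|\le 2$ is then immediate. We define $\mathrm{Comp}(s,\rho)$ as the unit vectors within distance $\rho$ of $s$-sparse vectors, with $s=c_0\delta n$, and $\mathrm{Incomp}$ as its complement.

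\textbf{Compressible vectors.} We need $P$ to be a deterministic matrix such that $\|PDx\|\ge c\sqrt{n}\,\|x\|$ for all $s$-sparse $x$ and all sign patterns $D$, with high probability over $D_1$. More precisely, for fixed $x$ and $A$, the vector $A x + \varepsilon n^{-1/2}D_1 P D_2 x$ should be far from $0$. The plan is as follows. For fixed sparse $x$, the vector $y=PD_2x$ is dense (the pattern property), with $\Omega(n)$ coordinates of size $\gtrsim\|x\|$. The random signs $D_1$ then make $\|a + \varepsilon n^{-1/2} D_1 y\|$ small with probability only $e^{-cn}$ for any fixed shift $a=Ax$, by an anti-concentration tensorization over coordinates. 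A net of size $(C/\rho)^{s}\binom ns \le e^{c n/2}$ on $\mathrm{Comp}$ then finishes this step. Existence of such a $P$ follows from a random $\pm1$ matrix; to make it generable we would use an explicit or seeded construction. Since $P$ is deterministic, its description costs no randomness, so any fixed good matrix works. This gives $\inf_{\mathrm{Comp}}\|Mx\|\ge c\varepsilon$ with probability $1-\delta/3$.

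\textbf{Incompressible vectors.} We use the invisibility/distance lemma: $\inf_{\mathrm{Incomp}}\|Mx\|\ge \rho n^{-1/2}\min_k \mathrm{dist}(M_k,H_k)$ on average over $k$, where $M_k$ is the $k$th column and $H_k$ is the span of the others. This reduces the problem to showing $\Pb(\mathrm{dist}(M_k,H_k)\le t\,\varepsilon\delta/\sqrt{n}\cdot\ldots)\lesssim t+\delta$. Write the distance as $|\langle M_k, X^*\rangle|$ with $X^*$ a unit normal to $H_k$. Applying the compressible step to $M^\top$ (which has the same structure) shows $X^*$ is incompressible with high probability. Then $\langle R_2 e_k, X^*\rangle$ is a sum of about $d$ random signs times coordinates of $X^*$, located at random rows. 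Since $X^*$ has $\Omega(\delta n)$ coordinates of order $n^{-1/2}$, Lévy concentration gives small-ball probability $\lesssim \delta + t$ at scale $t\,n^{-1/2}$. Combining the two steps yields $s_n\gtrsim \varepsilon^3\delta^2 n^{-1}$ after tracking the parameters. The losses are $\varepsilon$ from the scaling, $\varepsilon$ from the compressible threshold, $\delta$ from the incompressible density, $\delta$ from Markov averaging over $k$, and $n^{-1/2}\cdot n^{-1/2}$.

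\textbf{Main obstacle.} The hard step is the incompressible step under the heavy-row deletion. Column $k$ of $R_2$ is not independent of $H_k$, because whether a row is zeroed depends on entries in other columns, and hence so does $X^*$. I would handle this by conditioning on the set of heavy rows and on the row counts excluding column $k$. Given this data, the positions in column $k$ are still nearly uniform among the light rows, and their signs are independent of $H_k$. One caveat is that adding column $k$'s entries may push a light row over the threshold. Because the threshold is $\delta^{-1}$ times the mean, this happens for only a $\delta$-fraction of rows. We therefore lose only $O(\delta)$ in probability and $O(\delta)$ of the mass of $X^*$, which incompressibility absorbs. The requirement $n\ge c\delta^{-5}$ enters here, to make the heavy-row count and the union bounds over $k$ negligible.
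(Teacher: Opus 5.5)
Your architecture is the paper's: a pattern matrix with sign diagonals handles compressible vectors and the incompressibility of the random normal, and a row-trimmed sparse sign matrix with invertibility via distance handles incompressible vectors. The problem is that the incompressible step does not close quantitatively with your parameters.

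Lemma~\ref{invertdist} gives $\Pb(\inf_{\mathrm{Incomp}(\alpha,\nu)}\|Mx\|_2<t\nu n^{-1/2})\le\frac{1}{\alpha n}\sum_k\Pb(\operatorname{dist}(Y_k,H_k)<t)$. So each column must fail with probability $\lesssim\alpha\delta$, not $\lesssim t+\delta$. Moreover, $\langle R_2e_k,X^*\rangle$ is a Rademacher sum plus a fixed shift. With $m$ comparable coefficients, the best Lévy bound you can extract is of order $m^{-1/2}$ at every scale below the coefficient size; equal coefficients give an atom of that mass. Shrinking $t$ therefore cannot absorb the $1/\alpha$ factor.

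With $s=c_0\delta n$ (so $\alpha\sim\delta$) and $d\approx\delta^{-2}$, only $m\approx\alpha d\approx\delta^{-1}$ signs of column $k$ land on spread coordinates of $X^*$. That gives a floor of about $\sqrt\delta$, against a requirement of order $\delta^2$. In general you need $d\gtrsim\alpha^{-3}\delta^{-2}$. For $\alpha\sim\delta$ this means $\delta^{-5}n\log n$ random bits and a correspondingly worse singular value bound.

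The fix is what the paper does. Keep $\alpha$ an absolute constant; the pattern matrix already covers all $\alpha n$-sparse vectors for constant $\alpha$. Then take $K\approx\alpha^{-3}\delta^{-2}$ signs per column. In the final bound, $\delta^2$ comes from the entry size $1/L$ with $L\sim K$, and $\varepsilon^3$ from $\varepsilon\cdot\nu\cdot\nu$. It does not come from ``incompressible density'' and ``Markov averaging''.

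The trimming and the norm bound have a related problem. With threshold $D=C\delta^{-1}d$ and scaling $D^{-1/2}$, Schur's test only gives $\|S\|\le D$, hence $\|R_2\|\lesssim\sqrt D\gg1$. Matrix Khintchine/Bernstein, conditional on the positions, loses a $\sqrt{\log n}$. An $O(\sqrt D)$ bound for the trimmed, dependent matrix is a Feige--Ofek-type statement you would have to prove. Falling back to Schur with scaling $1/D\sim\delta^{3}$ costs an extra factor of $\delta$.

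Your Markov heuristic for the deletion caveat also fails. It only says a $\delta$-fraction of rows are at risk, and then column $k$'s $\approx\delta^{-2}$ entries hit about $\delta^{-1}$ of them. Each hit deletes a row and changes $H_k$, hence $X^*$. So this is neither an $O(\delta)$-probability event nor a matter of losing mass of $X^*$; with $\alpha\sim\delta$, a $\delta$-fraction of deleted rows could even contain the whole spread set.

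The paper instead trims at $L=\lceil 2eK\rceil$. Chernoff then makes heavy rows an $e^{-\Omega(K)}$-fraction (Lemma~\ref{thm:Iheavy}); bounded differences concentrates this count, which is where $n\ge c\delta^{-5}$ enters. Schur's test gives $\|R_2\|\le1$ deterministically. The paper then conditions on $\Xi_1$ together with $J_n\cap I_{\mathrm{heavy},n}$, which determines $X^*$ exactly. Given this, $J_n\cap I_{\mathrm{light},n}$ is exactly uniform on $I_{\mathrm{light},n}$ with size $K-|J_n\cap I_{\mathrm{heavy},n}|$ (Lemma~\ref{lem:cond-unif-mu}). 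Hoeffding without replacement then yields at least $\alpha K/32$ good coefficients with probability $1-\alpha\delta/4$ (Lemma~\ref{lem:spreadlight}, Corollary~\ref{lem:lightspread}).
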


The standard approach \cite{litvak2005smallest,rudelson2008littlewood} to lower bounding $s_n(M) = \inf_{x\in\bS^{n-1}}\|Mx\|_2$ involves separately considering the infimum over compressible vectors and incompressible vectors for some $\alpha$ and $\nu$,
\[ \inf_{x \in \comp(\alpha, \nu)} \norm{Mx}_2 \quad \text{ and } \quad \inf_{x \in \incomp(\alpha, \nu)} \norm{Mx}_2, \]
where
\begin{align*}
    \comp(\alpha, \nu) &= \{ x \in \bS^{n-1}\  \mid\ \norm{x - v}_2 \le \nu \text{ for some } v \in \bS^{n-1} \text{ with } \supp(v) \le \alpha n \}, \\
    \incomp(\alpha, \nu) &= \bS^{n-1} \backslash \comp(\alpha, \nu).
\end{align*}
In each case, we use a different property of $R$ to obtain a high probability lower bound for the infimum, and to get these properties, we define $R:= R_1 + R_2$ as a sum of two independent matrices, a dense matrix $R_1$ and a sparse matrix $R_2$. For the infimum over compressible vectors $x$, we condition on the matrix $R_2$ and treat $R_1$ as the perturbation, obtaining a high probability lower bound for $\norm{(A_1+\varepsilon R_1)x}_2$ where $A_1=A+\varepsilon R_2$. For incompressible vectors, we condition on $R_1$ and treat $R_2$ as the perturbation.

\subsection{Construction of $R_1$ and invertibility control for compressible vectors}
\label{s:overview-R1}

The standard analysis of $\inf_{x \in \comp(\alpha, \nu)} \norm{Mx}_2$ involves obtaining a lower bound on $\norm{Mx}_2$ for a single appropriately sparse vector and then using a net argument to extend the lower bound to the whole set $\comp(\alpha, \nu)$. When the coordinates of the perturbation are independent, $Mx$ has independent coordinates and tensorization of a small ball probability estimate for each coordinate provides the lower bound. However, using this approach for a \textit{sparse} perturbation may not give us a sufficiently strong tail probability estimate for $\norm{Mx}_2$ to be able to take the union bound. Indeed, this is what prevents us from relying only on the sparse component $R_2$ in our construction.

\subsubsection{Dense perturbation via pattern matrices}

To address this challenge, we introduce a novel construction for $R_1$. We define
    \begin{equation*}
        R_1=\frac{1}{\rho \sqrt{n}} \cdot D_1 V D_2 
    \end{equation*}
for $D_1 = \diag(\eta_{1,1} \etc \eta_{1,n})$ and $D_2 = \diag(\eta_{2,1} \etc \eta_{2,n})$ where $\eta_{1,1} \etc \eta_{1,n}$ and $\eta_{2,1} \etc \eta_{2,n}$ are i.i.d. Rademacher random variables, and $V$ is any deterministic \textit{pattern} matrix. 
Here, a pattern matrix is any $n \times n$ matrix $V$ with $\pm 1$ entries which has $\norm{V} = O(\sqrt{n})$ and, for any sparse unit vector $x$ within a specified sparsity threshold, both $Vx$ and $V^Tx$ have $\Omega(n)$ many coordinates bigger than some absolute constant (See Definition \ref{def:pattern} for the full details). 

Thus, the pattern matrix transforms a sparse vector to one with many large coordinates. After conditioning on $R_2$ and $D_2$, if $i \in [n]$ is such that $(VD_2x)_i > \beta$ for some absolute constant $\beta>0$, then 
$$(Mx)_i = ((A+\varepsilon R_2)x)_i + (\varepsilon/\rho\sqrt{n}) \eta_{1,i} (VD_2x)_i$$
is greater than $(\varepsilon\beta/\rho\sqrt{n})$ with probability more than half. Tensorizing this lower bound (Lemma~\ref{lem:tensorization}) across all such $i$, of which there are $\Omega(n)$ many by the definition of a pattern matrix, we get a high probability lower bound for $\norm{Mx}_2$ with the tail probability decaying exponentially in $n$. This is sufficient to take the union bound over an appropriate net of sparse vectors and complete the proof of the lower bound over all compressible vectors ($D_2$ does not play a role in this part of the argument, but for the lower bound over incompressible vectors).

\subsubsection{Constructing pattern matrices}
There exist well known linear transforms that map sparse vectors to ones with many nonzero coordinates, such as the Discrete Fourier Transform or the Walsh-Hadamard matrix. They satisfy the so-called \emph{uncertainty principle} according to which the product of the support of a vector and the support of its transform is at least $n$ \cite{donohostarkuncertainity}. Unfortunately for us, this uncertainty principle is tight, in the sense that one can find a $\sqrt{n}$-sparse vector whose transform is also $\sqrt{n}$-sparse (See Proposition \ref{prop:hadamardctrex}). This is an issue since we require $\Omega(n)$ large (and not just nonzero) coordinates for all $cn$-sparse vectors, for some small $c$.

A stronger additive version of the uncertainty principle, which has the sum (instead of product) of supports equal to $n$ would work for our needs. This has been shown for the DFT over finite fields by Tao \cite{tao2005uncertainty}. However, the proof of this result relies heavily on finite field arithmetic and does not readily extend to real numbers. 

Therefore, we consider a randomized construction for $V$. While one can show that a matrix with independent Rademacher entries is a pattern matrix with high probability, we introduce a construction which requires much fewer random bits, and therefore can be stored efficiently.

This construction is inspired by one due to Lovett and Sodin \cite{LovettSodin2008}, who in a different context construct a randomized embedding from $\ell_2$ into $\ell_1$ using the idea of considering an entry-wise (Hadamard) product of two random matrix models with $\pm 1$ entries, with each model having its own desirable properties, to get a random matrix model which enjoys both sets of properties. We consider a model where $V = V_1 \odot V_2 \odot V_3^T$ for the  Hadamard product $\odot$ of three random matrices $V_1, V_2$ and $V_3$ with Rademacher entries. Here, $V_1$ has $2\log(n)$-wise independent entries, $V_2$ has independent rows with 4-wise independent coordinates and $V_3$ has independent columns with 4-wise independent coordinates (See Figure \ref{fig:pattern}). In Proposition \ref{prop:pattern}, we prove that a random matrix generated with this model is a pattern matrix with high probability. We use each of the three components in the definition of $V$ to prove the three properties we want of the pattern matrix by conditioning on the other two components:
\begin{enumerate}
    \item After conditioning on $V_2$ and $V_3$, we use the moment method to get a $O(\sqrt{n})$ bound for $\norm{V_1}$, and consequently, for $\norm{V}$. Without $V_1$, using just the independence of the rows or columns, we would yield a poorer norm bound that we would have to scale $R_1$ by, which would result in a suboptimal smallest singular value estimate.
    \item After conditioning on $V_1$ and $V_3$, for any fixed unit vector $x$, $Vx$ has independent coordinates. Each coordinate $(Vx)_i$ is a linear combination of the coordinates of $x$ with 4-wise independent Rademacher coefficients. This is enough to show that $\abs{(Vx)_i}$ is greater than some constant with constant probability. The independence of rows allows us to use concentration arguments to claim that $\abs{(Vx)_i}$ is greater than a constant for sufficiently many $i$, with a probability that decays exponentially in $n$, which is crucial to be able to take a union bound across all sparse vectors.
    \item We use $V_3$ similarly as $V_2$, in order to show that $V^Tx$ has $\Omega(n)$ large coordinates.
\end{enumerate}

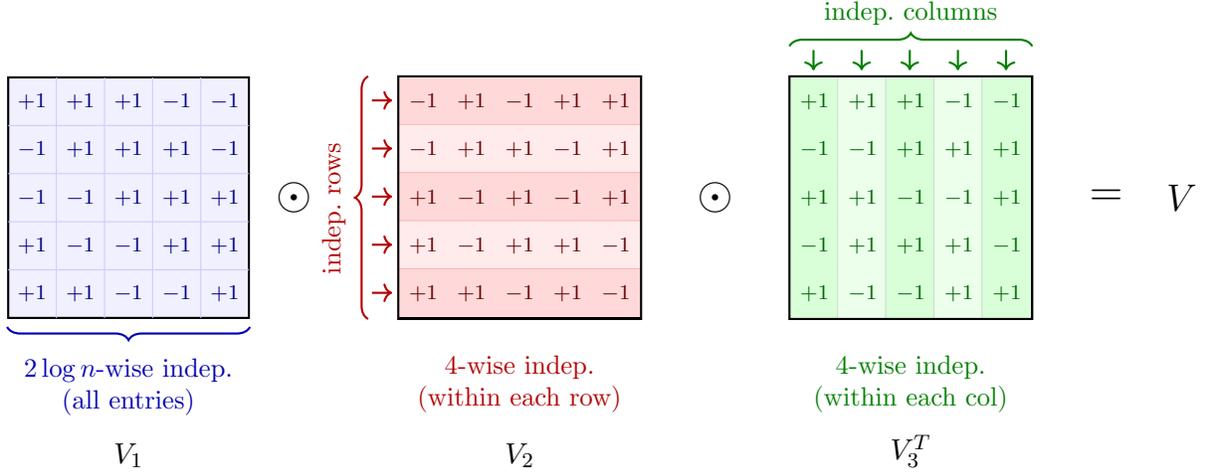
\begin{figure}
    \centering
    \begin{tikzpicture}[
    matbox/.style={draw, thick, minimum width=3.2cm, minimum height=3.2cm, inner sep=0pt},
    label font/.style={font=\large},
]

\node[matbox, fill=blue!6] (V1) at (0,0) {};
\node[above=-60pt] at (V1.south) {$V_1$};

\def\Nrows{5}    
\def\Ncols{5}    
\def\xspacing{0.64}  
\def\yspacing{0.64}  

\pgfmathsetmacro{\cnt}{0}
\foreach \r in {1,...,\Nrows}{
  \foreach \c in {1,...,\Ncols}{
    \pgfmathsetmacro{\cnt}{\cnt+1}
    
    \pgfmathtruncatemacro{\h}{mod(\r*3 + \c*7, 5)}
    \pgfmathtruncatemacro{\sign}{(-1)^\h}
    \pgfmathsetmacro{\xpos}{(\c - (\Ncols+1)/2) * \xspacing}
    \pgfmathsetmacro{\ypos}{((\Nrows+1)/2 - \r) * \yspacing}
    \ifnum\sign>0
      \node[font=\scriptsize, text=blue!50!black] at ($(V1.center)+(\xpos,\ypos)$) {$+1$};
    \else
      \node[font=\scriptsize, text=blue!50!black] at ($(V1.center)+(\xpos,\ypos)$) {$-1$};
    \fi
  }
}

\draw[decorate, decoration={brace, amplitude=5pt, mirror}, thick, blue!70!black]
  ($(V1.south west)+(0,-3pt)$) -- ($(V1.south east)+(0,-3pt)$)
  node[midway, below=8pt, font=\small, text=blue!70!black, align=center] 
  {$2\log n$-wise indep.\\(all entries)};

\foreach \r in {1,...,\the\numexpr\Nrows-1}{
  \pgfmathsetmacro{\yoff}{((\Nrows+1)/2 - \r - 0.5) * \yspacing}
  \draw[blue!20, thin] ($(V1.center)+(-1.6,\yoff)$) -- ($(V1.center)+(1.6,\yoff)$);
}
\foreach \c in {1,...,\the\numexpr\Ncols-1}{
  \pgfmathsetmacro{\xoff}{(\c - (\Ncols+1)/2 + 0.5) * \xspacing}
  \draw[blue!20, thin] ($(V1.center)+(\xoff,-1.6)$) -- ($(V1.center)+(\xoff,1.6)$);
}

\node[font=\LARGE] at (2.2,0) {$\odot$};

\node[matbox, fill=red!6] (V2) at (5.2,0) {};
\node[above=-60pt] at (V2.south) {$V_2$};

\foreach \i/\c in {0/red!15, 1/red!8, 2/red!15, 3/red!8, 4/red!15}{
  \fill[\c] ($(V2.north west)+(0,-\i*0.64)$) rectangle ($(V2.north east)+(0,-\i*0.64-0.64)$);
}

\foreach \i in {1,...,4}{
  \draw[red!30, thin] ($(V2.north west)+(0,-\i*0.64)$) -- ($(V2.north east)+(0,-\i*0.64)$);
}
\draw[thick] (V2.south west) rectangle (V2.north east);

\foreach \i in {0,...,4}{
  \draw[->, thick, red!70!black] ($(V2.north west)+(-0.35,-\i*0.64-0.32)$) 
    -- ($(V2.north west)+(-0.08,-\i*0.64-0.32)$);
}

\draw[decorate, decoration={brace, amplitude=5pt, mirror}, thick, red!70!black]
  ($(V2.north west)+(-0.40,0)$) -- ($(V2.south west)+(-0.40,0)$)
  node[midway, left=8pt, font=\small, text=red!70!black, align=center, rotate=90, xshift=25pt, yshift=5pt] 
  {indep.\ rows};

\node[below=10pt, font=\small, text=red!70!black, align=center] at (V2.south) 
  {4-wise indep.\\(within each row)};

\foreach \r in {1,...,\Nrows}{
  \foreach \c in {1,...,\Ncols}{
    \pgfmathtruncatemacro{\h}{mod(3*\r + 4*\c + 1, 5)}
    \pgfmathtruncatemacro{\sign}{(-1)^\h}
    \pgfmathsetmacro{\xpos}{(\c - (\Ncols+1)/2) * \xspacing}
    \pgfmathsetmacro{\ypos}{((\Nrows+1)/2 - \r) * \yspacing}
    \ifnum\sign>0
      \node[font=\scriptsize, text=red!40!black] at ($(V2.center)+(\xpos,\ypos)$) {$+1$};
    \else
      \node[font=\scriptsize, text=red!40!black] at ($(V2.center)+(\xpos,\ypos)$) {$-1$};
    \fi
  }
}

\node[font=\LARGE] at (7.8,0) {$\odot$};

\node[matbox, fill=green!6] (V3) at (10.4,0) {};
\node[above=-60pt] at (V3.south) {$V_3^T$};

\foreach \i/\c in {0/green!15, 1/green!8, 2/green!15, 3/green!8, 4/green!15}{
  \fill[\c] ($(V3.north west)+(\i*0.64,0)$) rectangle ($(V3.south west)+(\i*0.64+0.64,0)$);
}

\foreach \i in {1,...,4}{
  \draw[green!30!black!20, thin] ($(V3.north west)+(\i*0.64,0)$) -- ($(V3.south west)+(\i*0.64,0)$);
}
\draw[thick] (V3.south west) rectangle (V3.north east);

\foreach \i in {0,...,4}{
  \draw[->, thick, green!50!black] ($(V3.north west)+(\i*0.64+0.32, 0.35)$) 
    -- ($(V3.north west)+(\i*0.64+0.32, 0.08)$);
}

\draw[decorate, decoration={brace, amplitude=5pt}, thick, green!50!black]
  ($(V3.north west)+(0,0.40)$) -- ($(V3.north east)+(0,0.40)$)
  node[midway, above=4pt, font=\small, text=green!50!black, align=center] 
  {indep.\ columns};

\node[below=10pt, font=\small, text=green!50!black, align=center] at (V3.south) 
  {4-wise indep.\\(within each col)};

\foreach \r in {1,...,\Nrows}{
  \foreach \c in {1,...,\Ncols}{
    \pgfmathtruncatemacro{\h}{mod(\r*4 + \c*3 + 2, 5)}
    \pgfmathtruncatemacro{\sign}{(-1)^\h}
    \pgfmathsetmacro{\xpos}{(\c - (\Ncols+1)/2) * \xspacing}
    \pgfmathsetmacro{\ypos}{((\Nrows+1)/2 - \r) * \yspacing}
    \ifnum\sign>0
      \node[font=\scriptsize, text=green!40!black] at ($(V3.center)+(\xpos,\ypos)$) {$+1$};
    \else
      \node[font=\scriptsize, text=green!40!black] at ($(V3.center)+(\xpos,\ypos)$) {$-1$};
    \fi
  }
}

\node[font=\LARGE] at (13.0,0) {$=$};
\node[font=\Large] at (14.0,0) {$V$};

\end{tikzpicture}
    \caption{Illustration of the pattern matrix generation model for $n=5$.}
    \label{fig:pattern}
\end{figure}

\paragraph{\textit{Necessity of including a sparse random perturbation}}
The perturbation $R_1$ is a random $\pm 1$ matrix which is easy to generate. Unfortunately, this matrix alone cannot provide any meaningful smallest singular value guarantee. In fact, in Lemma \ref{lem:pattinsuff}, we show that for any $n \times n$ deterministic matrix $\hat{V}$ with $\norm{\hat{V}}=1$, one can exhibit an $A$ and a sufficiently small $\varepsilon$ for which $A+\varepsilon D_1 \hat{V} D_2$ 
is singular with high probability. To address this problem we introduce another random perturbation.

\subsection{Construction of $R_2$ and invertibility control for incompressible vectors}
\label{s:overview-R2}
To obtain a lower bound on $\norm{Mx}_2$ over incompressible vectors, we construct the sparse random matrix $R_2$. We start with a simple construction where the non-zeros are placed uniformly at random, and then we show how to modify this construction via row trimming to control the norm of the perturbation.

\subsubsection{Simple construction with i.i.d.\ sparse entries}

 To minimize the number of random bits needed, the most natural choice is to consider a random matrix with sparse Rademacher entries. We first construct a hashing matrix $\bar{H}$ where the columns of $\bar{H}$ are independent $\{0,1\}$ random vectors containing $K$ ones to fix the positions of our random signs, where $K$ will be a constant that only depends on $\e$ and $\delta$, and does not depend on $n$ (for simplicity, we ignore the dependence on $\e$ and $\delta$ in this overview). 
More formally, for $i\in [n]$, let $J_i$ be independent subsets of $[n]=\{1,...,n\}$ of size $K$, uniformly distributed among such subsets. Then,  for $i,j \in [n]$, we set $\bar{h}_{j,i}= \mathbbm{1}_{J_i} (j) $.

Next, we just put random signs on the nonzero positions determined by $\bar H$ with appropriate scaling. More formally, we define a sequence $\{X_{l,i} : i \in [n], l \in [K]\}$ of $Kn$ i.i.d.\ Rademacher random variables, and enumerate the elements of each
random set $J_{i}, i\in [n]$:
\[ J_i = \{j(i,1),...,j(i,K)\} \text{ where } j(i,1) <j(i,2) <\ldots<j(i,K) \]
Then we can define the matrix $\bar{R}_2$ by
\begin{equation*}
    {[\bar{R}_2]}_{j(i,l),i} = \begin{cases}
        \frac{1}{L} \cdot X_{l,i}, &\text{ for } l\in [K] \\
        0 & \text{ otherwise}
    \end{cases}
\end{equation*}
where $L > K$ is a parameter to be chosen later.

The random matrix $\bar R_2$ indeed gives us desired control of the smallest singular value (nevertheless, $\bar R_2$ is not completely sufficient for our purpose, as explained later).

First of all, we observe that by conditioning on $R_1$, the problem reduces to bounding $\norm{(A_1+\varepsilon R_2)x}_2$ over incompressible vectors where $A_1=A+\varepsilon R_1$ can be considered deterministic because of conditioning. Next, we can use the standard invertibility by distance method as in \cite{rudelson2008littlewood, basakrudelson}, which gives optimal control of $\inf_{x \in \incomp(\alpha, \nu)} \norm{Mx}_2$ (here optimal means  as good as for a dense gaussian perturbation). This involves reducing the problem of finding the infimum over incompressible vectors to finding a lower bound for the distance between a column of $M$ and the subspace spanned by the remaining columns of $M$. For concreteness, since the columns are identically distributed, we consider the distance between the last column and the subspace spanned by the first $n-1$ columns. The techniques for analyzing this distance use the independence between columns to condition on the first $n-1$ columns, which fixes the normal (i.e., orthogonal) vector to the subspace spanned by them, and then use the randomness of the last column to control the inner product between the normal vector and the last column, i.e., $\mathbb{P}\left(|\langle Y_n, Z \rangle|<t\right)$, where $Y_n$ is the last column of $A_1+\varepsilon R_2$ and $Z$ is a random normal vector of the subspace spanned by the first $n-1$ columns. Although \cite{rudelson2008littlewood, basakrudelson} only considered the case without the shift $A_1$, adding this shift will not change the argument because the small ball probability bound we use to control $\mathbb{P}\left(|\langle Y_n, Z \rangle|<t\right)$ is not affected by deterministic~shifts.

\subsubsection{Controlling the norm and row trimming}

The natural construction $\bar R_2$ with i.i.d.\ sparse entries gives optimal control on the smallest singular value with $K=O(1)$ nonzero entries per column. However, in this case, the norm of $R_2$ grows with $n$ since some rows may have more than $O(1)$ many entries with absolute value $\ge 1$ (in other words, some of the rows are heavy). Since we want our perturbations to have $O(1)$ norm, we can scale by the norm of $\bar R_2$, but this makes the resulting smallest singular value bound suboptimal and cannot give us the $O(n)$ bound on the condition number.

We resolve this issue by an intuitive modification, which is to truncate the matrix $\bar R_2$ by simply removing the problematic heavy rows. Formally, we choose an $L>K$ such that $L = O(1)$ and consider rows with more than $L$ nonzero entries, calling them \textit{heavy} rows. We show that the number of heavy rows is not too large, and zero them out. After this step, all columns have at most $K$ nonzero entries and all rows have at most $L$ nonzero entries, all with absolute value 1. We can now scale the resulting matrix by $1/L$ to ensure a norm bound of $1$ without degrading the smallest singular value estimate. 

Formally, we set $\text{Row}_j(H) = \text{Row}_j(\bar{H})$ if the number of non-zero entries in $\text{Row}_j(\bar{H})$ does not exceed $L$ and $\text{Row}_j(H) = 0$ otherwise. Note that the number
of non-zero entries in each row and column of the resulting matrix $H$ is at most $L$. Finally, we set $R_2 = H \odot \bar R_2$ (see Figure \ref{fig:R2}).

\begin{figure}[t]
\centering
\begin{adjustbox}{max width=\textwidth}
\begin{tikzpicture}[>=Stealth]

\matrix (Rbar2) [
  matrix of nodes,
  nodes={draw, minimum size=8mm, font=\tiny, inner sep=0.6pt, align=center},
  row sep=-\pgflinewidth,
  column sep=-\pgflinewidth
] {
  |[fill=green!12]|$\tfrac{1}{L}X_{1,1}$ & |[fill=green!12]|$\tfrac{1}{L}X_{1,2}$ & |[fill=green!12]|$\tfrac{1}{L}X_{1,3}$ & |[fill=green!12]|$\tfrac{1}{L}X_{1,4}$ & \textcolor{gray}{0} & \textcolor{gray}{0} & \textcolor{gray}{0} & \textcolor{gray}{0}\\
  \textcolor{gray}{0} & \textcolor{gray}{0} & \textcolor{gray}{0} & \textcolor{gray}{0} & |[fill=green!12]|$\tfrac{1}{L}X_{1,5}$ & \textcolor{gray}{0} & \textcolor{gray}{0} & |[fill=green!12]|$\tfrac{1}{L}X_{1,8}$\\
  \textcolor{gray}{0} & \textcolor{gray}{0} & \textcolor{gray}{0} & \textcolor{gray}{0} & \textcolor{gray}{0} & |[fill=green!12]|$\tfrac{1}{L}X_{1,6}$ & \textcolor{gray}{0} & \textcolor{gray}{0}\\
  \textcolor{gray}{0} & \textcolor{gray}{0} & \textcolor{gray}{0} & \textcolor{gray}{0} & \textcolor{gray}{0} & \textcolor{gray}{0} & |[fill=green!12]|$\tfrac{1}{L}X_{1,7}$ & \textcolor{gray}{0}\\
  |[fill=green!12]|$\tfrac{1}{L}X_{2,1}$ & \textcolor{gray}{0} & \textcolor{gray}{0} & \textcolor{gray}{0} & |[fill=green!12]|$\tfrac{1}{L}X_{2,5}$ & \textcolor{gray}{0} & \textcolor{gray}{0} & \textcolor{gray}{0}\\
  \textcolor{gray}{0} & |[fill=green!12]|$\tfrac{1}{L}X_{2,2}$ & \textcolor{gray}{0} & \textcolor{gray}{0} & \textcolor{gray}{0} & |[fill=green!12]|$\tfrac{1}{L}X_{2,6}$ & \textcolor{gray}{0} & \textcolor{gray}{0}\\
  \textcolor{gray}{0} & \textcolor{gray}{0} & |[fill=green!12]|$\tfrac{1}{L}X_{2,3}$ & \textcolor{gray}{0} & \textcolor{gray}{0} & \textcolor{gray}{0} & |[fill=green!12]|$\tfrac{1}{L}X_{2,7}$ & \textcolor{gray}{0}\\
  \textcolor{gray}{0} & \textcolor{gray}{0} & \textcolor{gray}{0} & |[fill=green!12]|$\tfrac{1}{L}X_{2,4}$ & \textcolor{gray}{0} & \textcolor{gray}{0} & \textcolor{gray}{0} & |[fill=green!12]|$\tfrac{1}{L}X_{2,8}$\\
};

\draw[thick] (Rbar2-1-1.north west) rectangle (Rbar2-8-8.south east);

\node[font=\normalsize] at ($(Rbar2-1-4.north)+(0,10mm)$) {$\bar R_2$};

\foreach \c in {1,...,8} {
  \node[font=\scriptsize] at ($(Rbar2-1-\c.north)+(0,2.6mm)$) {$\c$};
}
\foreach \r in {1,...,8} {
  \node[font=\scriptsize] at ($(Rbar2-\r-1.west)+(-2.8mm,0)$) {$\r$};
}
\node[font=\scriptsize] at ($(Rbar2-1-1.north west)+(-3.4mm,2.6mm)$) {$j\backslash i$};

\draw[red,very thick] (Rbar2-1-1.west) -- (Rbar2-1-8.east);

\matrix (R) [
  matrix of nodes,
  nodes={draw, minimum size=8mm, font=\tiny, inner sep=0.6pt, align=center},
  row sep=-\pgflinewidth,
  column sep=-\pgflinewidth,
  right=16mm of Rbar2.east,
  anchor=west
] {
  |[fill=gray!18]|\textcolor{gray}{0} & |[fill=gray!18]|\textcolor{gray}{0} & |[fill=gray!18]|\textcolor{gray}{0} & |[fill=gray!18]|\textcolor{gray}{0} & |[fill=gray!18]|\textcolor{gray}{0} & |[fill=gray!18]|\textcolor{gray}{0} & |[fill=gray!18]|\textcolor{gray}{0} & |[fill=gray!18]|\textcolor{gray}{0}\\
  \textcolor{gray}{0} & \textcolor{gray}{0} & \textcolor{gray}{0} & \textcolor{gray}{0} & |[fill=green!12]|$\tfrac{1}{L}X_{1,5}$ & \textcolor{gray}{0} & \textcolor{gray}{0} & |[fill=green!12]|$\tfrac{1}{L}X_{1,8}$\\
  \textcolor{gray}{0} & \textcolor{gray}{0} & \textcolor{gray}{0} & \textcolor{gray}{0} & \textcolor{gray}{0} & |[fill=green!12]|$\tfrac{1}{L}X_{1,6}$ & \textcolor{gray}{0} & \textcolor{gray}{0}\\
  \textcolor{gray}{0} & \textcolor{gray}{0} & \textcolor{gray}{0} & \textcolor{gray}{0} & \textcolor{gray}{0} & \textcolor{gray}{0} & |[fill=green!12]|$\tfrac{1}{L}X_{1,7}$ & \textcolor{gray}{0}\\
  |[fill=green!12]|$\tfrac{1}{L}X_{2,1}$ & \textcolor{gray}{0} & \textcolor{gray}{0} & \textcolor{gray}{0} & |[fill=green!12]|$\tfrac{1}{L}X_{2,5}$ & \textcolor{gray}{0} & \textcolor{gray}{0} & \textcolor{gray}{0}\\
  \textcolor{gray}{0} & |[fill=green!12]|$\tfrac{1}{L}X_{2,2}$ & \textcolor{gray}{0} & \textcolor{gray}{0} & \textcolor{gray}{0} & |[fill=green!12]|$\tfrac{1}{L}X_{2,6}$ & \textcolor{gray}{0} & \textcolor{gray}{0}\\
  \textcolor{gray}{0} & \textcolor{gray}{0} & |[fill=green!12]|$\tfrac{1}{L}X_{2,3}$ & \textcolor{gray}{0} & \textcolor{gray}{0} & \textcolor{gray}{0} & |[fill=green!12]|$\tfrac{1}{L}X_{2,7}$ & \textcolor{gray}{0}\\
  \textcolor{gray}{0} & \textcolor{gray}{0} & \textcolor{gray}{0} & |[fill=green!12]|$\tfrac{1}{L}X_{2,4}$ & \textcolor{gray}{0} & \textcolor{gray}{0} & \textcolor{gray}{0} & |[fill=green!12]|$\tfrac{1}{L}X_{2,8}$\\
};

\draw[thick] (R-1-1.north west) rectangle (R-8-8.south east);
\node[font=\normalsize] at ($(R-1-4.north)+(0,10mm)$) {$R_2$};

\draw[->,thick] ($(Rbar2.east)+(2mm,0)$) -- node[above,font=\scriptsize] {$\odot\,H$} ($(R.west)+(-2mm,0)$);

\node[red,font=\scriptsize,align=center] (ann)
  at ($ (Rbar2-1-4.north)!0.5!(R-1-4.north) + (0,10mm) $)
  {$\mathrm{Row}_1(H)=0$\\row $1$ is zeroed};
\draw[red,->] (ann.west) .. controls +(-12mm,-10mm) and +(10mm,6mm) .. (Rbar2-1-8.east);

\end{tikzpicture}
\end{adjustbox}
\caption{Illustration of $R_2=H\odot \bar R_2$ for one realization. Since $\mathrm{Row}_1(H)=0$, the first row of $\bar R_2$ is zeroed in $R_2$.}\label{fig:R2}
\end{figure}
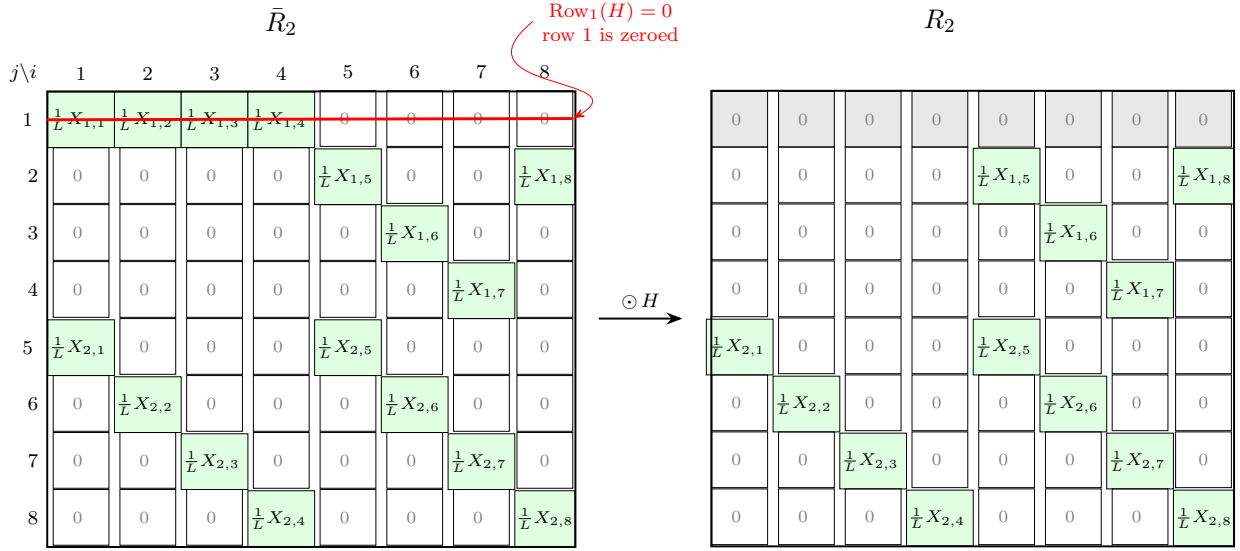

\subsubsection{Modifying the invertibility by distance argument for row-trimmed matrix $R_2$} Unfortunately, these modifications break the previous conditioning argument since the entries of the first $n-1$ columns depend on the last column. The main novelty of our argument for controlling \[\inf_{x \in \incomp(\alpha, \nu)} \norm{Mx}_2\] is a modification of the invertibililty by distance argument to make it work here for our matrix $R_2$ which has dependent columns. In this direction, the first attempt is to condition on the positions of the nonzero entries in the last column (and not the random signs) in addition to the first $n-1$ columns. This way, we separate out the random signs $X_{l,n}$ in the last column which are independent from the part that we condition on. More precisely, recalling that $Y_n$ denotes the last column of $M$ and $Z$ is a unit vector orthogonal to the subspace spanned by the first $n-1$ columns, we can write
\begin{align*}
			\langle Y_n,Z\rangle
			=
			\big\langle \mathrm{Col}_n(A+\varepsilon R_1),\, Z\big\rangle
			+\frac{\varepsilon}{L}\sum_{l=1}^K X_{l,n}\, Z_{j(n,l)}\, \mathbf{1}_{\{j(n,l) \text{ will not be removed}\}}
		\end{align*}
which, after conditioning, becomes a linear combination of independent random signs $X_{l,n}$   with coefficients $W_l=Z_{j(n,l)}\, \mathbf{1}_{\{j(n,l) \text{ will not be removed}\}}$ (with a deterministic shift $\big\langle \mathrm{Col}_n(A+\varepsilon R_1),\, Z\big\rangle$ which does not affect Levy concentration).

Nevertheless, separating out the random signs $X_{l,n}$ does not make the argument work directly, because to control the small ball probability, we need to have sufficiently many good coordinates (here good means that the corresponding coefficient is nonzero and with magnitude of order $1/\sqrt{n}$) in the coefficients $(W_1,...,W_K)$, which requires additional effort to show. 

First, we show that the random vector $Z$ is incompressible with high probability (this is where the other random diagonal matrix $D_2$ from $R_1$ is needed). This means $Z$ has many coordinates of order $1/\sqrt{n}$, called ``spread'' coordinates (Lemma \ref{lem:sharpened-spread}), denoted by $\Theta_0$. Then, the set of the good coordinates is the intersection of $\Theta_0$, the nonzero positions selected by $J_n$, and the row indices that will not be removed. The difficulty is that these three sets are dependent. To unravel this dependence,  we define the index set of  heavy rows of $\bar H$ up to column $n-1$ and the index set of the light rows up to column $n-1$  as
\begin{align*}
		I_{\mathrm{heavy},n}
		=\bigg\{ i\in[n]: \sum_{j=1}^{n-1}\bar h_{i,j} \ge L\bigg\} \qquad\text{and}\qquad I_{\mathrm{light},n}
		=\bigg\{ i\in[n]: \sum_{j=1}^{n-1}\bar h_{i,j} < L\bigg\},
	\end{align*}
and observe that a nonzero position selected by $J_n$ will eventually reset the whole row to zero if and only if it is in a row that is already heavy enough in the $1,...,n-1$ columns. Therefore, the set of good coordinates must contain $\Theta_0 \cap J_n \cap I_{\mathrm{light},n}$. We observe that $Z$, and hence $\Theta_0$ (the spread coordinates of $Z$), depends only on the first $n-1$ columns of $R_2$, which, in turn only depend on the first $n-1$ columns of $\bar R_2$ and $J_n\cap I_{\mathrm{heavy},n}$, because the nonzero entries in $J_n\cap I_{\mathrm{light},n}$ in the last column (i.e., what $J_n$ selects in $I_{\mathrm{light},n}$) will not cause a row deletion in any case.
This allows us to improve the naive conditioning argument described above. We condition on the first $n-1$ columns of $\bar R_2$ and only on the set $J_n \cap I_{\mathrm{heavy}, n}$ instead of the whole set $J_n$. This conditioning  fixes the set $\Theta_0$. Moreover, by choosing an appropriately large $L$, we can ensure that the set $\Theta_0 \cap I_{\mathrm{light},n}$ is large with high probability.
After that, we can take advantage of the extra randomness in the choice of the remaining part of $J_n$ to guarantee that, 
with high probability, sufficiently many of elements of the set $J_n \cap I_{\mathrm{light},n}$  fall in $\Theta_0 \cap I_{\mathrm{light},n}$. 

More formally, we can show that, conditioning on  the first $n-1$ columns of $\bar R_2$ and $J_n\cap I_{\mathrm{heavy},n}$, the set $J_n \cap I_{\mathrm{light},n}$ is a random uniform subset of $I_{\mathrm{light},n}$ with cardinality $K-|J_n\cap I_{\mathrm{heavy},n}|$ (see Lemma~\ref{lem:spreadlight} and Corollary~\ref{lem:lightspread} for details). And therefore, we can show that, with high probability, we have sufficiently many good coordinates in $(W_1,...,W_K)$ to control the small ball probability of
\begin{align*}
			\langle Y_n,Z\rangle
            =\big\langle \mathrm{Col}_n(A+\varepsilon R_1),\, Z\big\rangle
			+\frac{\varepsilon}{L}\sum_{l=1}^K X_{l,n}W_l
		\end{align*}
        and therefore prove invertibility for incompressible vectors (see Theorem~\ref{thm:incomp} for details).

\paragraph{\textit{Insufficiency of $R_2$ alone.}} Previously, we remarked that $R_1$ by itself does not suffice to obtain a nonzero smallest singular value lower bound. Similarly, due to the presence of zero rows, $R_2$ alone also does not suffice for our requirements. Thus, both $R_1$ and $R_2$ are essential in our construction. 

\subsection{Outline of the paper}
Section \ref{sec:preliminaries} explains the notation used in the paper and provides details of some basic mathematical estimates we use in the rest of the paper. Section \ref{sec:model} provides the full details of our perturbation model. Section \ref{sec:pattern} establishes that our model for pattern matrices achieves the required properties with high probability. Section \ref{sec:comp} and Section \ref{sec:incomp} prove the singular value lower bound over compressible and incompressible vectors, respectively. Section \ref{sec:mainthm} combines these lower bounds to prove Theorem \ref{coro:pertur}. Finally, Section \ref{s:linear} deals with the application of our results towards solving linear systems in linear space.

\section{Preliminaries} \label{sec:preliminaries}

\subsection{Notation.} \label{subsec:notation}
The following notation and terminology will be used in the paper. The notation $[n]$ is used for the set $\{1,2,...,n\}$. Also, for two integers $a$ and $b$ with $a \le b$, we use the notation $[a:b]$ for the set $\{k \in \Z:a \le k \le b\}$. For $x \in \R$, we use the notation $\lfloor x \rfloor$ to denote the greatest integer less than or equal to $x$ and $\lceil x \rceil$ to denote the least integer greater than or equal to  $x$. In $\R^n$ (or $\R^m$ or $\R^d$), the $l$th coordinate vector is denoted by $e_l$. 

We use the notation $\mathbb{P}$ for the standard probability measure, and the notation $\mathbb{E}$ for the expectation with respect to this standard probability measure. The covariance of two random variables $X$ and $Y$ is denoted by $\cov(X,Y)$. The standard $L_q$ norm of a random variable $\xi$ is denoted by $\norm{\xi}_{L_q(\Pb)}$, for $1 \le q \le \infty$. When there is no possibility of confusion, we will use the short notation $\norm{\xi}_{q}$ for the $L_q$ norm of $\xi$. For convenience of iterated integration, we introduce the following notations. Let $X,Y$ be two random elements taking values in measurable spaces $\mathscr{S}_1$ and $\mathscr{S}_2$. Let $\phi:\mathscr{S}_1 \times \mathscr{S}_2 \to \R$ be a measurable function. Assume that $\phi(X,Y)$ is integrable. Define the function $\psi_2:\mathscr{S}_2 \to \R$ such that $\psi_2(y)=\E\phi(X,y)$. Then the notation $\E_X(\phi(X,Y))$ stands for $\psi_2(Y)$. Similarly, we define the function $\psi_1:\mathscr{S}_1 \to \R$ such that $\psi_1(x)=\E\phi(x,Y)$ and the notation $\E_Y(\phi(X,Y))$ stands for the random variable $\psi_1(X)$.

All matrices considered in this paper are real valued and the space of $m \times n$ matrices with real valued entries is denoted by $M_{m \times n}(\mathbb{R})$. Also, for a matrix $X \in M_{d \times d}(\mathbb{R})$, the notation $\Tr (X)$ denotes the trace of the matrix $X$, and $\tr (X) = \frac{1}{d} \Tr (X)$ denotes the normalized trace. The $d \times d$ identity matrix is denoted by $I_d$. We write the operator norm of a matrix $X$ as $\norm{X}$, and it is also denoted by $\norm{X}_{op}$ in some places where other norms appear for clarity. The spectrum of a matrix $X$ is denoted by $\spec(X)$. For matrices $A$ and $B$ with same dimensions, $A \odot B$ stands for the entry-wise (Hadamard) product of $A$ and $B$, i.e., $(A \odot B)_{ij}=A_{ij} \cdot B_{ij}$.

Throughout the paper, the symbols $c_1, c_2, ...$, and $Const, Const', ...$ denote absolute constants. 

\subsection{Decomposition of the sphere}

Our method to bound the smallest singular value of the perturbed matrix uses the standard approach that separately control the invertibility for compressible vectors and for incompressible vectors. So we introduce these formal definition.
\begin{definition}
    Let $\alpha,\nu\in(0,1)$. A vector $x\in\mathbb{R}^n$ is called \emph{sparse} if $|\operatorname{supp}(x)|\le \alpha n$. A vector $x\in S^{n-1}$ is called \emph{compressible} if $x$ is within Euclidean distance $\nu$ from the set of all sparse vectors. A vector $x\in S^{n-1}$ is called \emph{incompressible} if it is not compressible. The sets compressible and incompressible vectors will be denoted by
\begin{align*}
    \comp(\alpha, \nu) &= \{ x \in \bS^{n-1} \vert \norm{x - v}_2 \le \nu \text{ for some } v \in \bS^{n-1} \text{ with } \supp(v) \le \alpha n \} \\
    \incomp(\alpha, \nu) &= \bS^{n-1} \backslash \comp(\alpha, \nu)
\end{align*}

\end{definition}

\subsection{Existence of measurable random normal} In the argument on invertibility for incompressible vectors, we need to use a random normal vector, so we need to ensure a (measurable) random normal vector exists.
	\begin{lemma}[Measurable random normal]\label{lem:zeta-meas}
		Define $\zeta:(\R^n)^{n-1}\to\R^n$ by
		\begin{equation}\label{eq:zeta-cofactor}
			\bigl(\zeta(y_1,\dots,y_{n-1})\bigr)_r
			:=
			(-1)^{r+1}\det\bigl(B^{(r)}\bigr),
			\qquad r\in[n].
		\end{equation}
		Then $\zeta$ is Borel measurable and
		\[
		\ip{\Span(y_1,\dots,y_{n-1})}{\zeta(y_1,\dots,y_{n-1})}=0.
		\]
		Equivalently,
		\[
		\zeta(y_1,\dots,y_{n-1})\in \Span(y_1,\dots,y_{n-1})^{\perp}.
		\]
	\end{lemma}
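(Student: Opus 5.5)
Here $B^{(r)}$ denotes the $(n-1)\times(n-1)$ matrix obtained from the $n\times(n-1)$ matrix $B=[y_1,\dots,y_{n-1}]$ by deleting its $r$-th row. The proof has two independent parts: measurability, and the orthogonality identity. The orthogonality comes from cofactor expansion of a determinant with a repeated column.

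\emph{Measurability.} Each coordinate $\zeta_r$ is $(-1)^{r+1}\det(B^{(r)})$. This is a polynomial in the $n(n-1)$ real coordinates of $(y_1,\dots,y_{n-1})$. Polynomials are continuous, hence Borel measurable on $(\R^n)^{n-1}\cong\R^{n(n-1)}$. A map into $\R^n$ is Borel measurable exactly when each of its coordinates is, so $\zeta$ is Borel measurable.

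\emph{Orthogonality.} It suffices to show $\ip{y_k}{\zeta}=0$ for each $k\in[n-1]$; linearity then extends this to the whole span.
\begin{enumerate}
\item Fix $k$ and form the $n\times n$ matrix $C_k=[y_k,y_1,\dots,y_{n-1}]$, whose first column is a copy of $y_k$.
\item $C_k$ has two equal columns, so $\det C_k=0$.
\item Expand $\det C_k$ along its first column. The minor obtained by deleting row $r$ and column $1$ is exactly $B^{(r)}$. This gives
\[
0=\det C_k=\sum_{r=1}^n (-1)^{r+1}(y_k)_r\det\bigl(B^{(r)}\bigr)=\ip{y_k}{\zeta(y_1,\dots,y_{n-1})}.
\]
\end{enumerate}
This proves $\zeta(y_1,\dots,y_{n-1})\in\Span(y_1,\dots,y_{n-1})^{\perp}$.

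\emph{Remark.} Nothing here is hard; the only point needing care is the sign convention in the cofactor expansion, which must match the $(-1)^{r+1}$ in \eqref{eq:zeta-cofactor}. For later use it is worth noting when $\zeta$ is nonzero. We have $\zeta\neq0$ iff some $(n-1)\times(n-1)$ minor of $B$ is nonzero, i.e.\ iff $y_1,\dots,y_{n-1}$ are linearly independent. On that event $\zeta/\|\zeta\|_2$ is a measurable unit normal. On the complementary (measurable) event one can substitute any fixed measurable choice.
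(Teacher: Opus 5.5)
Your proof is correct and is the same as the paper's: measurability follows because each coordinate is a polynomial in the entries. Orthogonality follows by expanding the vanishing determinant $\det[y_k, y_1, \dots, y_{n-1}]$ along its first column, whose row-$r$ minor is $B^{(r)}$; your sign check and your explicit definition of $B^{(r)}$, which the statement leaves implicit, are both right.
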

	
	\begin{proof}
The proof of this result is straightforward. The function $\zeta$ is a polynomial so it is continuous and therefore Borel measurable. To prove
\[
		\ip{y_j,{\zeta(y_1,\dots,y_{n-1})}}=0
		\]
we just use cofactor expansion formula (e.g., \cite[(6.2.6), p.~486]{Meyer}) for the determinant of the matrix
\[
		A^{[j]}:=[y_j\ y_1\ \cdots\ y_{n-1}]\in\R^{n\times n}.
		\]
and observe that
		\[
		\det(A^{[j]})=0.
		\]
because the first column and the $(j+1)$st column of $A^{[j]}$ are equal.
	\end{proof}

\subsection{Small ball probability}
As is explained in the outline, the control of invertibility for incompressible vectors needs to use the control for small ball probability. So we introduce the formal definition and standard results below.
    	\begin{definition}[L\'evy concentration function]
		For a real-valued random variable $Z$ and $r\ge 0$, its concentration function is
		\begin{equation}
			\label{eq:Levy}
			\mathcal L(Z,r):=\sup_{u\in\mathbb R}\mathbb P\{|Z-u|\le r\}.
		\end{equation}
	\end{definition}

\begin{lemma}[Scaling of Levy concentration]
		\label{lem:scale}
		Let $Z$ be real-valued and let $a\neq 0$. Then for all $r\ge 0$,
		\begin{equation}
			\label{eq:scale}
			\mathcal L(aZ,r)=\mathcal L\!\left(Z,\frac{r}{|a|}\right).
		\end{equation}
		In particular, if $a>0$ then $\mathcal L(Z/a,r)=\mathcal L(Z,ar)$.
	\end{lemma}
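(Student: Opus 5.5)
The identity $\mathcal L(aZ,r)=\mathcal L(Z,r/|a|)$ follows by unwinding the definition \eqref{eq:Levy} and reparametrizing the center of the ball. There is no real obstacle; the only care needed is with the sign of $a$ and the fact that the supremum runs over all of $\mathbb R$.

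Fix $r\ge 0$ and $a\neq 0$. For any $u\in\mathbb R$, dividing the inequality $|aZ-u|\le r$ by $|a|>0$ shows that the events $\{|aZ-u|\le r\}$ and $\{|Z-u/a|\le r/|a|\}$ coincide. Hence
\[
\mathbb P\{|aZ-u|\le r\}=\mathbb P\{|Z-u/a|\le r/|a|\}\le \mathcal L(Z,r/|a|).
\]
Taking the supremum over $u$ gives $\mathcal L(aZ,r)\le \mathcal L(Z,r/|a|)$.

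For the reverse inequality, take any $v\in\mathbb R$ and set $u=av$. The same identity of events gives
\[
\mathbb P\{|Z-v|\le r/|a|\}=\mathbb P\{|aZ-u|\le r\}\le \mathcal L(aZ,r).
\]
Taking the supremum over $v$ yields $\mathcal L(Z,r/|a|)\le\mathcal L(aZ,r)$. Note that $u\mapsto u/a$ is a bijection of $\mathbb R$ regardless of the sign of $a$, which is what lets both suprema range over all centers.

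For the "in particular" claim, apply \eqref{eq:scale} with $1/a$ in place of $a$, where $a>0$. Since $|1/a|=1/a$, this gives $\mathcal L(Z/a,r)=\mathcal L(Z,r/(1/a))=\mathcal L(Z,ar)$.
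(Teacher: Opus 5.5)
Your proof is correct and is exactly the ``straightforward calculation'' the paper alludes to: the events $\{|aZ-u|\le r\}$ and $\{|Z-u/a|\le r/|a|\}$ coincide, and the center map $u\mapsto u/a$ is a bijection of $\mathbb R$, which gives both inequalities. The ``in particular'' claim then follows by substituting $1/a$ for $a$, as you do.
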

\begin{proof}
The result follows from straightforward calculation.

\end{proof}

\begin{lemma}[Rogozin's theorem {\cite[Theorem~1]{rogozin}}]
		\label{lem:rogozin}
		Let $\xi_1,\dots,\xi_n$ be independent real-valued random variables, let $r>0$, and let $r_1,\dots,r_n\in (0,r]$.
		Assume that
		\[
		\sum_{i=1}^n \bigl(1-\mathcal L(\xi_i,r_i)\bigr)r_i^2>0.
		\]
		Then
		\begin{equation}
			\label{eq:rogozin}
			\mathcal L\!\left(\sum_{i=1}^n \xi_i,r\right)\le \frac{c_{\ref{lem:rogozin}}\,r}{\sqrt{\sum_{i=1}^n \bigl(1-\mathcal L(\xi_i,r_i)\bigr)r_i^2}},
		\end{equation}
		where $c_{\ref{lem:rogozin}}>0$ is a universal constant.
\end{lemma}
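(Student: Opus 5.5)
The statement is Rogozin's theorem. Since it is cited from \cite{rogozin}, the natural plan is to recover it from the Kolmogorov--Rogozin machinery rather than re-derive it from scratch. I would go through Esseen's inequality and the characteristic functions of the summands.

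\textbf{Step 1: Esseen's inequality.} For any real random variable $S$ and $r>0$,
\[
\mathcal L(S,r)\le C\, r\int_{-1/r}^{1/r}\abs{\varphi_S(t)}\,dt .
\]
Applied to $S=\sum_i\xi_i$, independence factors the characteristic function as $\abs{\varphi_S}=\prod_i\abs{\varphi_{\xi_i}}$.

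\textbf{Step 2: symmetrization.} Let $\tilde\xi_i=\xi_i-\xi_i'$, where $\xi_i'$ is an independent copy of $\xi_i$. Then $\abs{\varphi_{\xi_i}}^2=\E\cos(t\tilde\xi_i)$. Using $\cos x\le 1-c\,x^2$ for $\abs{x}\le\pi$, together with $1-y\le e^{-y}$, we get
\[
\abs{\varphi_{\xi_i}(t)}\le \exp\!\Big(-c\,\E\big[\min(t^2\tilde\xi_i^2,1)\big]\Big).
\]

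\textbf{Step 3: lower bound on the exponent.} This is the key input. Note that
\[
\Pb(\abs{\tilde\xi_i}> r_i)\ \ge\ 1-\mathcal L(\xi_i,r_i),
\]
since conditioning on $\xi_i'=u$ gives $\Pb(\abs{\xi_i-u}\le r_i)\le\mathcal L(\xi_i,r_i)$. Hence for $\abs{t}\le 1/r\le 1/r_i$,
\[
\E\min(t^2\tilde\xi_i^2,1)\ \ge\ t^2r_i^2\bigl(1-\mathcal L(\xi_i,r_i)\bigr).
\]
Set $D=\sum_i(1-\mathcal L(\xi_i,r_i))r_i^2$, which is positive by hypothesis. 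Then $\abs{\varphi_S(t)}\le e^{-c t^2 D}$ on $\abs{t}\le 1/r$, and a Gaussian integral gives
\[
\mathcal L(S,r)\le C r\int_{\R} e^{-c t^2 D}\,dt\le C' r/\sqrt{D}.
\]
This is exactly \eqref{eq:rogozin}.

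\textbf{Main obstacle.} The delicate point is the inequality $\E\min(t^2\tilde\xi^2,1)\ge t^2 r_i^2\,\Pb(\abs{\tilde\xi}>r_i)$. The naive bound of the cosine by a quadratic fails when $t\tilde\xi$ is large, because the cosine can return to $1$. I would handle this with the standard averaging trick in Esseen's lemma: integrate $1-\cos(t\tilde\xi)$ over $t\in[-T,T]$ with $T=1/r$, using
\[
\frac1T\int_0^T\bigl(1-\cos(tx)\bigr)\,dt\ \ge\ c\min(T^2x^2,1),
\]
and then apply an inequality between the integral of the product and the product of the integrals (H\"older). If that route becomes messy, I would simply cite \cite{rogozin} or the Kolmogorov--Rogozin inequality as presented in Esseen's paper, since the paper only uses the statement as a black box.
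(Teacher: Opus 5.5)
The paper gives no proof of this lemma. It imports the statement from \cite[Theorem~1]{rogozin}, so your fallback of citing it is exactly what the paper does.

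The derivation you sketch, however, has a genuine gap, and it is not where you place it. The inequality you call delicate, $\E\min(t^2\tilde\xi_i^2,1)\ge t^2r_i^2\,\Pb(|\tilde\xi_i|>r_i)$, is trivial for $|t|\le 1/r_i$. What fails is Step~2: the pointwise bound $|\varphi_{\xi}(t)|\le\exp(-c\,\E\min(t^2\tilde\xi^2,1))$ is false, because $1-\cos x$ vanishes on $2\pi\mathbb{Z}$. Take $\xi_i$ i.i.d.\ uniform on $\{0,2\pi\}$ and $r=r_i=1$. Then $\mathcal L(\xi_i,1)=1/2$, so $D=n/2$. Yet $|\varphi_{\xi_i}(t)|=|\cos(\pi t)|$ equals $1$ at $t=1\in[-1/r,1/r]$, so $|\varphi_S(1)|=1$. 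Hence the bound $|\varphi_S(t)|\le e^{-ct^2D}$ on $|t|\le 1/r$ is false, and the Gaussian-integral conclusion in Step~3 has no support. Passing from a pointwise estimate to an integrated one is exactly the content of the theorem.

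Your proposed repair does not close the gap. The inequality $\frac1T\int_0^T(1-\cos tx)\,dt\ge c\min(T^2x^2,1)$ lower-bounds the \emph{average} over $t$ of the exponent $\Phi(t)=\frac12\sum_k\E(1-\cos t\tilde\xi_k)$. By convexity, $\frac{1}{2T}\int_{-T}^{T}e^{-\Phi(t)}\,dt\ge\exp\bigl(-\frac{1}{2T}\int_{-T}^{T}\Phi\bigr)$, so an averaged lower bound on $\Phi$ gives no upper bound on $\int|\varphi_S|$. In the example above, $\Phi$ is large on average but vanishes at the integers.

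An argument that works goes as follows.
\begin{itemize}
\item Put $q_k=1-\mathcal L(\xi_k,r_k)$ and discard the $k$ with $q_k=0$. Apply H\"older with exponents $1/p_k=r_k^2q_k/D$, which sum to $1$. This reduces the claim to $\int_{-1/r}^{1/r}|\varphi_{\xi_k}|^{p_k}\,dt\le C/\sqrt D$ for each $k$.
\item Use $1-|\varphi_{\xi_k}|^2=\E(1-\cos t\tilde\xi_k)$ and your bound $\Pb(|\tilde\xi_k|>r_k)\ge q_k$, then Jensen in $\tilde\xi_k$ (not in $t$) for the second step:
\[
|\varphi_{\xi_k}(t)|^{p_k}\le \exp\Bigl(-\tfrac{p_kq_k}{2}\,\E\bigl[1-\cos(t\tilde\xi_k)\bigm| |\tilde\xi_k|>r_k\bigr]\Bigr)\le \E\Bigl[e^{-\frac{p_kq_k}{2}(1-\cos(t\tilde\xi_k))}\Bigm| |\tilde\xi_k|>r_k\Bigr].
\]
\item After Tonelli, it suffices to show that for fixed $|x|>r_k$ and $s>0$,
\[
\int_{-1/r}^{1/r}e^{-s(1-\cos tx)}\,dt\le C(1/r+1/|x|)s^{-1/2}\le 2C/(r_k\sqrt s).
\]
This follows by counting the $O(|x|/r+1)$ periods of length $2\pi/|x|$. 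Each period contributes $O(|x|^{-1}s^{-1/2})$, since $1-\cos u\ge 2u^2/\pi^2$ on $[-\pi,\pi]$.
\item Take $s=p_kq_k/2$. Since $p_kq_kr_k^2=D$, each H\"older factor is $O(1/\sqrt D)$, so the product is $O(1/\sqrt D)$. Esseen's inequality then gives \eqref{eq:rogozin}.
\end{itemize}
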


\subsection{Measure concentration}

We will need the following standard bounded difference inequality from \cite{boucheron2013concentration}.
\begin{lemma}[Bounded differences inequality]
		\label{lem:bd}
		Let $X_1,\dots,X_m$ be independent random variables taking values in a product space $\mathcal X_1\times\cdots\times\mathcal X_m$.
		Let $f:\mathcal X_1\times\cdots\times\mathcal X_m\to\mathbb{R}$ and assume that there exist constants $b_1,\dots,b_m\ge 0$ such that for every $k\in[m]$ and every two points $x,x'$ that differ only in the $k$th coordinate,
		\[
		|f(x)-f(x')|\le b_k.
		\]
		Define
		\[
		v:=\frac14\sum_{k=1}^m b_k^2.
		\]
		Then for every $t>0$, if $Z:=f(X_1,\dots,X_m)$,
		\[
		\mathbb{P}\{Z-\mathbb{E}Z>t\}\le \exp\Bigl(-\frac{t^2}{2v}\Bigr),
		\qquad
		\mathbb{P}\{Z-\mathbb{E}Z<-t\}\le \exp\Bigl(-\frac{t^2}{2v}\Bigr).
		\]
	\end{lemma}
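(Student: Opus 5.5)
The plan is the standard martingale (Azuma–Hoeffding/McDiarmid) argument: bound the moment generating function of $Z-\E Z$ by $e^{\lambda^2 v/2}$ and finish with a Chernoff bound. Write $\mathcal F_k=\sigma(X_1,\dots,X_k)$ and form the Doob martingale $Z_k:=\E[Z\mid\mathcal F_k]$, so that $Z_0=\E Z$ and $Z_m=Z$. Then decompose
\[
Z-\E Z=\sum_{k=1}^m D_k,\qquad D_k:=Z_k-Z_{k-1}.
\]
The goal is to show that, conditionally on $\mathcal F_{k-1}$, each increment $D_k$ is mean zero and confined to an interval of length at most $b_k$.

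To get the interval, I use independence to write the conditional expectation explicitly. Define
\[
g_k(x_1,\dots,x_k):=\E f(x_1,\dots,x_k,X_{k+1},\dots,X_m),
\]
which integrates only over the coordinates $k+1,\dots,m$ against their product law. Then $Z_k=g_k(X_1,\dots,X_k)$ and $Z_{k-1}=\E_{X_k}\, g_k(X_1,\dots,X_{k-1},X_k)$. Set
\[
L_k:=\inf_x g_k(X_1,\dots,X_{k-1},x)-Z_{k-1},\qquad U_k:=\sup_x g_k(X_1,\dots,X_{k-1},x)-Z_{k-1}.
\]
Both are $\mathcal F_{k-1}$-measurable, and $L_k\le D_k\le U_k$. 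For any $x,x'$, the bounded difference hypothesis applied pointwise inside the integral gives $|g_k(\dots,x)-g_k(\dots,x')|\le b_k$, hence $U_k-L_k\le b_k$. Also $\E[D_k\mid\mathcal F_{k-1}]=0$.

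Next I apply Hoeffding's lemma conditionally, to a mean-zero variable supported in an interval of length $b_k$:
\[
\E\big[e^{\lambda D_k}\mid\mathcal F_{k-1}\big]\le e^{\lambda^2b_k^2/8}.
\]
Peeling off the increments one at a time by the tower property, starting from $k=m$ and going down, gives
\[
\E e^{\lambda(Z-\E Z)}\le \exp\Big(\frac{\lambda^2}{8}\sum_k b_k^2\Big)=e^{\lambda^2 v/2}.
\]
By Markov's inequality, $\Pb\{Z-\E Z>t\}\le e^{-\lambda t+\lambda^2v/2}$. Choosing $\lambda=t/v$ yields $e^{-t^2/(2v)}$. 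The lower tail follows by applying the same argument to $-f$, which satisfies the same bounded difference constants.

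The only delicate step is the interval-width claim, since it is where independence is genuinely used. Without the product structure, $Z_{k-1}$ would not be an average of $g_k(\dots,x)$ over $x$ with the other coordinates' law unchanged, and the range of $D_k$ could exceed $b_k$. Some care is also needed with measurability of $\inf_x$ and $\sup_x$. I would sidestep this by noting that Hoeffding's lemma only needs $D_k$ to lie in an interval of length $b_k$ with $\mathcal F_{k-1}$-measurable endpoints. Taking $L_k':=D_k(x^*)$ for a fixed reference point $x^*$, the interval $[L_k'-b_k,\,L_k'+b_k]$ is too wide, so instead I would use the essential infimum over $x$ (with respect to the law of $X_k$), which is measurable. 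Integrability of $f(X)$ follows from the bounded differences, since $f$ then differs from its value at a fixed point by at most $\sum_k b_k$.
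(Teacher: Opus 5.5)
Your proof is correct. The paper does not prove this lemma; it quotes it from Boucheron--Lugosi--Massart, and your argument (Doob martingale, conditional Hoeffding's lemma, Chernoff bound, with the $b_k^2/8$ in Hoeffding's lemma giving exactly $v=\frac14\sum_k b_k^2$) is the standard proof of that cited inequality. The measurability detour at the end is avoidable: for each fixed $(x_1,\dots,x_{k-1})$, the variable $g_k(x_1,\dots,x_{k-1},X_k)$ lies in an interval of length at most $b_k$ with deterministic endpoints, so apply the ordinary Hoeffding lemma there and then integrate over $(X_1,\dots,X_{k-1})$ by Fubini.
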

	
\subsection{Random sampling without replacement} Random sampling without replacement plays an important role in our construction of $R_2$ and in our proof for the invertibility for incompressible vectors. So we state the formal definitions (see e.g., \cite{anderson2017introduction}) and standard results  below.

\begin{definition}[Ordered random sampling without replacement]\label{def:orsamp}
    Let $k \le n$ be integers. Let $(\Omega,\mathscr{A}, \Pb)$ be a probability space. Let $\mathcal{Z}$ be the set of injections from $[k]$ to $[n]$.
    Let $T: \Omega \to \mathcal{Z}$ be a random variable such that for any $z \in \mathcal{Z}$, we have
    \begin{align*}
        \Pb(T=z)=\frac{1}{n \cdot (n-1) \cdots (n-k+1)}
    \end{align*}
    Then we say $T$ is a random sampling of an ordered \(k\)-element sequence uniformly from $[n]$ without replacement.
    Also, if $\mathcal S=(s_i)_{i \in [n]}=(s_1,\dots,s_n)$ is a finite ordered sequence, and $S(\omega)=s \circ T(\omega)$, i.e., $S(l)(\omega)=s_{T(l)(\omega)}$ for $l=1,2,...,k$. Then $S=(S(1),...,S(k))$  is called a random sampling of an ordered \(k\)-element sequence uniformly from population $\mathcal S$ without replacement.
\end{definition}

We will use the following Lemma~1.1 and Proposition~1.2 of Bardenet--Maillard~\cite{BardenetMaillard2015} to get concentration results for sum of random sampling without replacement.

\begin{lemma}[Hoeffding's reduction lemma {\cite[Lemma~1.1]{BardenetMaillard2015}}]\label{lem:BM11}
		Let $\mathcal S=(s_1,\dots,s_n)$ be a finite population of $n$ real points.
		Let $(X(1),\dots,X(k))$ denote a random sampling of an ordered \(k\)-element sequence uniformly from population $\mathcal S$ without replacement,
		and let $(Y(1),\dots,Y(k))$ denote a random sample \emph{with replacement} from $\mathcal S$ (i.i.d. random samples from $\mathcal S$).
		If $f:\mathbb R\to\mathbb R$ is continuous and convex, then
		\[
		\E f\Bigl(\sum_{i=1}^k X(i)\Bigr)\ \le\ \E f\Bigl(\sum_{i=1}^k Y(i)\Bigr).
		\]
	\end{lemma}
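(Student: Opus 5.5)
This is Hoeffding's classical reduction, and I would prove it by exhibiting sampling without replacement as a mixture of convex combinations of sampling with replacement. Then Jensen's inequality, applied to the convex $f$, gives the comparison directly.

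\textbf{Step 1: rewrite both sides.} Let $g(i_1,\dots,i_k)=f(s_{i_1}+\dots+s_{i_k})$ for $(i_1,\dots,i_k)\in[n]^k$. Then the left side equals $\frac{(n-k)!}{n!}\sum g(i_1,\dots,i_k)$ over injective tuples, since by Definition~\ref{def:orsamp} the ordered sample is a uniform injection. The right side equals $n^{-k}\sum g$ over all tuples in $[n]^k$.

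\textbf{Step 2: a coupling.} Let $\sigma$ be a uniform random permutation of $[n]$. Independently, let $U_1,\dots,U_k$ be i.i.d.\ uniform on $[n]$. Set $Y(l)=s_{\sigma(U_l)}$; this is an i.i.d.\ sample with replacement from $\mathcal S$. For a map $\phi:[k]\to[k]$, define $N_\phi=\sum_{l}s_{\sigma(\phi(l))}$; the map $\phi$ records which \emph{distinct} draws coincide, after relabeling.

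Conditioned on the equality pattern of $U$, the sum $\sum_l Y(l)$ has the law of $\sum_{m=1}^{r}c_m s_{\sigma(m)}$. Here $r\le k$ is the number of distinct values among the $U_l$, and the multiplicities satisfy $c_m\ge1$ and $\sum_m c_m=k$. This holds because $\sigma$ composed with any fixed injection of the $r$ distinct labels is again a uniform ordered sample of $r$ distinct points.

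\textbf{Step 3: Jensen inside each pattern.} Fix a multiplicity vector $(c_1,\dots,c_r)$. Write $W=\sum_{m=1}^r c_m X(m)$, where $X(1),\dots,X(n)$ is the full uniform random ordering of $\mathcal S$ given by $\sigma$. I claim $\mathbb E f(W)\ge \mathbb E f\bigl(\sum_{i=1}^k X(i)\bigr)$.

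The key point is that $\sum_{i=1}^k X(i)$ is a conditional expectation of $W$. Condition on the unordered set $\{X(1),\dots,X(k)\}$, or more simply on the exchangeable structure. Averaging $W$ over all permutations of the first $k$ positions of the ordering replaces each $X(m)$ with $m\le r$ by the mean of the first $k$ sampled values. This gives
\[\mathbb E[W\mid \{X(1),\dots,X(k)\}] = \tfrac{\sum_m c_m}{k}\sum_{i=1}^k X(i)=\sum_{i=1}^kX(i).\]
The positions $m\le r\le k$ lie among the first $k$ positions, and the conditional law of the ordering of those $k$ values is uniform, so this average is legitimate. Conditional Jensen then gives $\mathbb E f(W)\ge \mathbb E f\bigl(\sum_{i\le k}X(i)\bigr)$.

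\textbf{Step 4: conclude.} Averaging the inequality of Step 3 over the distribution of equality patterns of $U$ yields $\mathbb E f\bigl(\sum_l Y(l)\bigr)\ge \mathbb E f\bigl(\sum_{i=1}^k X(i)\bigr)$. Continuity of $f$, together with finiteness of the population, guarantees that all expectations are finite sums, so no integrability issue arises.

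The only delicate point is Step 3: checking that the conditional expectation is exactly the without-replacement sum. This reduces to the symmetry of a uniform ordering restricted to the first $k$ slots, together with the identity $\sum_m c_m=k$.
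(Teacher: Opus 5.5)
The paper gives no proof of this lemma; it quotes it from Bardenet--Maillard, who take it from Hoeffding (1963, Theorem~4). Your argument is a correct, self-contained proof, and it is essentially Hoeffding's original one. You decompose the with-replacement sum by its coincidence pattern into a mixture of weighted without-replacement sums $W=\sum_{m\le r}c_mX(m)$ with $\sum_m c_m=k$. Symmetrizing $W$ over the first $k$ positions of the uniform ordering then returns exactly $\sum_{i\le k}X(i)$, and Jensen finishes.

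One point in Step~3 should be tightened. Condition on the \emph{index} set $\{\sigma(1),\dots,\sigma(k)\}$ (or on the multiset of sampled values), not on the set of values $\{X(1),\dots,X(k)\}$. If the population has repeated values, the set of values need not determine $\sum_{i\le k}X(i)$. For example, with population $(0,0,1,1)$ and $k=3$, the value set is always $\{0,1\}$ while the sum is $1$ or $2$. In that case $\mathbb{E}[W\mid\cdot\,]$ would no longer equal $\sum_{i\le k}X(i)$. Your description, averaging over all permutations of the first $k$ positions, is the right operation, and it corresponds to conditioning on the index set.

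Two smaller points. The auxiliary quantity $N_\phi$ in Step~2 is never used. Continuity of $f$ plays no role in the expectations being finite sums; finiteness of the population suffices, and convexity on $\mathbb{R}$ already implies continuity.
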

	\begin{proposition}[Hoeffding's inequality {\cite[Prop.~1.2]{BardenetMaillard2015}} for random sampling without replacement]\label{prop:BM12}
		Let $\mathcal S=(s_1,\dots,s_n)$ be a finite population of $n$ real points and let
		$(S(1),\dots,S(k))$ be a random sampling of an ordered \(k\)-element sequence uniformly from population $\mathcal S$ without replacement.
		Let
		\[
		a=\min_{1\le i\le n} s_i,\qquad b=\max_{1\le i\le n} s_i,
		\qquad
		\mu=\frac{1}{n}\sum_{i=1}^n s_i.
		\]
		Then for all $t>0$,
		\begin{equation}\label{eq:BM-hoeffding-upper}
			\Pb\Bigl(\frac1k\sum_{i=1}^k S(i)-\mu\ge t\Bigr)
			\le \exp\Bigl(-\frac{2kt^2}{(b-a)^2}\Bigr).
		\end{equation}
	\end{proposition}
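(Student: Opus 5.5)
The plan is to reduce the without-replacement statement to the classical i.i.d.\ Hoeffding bound via Lemma~\ref{lem:BM11} and a Chernoff argument. Fix $\lambda>0$ and apply Markov's inequality to the exponential moment:
\[
\Pb\Bigl(\sum_{i=1}^k S(i)-k\mu\ge kt\Bigr)\le e^{-\lambda k t}\,\E \exp\Bigl(\lambda\Bigl(\sum_{i=1}^k S(i)-k\mu\Bigr)\Bigr).
\]
The function $f(x)=\exp(\lambda(x-k\mu))$ is continuous and convex on $\mathbb R$. Lemma~\ref{lem:BM11} therefore lets us replace $(S(1),\dots,S(k))$ by an i.i.d.\ sample $(Y(1),\dots,Y(k))$ drawn uniformly from $\mathcal S$:
\[
\E \exp\Bigl(\lambda\sum_{i=1}^k (S(i)-\mu)\Bigr)\le \E \exp\Bigl(\lambda\sum_{i=1}^k (Y(i)-\mu)\Bigr)=\prod_{i=1}^k \E e^{\lambda(Y(i)-\mu)}.
\]

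Each $Y(i)-\mu$ has mean zero, since $\E Y(i)=\mu$ for a uniform draw from the population. It also takes values in $[a-\mu,b-\mu]$, an interval of length $b-a$. Hoeffding's lemma then gives $\E e^{\lambda(Y(i)-\mu)}\le \exp(\lambda^2(b-a)^2/8)$. Combining these bounds,
\[
\Pb\Bigl(\frac1k\sum_{i=1}^k S(i)-\mu\ge t\Bigr)\le \exp\Bigl(-\lambda k t+\frac{k\lambda^2(b-a)^2}{8}\Bigr).
\]
Choosing $\lambda=4t/(b-a)^2$ optimizes the exponent and yields exactly $\exp(-2kt^2/(b-a)^2)$, which is \eqref{eq:BM-hoeffding-upper}. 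In the degenerate case $a=b$, the left-hand side is $0$ for every $t>0$, so the statement holds trivially.

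There is essentially no hard step here, since the work is carried by Lemma~\ref{lem:BM11}. The only points to check are that the convexity hypothesis applies to the shifted exponential, and that we use the ordered-sampling formalism of Definition~\ref{def:orsamp} consistently: the sum is symmetric, so the ordering does not matter. Hoeffding's lemma for a bounded mean-zero variable is standard and I would cite it rather than reprove it.
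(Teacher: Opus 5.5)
Your proposal is correct and follows the same route the paper points to: the paper does not reprove this proposition but cites Bardenet--Maillard, noting in the subsequent remark that it follows by combining Lemma~\ref{lem:BM11} with the standard Chernoff/Hoeffding argument for i.i.d.\ sampling with replacement. That is exactly your argument: Markov on the exponential moment, the convex-transfer lemma, Hoeffding's lemma, and the choice $\lambda=4t/(b-a)^2$. Your handling of the degenerate case $a=b$ is also correct.
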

	
	\begin{remark}[Lower tail from Proposition~\ref{prop:BM12}]\label{rem:BM-lower}
		Applying Proposition~\ref{prop:BM12} to the population $(-s_1,\dots,-s_n)$ yields the lower-tail bound
		\begin{equation}\label{eq:BM-hoeffding-lower}
			\Pb\Bigl(\frac1k\sum_{i=1}^k S(i)-\mu\le -t\Bigr)
			\le \exp\Bigl(-\frac{2kt^2}{(b-a)^2}\Bigr),
			\qquad t>0.
		\end{equation}
	\end{remark}
	
	\begin{remark}
		In \cite{BardenetMaillard2015}, Proposition~\ref{prop:BM12} is obtained by combining Lemma~\ref{lem:BM11} (which transfers moment bounds from sampling with replacement to sampling without replacement) with standard Chernoff/Hoeffding bounds for i.i.d. sampling with replacement.
	\end{remark}

\begin{definition}[Unordered random sampling without replacement]\label{def:unorsamp}
Let $k \le n$ be integers. Let $(\Omega,\mathscr{A}, \Pb)$ be a probability space. Let $\mathcal{Z}$ be the set of injections from $[k]$ to $[n]$. Let $\mathcal S=(s_i)_{i \in [n]}=(s_1,\dots,s_n)$ is a finite ordered sequence. And assume that \(s_1,\dots,s_n\) are pairwise distinct. Let $\mathcal J=\mathrm{Range}(s)=\{s_1,\dots,s_n\}$.
Let \(T:\Omega\to\mathcal Z\) be a random sampling of an ordered \(k\)-element sequence uniformly from \([n]\)
without replacement. Define
\[
J(\omega)=\{s_{T(\omega)(1)},\dots,s_{T(\omega)(k)}\}=\mathrm{Range}(s \circ T(\omega)).
\]
Then \(J\) is called a random sampling of an unordered \(k\)-element subset
uniformly from $\mathcal J$ without replacement.
\end{definition}

\begin{remark}
    Let $\mathcal S=(s_i)_{i \in [n]}=(s_1,\dots,s_n)$ is a finite ordered sequence. And assume that \(s_1,\dots,s_n\) are pairwise distinct. Let $\mathcal J=\mathrm{Range}(s)=\{s_1,\dots,s_n\}$. Then the followings hold:
    
    (i) If \(J\) is a random sampling of an unordered \(k\)-element subset
uniformly from $\mathcal J$ without replacement, then for every $B \subset \mathcal J$ with $\card(B)=k$, we have
\[
\Pb(J=B)=\frac{1}{\binom{n}{k}},
\]  

    (ii) If $J$ is a random set such that for every $B \subset \mathcal J$ with $\card(B)=k$, we have
\[
\Pb(J=B)=\frac{1}{\binom{n}{k}},
\]
then (in a expanded probability space if necessary) \(J\) is a random sampling of an unordered \(k\)-element subset
uniformly from $\mathcal J$ without replacement.
\end{remark}
	
	\begin{lemma}[Conditional law of $\Gamma_2$ given $\Gamma_1$]\label{lem:cond-unif-mu}
		Fix $n\in\mathbb N$ and $K\in\{0,1,\dots,n\}$.
		Let $J$ be a random sampling of an unordered \(K\)-element subset
uniformly from $[n]$ without replacement. and let $I\subset[n]$ be deterministic.
		Define
		\[
		\Gamma_1:=J\cap I,
		\qquad
		\Gamma_2:=J\cap\bigl([n]\setminus I\bigr).
		\]
		Let $\mu(\theta_1,B)$ be a regular conditional probability of $\Gamma_2$ given $\Gamma_1$, i.e., $\mu$ is a probability kernel such that for any measurable set $B$ (in the space where $\Gamma_2$ takes values), we have $\mu(\Gamma_1,B)=\Pb(\Gamma_2 \in B|\Gamma_1)$ almost surely.
		Then for every fixed $\theta_1\subset I$ with $\abs{\theta_1}\le K$:
		\begin{enumerate}
			\item If $K-\abs{\theta_1}>\abs{[n]\setminus I}$, then $\Pb(\Gamma_1=\theta_1)=0$.
			\item If $K-\abs{\theta_1}\le \abs{[n]\setminus I}$, then on the event $\{\Gamma_1=\theta_1\}$, for every $\theta_2\subset [n]\setminus I$,
			\[
			\mu(\theta_1,\{\theta_2\})=
			\begin{cases}
				\binom{\abs{[n]\setminus I}}{K-\abs{\theta_1}}^{-1}, & \text{if }\abs{\theta_2}=K-\abs{\theta_1},\\[4pt]
				0, & \text{otherwise.}
			\end{cases}
			\]
			In other words, conditional on $\Gamma_1$, the random set $\Gamma_2$ is a random sampling of an unordered  $(K-\abs{\Gamma_1})$-element subset
uniformly from  $[n]\setminus I$ without replacement, or more precisely, any random element with law of distribution $\mu(\theta_1,\cdot )$ is a random sampling of an unordered  $(K-\abs{\theta_1})$-element subset
uniformly from  $[n]\setminus I$ without replacement (in an expanded probability space if necessary).
		\end{enumerate}
	\end{lemma}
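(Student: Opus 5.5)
The plan is to compute the joint law of $(\Gamma_1,\Gamma_2)$ directly from the uniform law of $J$, then read off the conditional law by dividing by the marginal of $\Gamma_1$. Since $J$ takes finitely many values, every conditional probability is elementary, and a regular conditional probability can be taken to be the ratio kernel on atoms of positive mass. Any other version of $\mu$ agrees with it on all $\theta_1$ with $\Pb(\Gamma_1=\theta_1)>0$. This is exactly what the phrase ``on the event $\{\Gamma_1=\theta_1\}$'' requires.

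Write $m=\abs{[n]\setminus I}$ and $k_1=\abs{\theta_1}$. The map $J\mapsto(J\cap I,\,J\setminus I)$ is a bijection between $K$-subsets of $[n]$ and pairs $(\theta_1,\theta_2)$ with $\theta_1\subset I$, $\theta_2\subset[n]\setminus I$ and $\abs{\theta_1}+\abs{\theta_2}=K$. By part (i) of the remark following Definition~\ref{def:unorsamp}, each such pair therefore has probability $\binom{n}{K}^{-1}$. It follows that
\[
\Pb(\Gamma_1=\theta_1)=\#\{\theta_2\subset[n]\setminus I:\abs{\theta_2}=K-k_1\}\,\binom{n}{K}^{-1}=\binom{m}{K-k_1}\binom{n}{K}^{-1}.
\]
This vanishes exactly when $K-k_1>m$, which gives item (1).

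For item (2), suppose $K-k_1\le m$, so that $\Pb(\Gamma_1=\theta_1)>0$. Then
\[
\mu(\theta_1,\{\theta_2\})=\frac{\Pb(\Gamma_1=\theta_1,\Gamma_2=\theta_2)}{\Pb(\Gamma_1=\theta_1)}.
\]
The numerator is $\binom{n}{K}^{-1}$ if $\abs{\theta_2}=K-k_1$ and $0$ otherwise, so the ratio is $\binom{m}{K-k_1}^{-1}$ or $0$, as claimed. To justify that any version of $\mu$ satisfies this, check the defining identity $\Pb(\Gamma_1=\theta_1,\Gamma_2\in B)=\E[\mu(\Gamma_1,B)\mathbf 1_{\Gamma_1=\theta_1}]=\mu(\theta_1,B)\Pb(\Gamma_1=\theta_1)$. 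Uniqueness on atoms of positive mass then follows.

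The final reformulation comes from part (ii) of the same remark, applied with population $\mathcal J=[n]\setminus I$ and subset size $K-k_1$. It says that a random set with this uniform law is, in a possibly expanded probability space, a random unordered sample without replacement. No step is a real obstacle. The only care needed is the measure-theoretic bookkeeping for the regular conditional probability, which the finiteness of the sample space makes trivial.
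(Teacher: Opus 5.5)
Your proposal is correct and follows essentially the same route as the paper: you get $\Pb(\Gamma_1=\theta_1,\Gamma_2=\theta_2)=\binom{n}{K}^{-1}$ from $J=\theta_1\cup\theta_2$, sum to obtain the marginal $\binom{\abs{[n]\setminus I}}{K-\abs{\theta_1}}\binom{n}{K}^{-1}$, and divide. Your check that \emph{every} version of $\mu$ agrees with this ratio on atoms of positive mass (the paper only says a suitable version can be chosen) is a small but welcome tightening, since the statement quantifies over an arbitrary regular conditional probability.
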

	
	\begin{proof}
		Fix $\theta_1\subset I$ with $\abs{\theta_1}\le K$ and put $s:=K-\abs{\theta_1}$.
		If $s>\abs{[n]\setminus I}$ then there is no $K$--subset $J$ with $J\cap I=\theta_1$, hence $\Pb(\Gamma_1=\theta_1)=0$.
		
		Assume henceforth that $s\le \abs{[n]\setminus I}$.
		Let $\theta_2\subset[n]\setminus I$.
		If $\abs{\theta_2}\neq s$, then the event $\{\Gamma_1=\theta_1,\Gamma_2=\theta_2\}$ is empty, so
		$\Pb(\Gamma_2=\theta_2,\Gamma_1=\theta_1)=0$ and therefore $\mu(\theta_1,\{\theta_2\})=0$ on $\{\Gamma_1=\theta_1\}$.
		
		If $\abs{\theta_2}=s$, then $\Gamma_1=\theta_1$ and $\Gamma_2=\theta_2$ is equivalent to $J=\theta_1\cup \theta_2$ (disjoint union).
		Since $J$ is uniform on $\binom{[n]}{K}$, we have
		\[
		\Pb(\Gamma_2=\theta_2,\Gamma_1=\theta_1)=\Pb(J=\theta_1\cup \theta_2)=\binom{n}{K}^{-1}.
		\]
		In addition, we have
		\[
		\Pb(\Gamma_1=\theta_1)=\sum_{\substack{\theta_2'\subset[n]\setminus I\\ \abs{\theta_2'}=s}}
		\Pb(J=\theta_1\cup \theta_2')
		=\binom{\abs{[n]\setminus I}}{s}\binom{n}{K}^{-1}.
		\]
		Therefore,
		\[
		\Pb(\Gamma_2=\theta_2\mid \Gamma_1=\theta_1)
		=\frac{\Pb(\Gamma_2=\theta_2,\Gamma_1=\theta_1)}{\Pb(\Gamma_1=\theta_1)}
		=\binom{\abs{[n]\setminus I}}{s}^{-1}.
		\]
		Since $\Gamma_1$ takes values in a finite set, a version of $\mu$ can be chosen so that
		$\mu(\theta_1,\{\theta_2\})=\Pb(\Gamma_2=\theta_2\mid \Gamma_1=\theta_1)$ whenever $\Pb(\Gamma_1=\theta_1)>0$.
	\end{proof}

\section{Oblivious Perturbation Model} \label{sec:model}

In this section, we provide the full details of our oblivious perturbation model. Recall that the perturbation $R$ is defined as a sum of two independent random matrices a dense matrix $R_1$ and a sparse matrix $R_2$.

\subsection{The dense perturbation $R_1$}

As mentioned in Section \ref{s:overview-R1},
    \begin{equation}\label{eq:r1defn}
        R_1=\frac{1}{\rho \sqrt{n}} \cdot D_1 V D_2 
    \end{equation}
for $D_1 = \diag(\eta_{1,1} \etc \eta_{1,n})$ and $D_2 = \diag(\eta_{2,1} \etc \eta_{2,n})$ where $\eta_{1,1} \etc \eta_{1,n}$ and $\eta_{2,1} \etc \eta_{2,n}$ are i.i.d. Rademacher random variables, and $V$ is any deterministic \textit{pattern} matrix with parameters $\alpha, \beta, \gamma$, and $\rho$ as defined in the following Definition \ref{def:pattern},

\begin{definition}\label{def:pattern}
    Let $0<\alpha, \beta, \gamma <1$ and $\rho>0$. An $n\times n$ matrix $V$ with $\pm 1$ entries will be called an $n$ dimensional pattern matrix with parameters $\alpha, \beta, \gamma,\rho$ if all of the following hold.
    \begin{enumerate}
        \item $\norm{V}\le \rho\sqrt{n}$,
        \item For any $\eta_1 \etc \eta_n \in \{ -1, 1 \}$, and for any $\alpha n$-sparse vector $x \in \bS^{n-1}$, we have
        \[ \abs{ \{ j \in [n] : \abs{e_j^T V \cdot \diag(\eta_{1} \etc \eta_{n}) x} \ge \beta \} } \ge \gamma n \]
        and
        \[ \abs{ \{ j \in [n] : \abs{e_j^T V^T \cdot \diag(\eta_{1} \etc \eta_{n})x} \ge \beta \} } \ge \gamma n \]
    \end{enumerate}
\end{definition}

 For discussion about the existence of pattern matrices, see Section \ref{sec:pattern}.

 \subsection{Necessity of a second perturbation}

 Before describing the construction of the second matrix $R_2$, we note via the following lemma that setting $R= R_1$ alone does not suffice to obtain a condition number bound.

 \begin{lemma} \label{lem:pattinsuff}
	Let $\hat{V}$ be an $n\times n$ deterministic matrix with $\norm{\hat{V}}= 1$. Let $D_1$ and $D_2$ be as in \eqref{eq:r1defn}. Then, for $\varepsilon = 1/\sqrt{n}$, there exists a deterministic $n \times n$ matrix $A$ with $\norm{A}\le 1$ such that 
	\begin{equation*}
		\P \paren*{ s_n(A+ \varepsilon D_1 \hat{V} D_2) = 0 } \ge \frac{1}{2}
	\end{equation*}
 \end{lemma}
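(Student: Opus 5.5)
We will find a rank-one matrix $A = t\, e_i e_j^T$, with suitable indices $i,j$ and scalar $|t|\le 1$, that makes the perturbed matrix singular whenever a single product of two Rademacher signs takes a prescribed value. That event has probability exactly $1/2$.

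\textbf{Step 1: factor out the signs.} Since $D_1^2=D_2^2=I$, we have
\[
A+\varepsilon D_1\hat V D_2 = D_1\bigl(D_1 A D_2 + \varepsilon \hat V\bigr) D_2 .
\]
Hence $s_n(A+\varepsilon D_1\hat VD_2)=0$ if and only if $\det(D_1AD_2+\varepsilon\hat V)=0$. For $A=t\,e_ie_j^T$ we get $D_1AD_2 = t\,\sigma\, e_ie_j^T$, where $\sigma:=\eta_{1,i}\eta_{2,j}$ is a single Rademacher sign. The randomness therefore collapses to one sign $\sigma$ with $\Pb(\sigma=+1)=\Pb(\sigma=-1)=1/2$.

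\textbf{Step 2: the trivial case.} If $\hat V$ is singular, take $A=0$. Then the matrix is singular with probability $1$.

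\textbf{Step 3: the invertible case.} Assume $\hat V$ is invertible. By the matrix determinant lemma,
\[
\det\bigl(\varepsilon\hat V + t\sigma\, e_ie_j^T\bigr)=\det(\varepsilon\hat V)\Bigl(1+\frac{t\sigma}{\varepsilon}\,(\hat V^{-1})_{ji}\Bigr).
\]
This vanishes exactly when $t\sigma = -\varepsilon/(\hat V^{-1})_{ji}$. We need to choose $(i,j)$ so that the required $t$ satisfies $|t|\le 1$, that is, $|(\hat V^{-1})_{ji}|\ge \varepsilon = 1/\sqrt n$. This is the only real point of the argument, and it follows from $\|\hat V\|=1$:
\begin{itemize}
\item $\|\hat V^{-1}\|\ge 1/\|\hat V\| = 1$;
\item $\|\hat V^{-1}\| \le \|\hat V^{-1}\|_F$, so $\|\hat V^{-1}\|_F\ge 1$;
\item among the $n^2$ entries of $\hat V^{-1}$, some entry $(\hat V^{-1})_{ji}$ therefore satisfies $|(\hat V^{-1})_{ji}|\ge 1/\sqrt n$.
\end{itemize}

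\textbf{Step 4: conclusion.} Fix such $(i,j)$ and set $t:=-\varepsilon/(\hat V^{-1})_{ji}$, so $|t|\le 1$. Let $A:=t\,e_ie_j^T$; then $\|A\|=|t|\le 1$ and $A$ is deterministic. By Step 3 the determinant vanishes whenever $\sigma=+1$. Hence
\[
\Pb\bigl(s_n(A+\varepsilon D_1\hat VD_2)=0\bigr)\ \ge\ \Pb(\sigma=1)=\tfrac12 .
\]
The only step requiring care is the norm bound on an entry of $\hat V^{-1}$, and it follows directly from the Frobenius–operator norm comparison above.
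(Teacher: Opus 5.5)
Your reduction matches the paper's proof in substance. You use a single-entry rank-one $A$, whose admissible size is governed by one entry of $\hat V^{-1}$, and singularity is triggered by a product of two independent signs. The paper phrases this with the left null vector $D_1u$, where $u$ is the normalized first row of $\hat V^{-1}$, rather than with the matrix determinant lemma. Steps 1, 2 and 4 are fine.

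However, the step you yourself call the only real point is wrong as written. From $\norm{\hat V^{-1}}_F\ge 1$, pigeonholing over $n^2$ entries only gives an entry with $\abs{(\hat V^{-1})_{ji}}^2\ge 1/n^2$, i.e.\ $\abs{(\hat V^{-1})_{ji}}\ge 1/n$, not $1/\sqrt n$. For example, a matrix with all entries equal to $1/n$ has Frobenius norm $1$. With only that bound, your $t=-\varepsilon/(\hat V^{-1})_{ji}$ could have $\abs{t}$ as large as $\sqrt n$, so $\norm{A}\le 1$ is not established.

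The missing ingredient is that $\norm{\hat V}=1$ controls \emph{every} singular value of $\hat V^{-1}$, not just the largest. Since $s_n(\hat V^{-1})=1/\norm{\hat V}=1$, each column satisfies $\norm{\hat V^{-1}e_i}_2\ge 1$; equivalently, $\norm{\hat V^{-1}}_F^2=\sum_k s_k(\hat V^{-1})^2\ge n$. Pigeonholing within a single column, over only $n$ entries, then gives an entry with $\abs{(\hat V^{-1})_{ji}}\ge 1/\sqrt n$, and the rest of your argument goes through unchanged.

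This is exactly what the paper's proof uses implicitly. The first row $r_1$ of $\hat V^{-1}$ satisfies $\langle r_1,\col_1(\hat V)\rangle=1$ with $\norm{\col_1(\hat V)}_2\le 1$, hence $\norm{r_1}_2\ge 1$. Choosing a coordinate $\abs{u_j}\ge 1/\sqrt n$ of $u=\pm r_1/\norm{r_1}_2$ amounts to choosing a large entry of that single row.
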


 \begin{proof}
	Let $u \in \bS^{n-1}$ be a vector orthogonal to
$\col_2(\hat{V}) \etc \col_n(\hat{V})$, and let $u_{\eta_1, \eta_2}$ be a similar vector orthogonal to the columns of
$D_1\hat{V}D_2$. Then $u_{\eta_1, \eta_2} = D_1u$ and $$\angbr*{ \col_1(D_1\hat{V}D_2), u_{\eta_1, \eta_2} } = \eta'_1 \angbr*{ \col_1(\hat{V}), u }$$
So, $u_{\eta_1, \eta_2}^T D_1\hat{V}D_2 = \paren*{ \eta'_1 \angbr*{ \col_1(\hat{V}), u } , 0 \etc 0 }$. Let $u_j$ be such that $\abs{u_j} \ge 1/\sqrt{n}$. Then, if we set,
\[ A = \frac{ \angbr*{ \col_1(\hat{V}), u } }{ \sqrt{n} \cdot u_j } e_j e_1^T \] 
then $\norm{A}\le 1$ and $u_{\eta_1, \eta_2}^T A = \paren*{ (1/\sqrt{n}) \eta_j \angbr*{ \col_1(\hat{V}), u } , 0 \etc 0 }$ and for $\varepsilon=1/\sqrt{n}$,
\begin{align*}
	\P \paren*{ s_n(A+ \varepsilon D_1 \hat{V} D_2) = 0 } \ge \P \paren*{ u_{\eta_1, \eta_2}^T \paren*{ A + \varepsilon D_1 \hat{V} D_2 } = 0 } \ge \P \paren*{ \eta_j + \eta'_1 = 0} \ge \frac{1}{2}
\end{align*}
 \end{proof}

 \subsection{The sparse perturbation $R_2$.}

We now explain the construction of $R_2$ once again with additional figures and conclude with a full definition of our perturbation model $R$ (Definition \ref{def:perturb}) for completeness. $R_2$ is a sparse random matrix defined with parameters $K$ and $L$ with $K<L$ by first specifying the positions of the nonzero entries by a hashing matrix $H$ and populating these positions by independent scaled Rademacher random variables. Before constructing $H$, we first construct an auxillary hashing matrix $\bar{H}$. The columns of the matrix $\bar{H}$ are independent $\{0,1\}$ random vectors containing
$K$ ones. More formally, for $i\in [n]$, let $J_i$ be independent subsets of $[n]$ of size $K$ uniformly distributed among such subsets. Then,  for $i,j \in [n]$, we set $\bar{h}_{j,i}= \mathbbm{1}_{J_i} (j) $ (the matrix $\bar H$ is illustrated in Figure \ref{fig:barH}).

\begin{figure}[h!]
  \centering

  \begin{tikzpicture}

		\matrix (Hbar) [
		matrix of nodes,
		nodes={draw, minimum size=8mm, font=\scriptsize, inner sep=0pt, anchor=center},
		row sep=-\pgflinewidth,
		column sep=-\pgflinewidth,
		] at (0,-0.55) {
			|[fill=blue!12]|\textcolor{blue}{1} & |[fill=blue!12]|\textcolor{blue}{1} & |[fill=blue!12]|\textcolor{blue}{1} & |[fill=blue!12]|\textcolor{blue}{1} & 0 & 0 & 0 & 0\\
			0 & 0 & 0 & 0 & |[fill=blue!12]|\textcolor{blue}{1} & 0 & 0 & |[fill=blue!12]|\textcolor{blue}{1}\\
			0 & 0 & 0 & 0 & 0 & |[fill=blue!12]|\textcolor{blue}{1} & 0 & 0\\
			0 & 0 & 0 & 0 & 0 & 0 & |[fill=blue!12]|\textcolor{blue}{1} & 0\\
			|[fill=blue!12]|\textcolor{blue}{1} & 0 & 0 & 0 & |[fill=blue!12]|\textcolor{blue}{1} & 0 & 0 & 0\\
			0 & |[fill=blue!12]|\textcolor{blue}{1} & 0 & 0 & 0 & |[fill=blue!12]|\textcolor{blue}{1} & 0 & 0\\
			0 & 0 & |[fill=blue!12]|\textcolor{blue}{1} & 0 & 0 & 0 & |[fill=blue!12]|\textcolor{blue}{1} & 0\\
			0 & 0 & 0 & |[fill=blue!12]|\textcolor{blue}{1} & 0 & 0 & 0 & |[fill=blue!12]|\textcolor{blue}{1}\\
		};

		\draw[thick] (Hbar-1-1.north west) rectangle (Hbar-8-8.south east);

		\foreach \c in {1,...,8} {
			\node[font=\scriptsize] at ($(Hbar-1-\c.north)+(0,3.2mm)$) {$i=\c$};
		}
		\foreach \r in {1,...,8} {
			\node[font=\scriptsize] at ($(Hbar-\r-1.west)+(-6mm,0)$) {$j=\r$};
		}

		\node[font=\normalsize] at ($(Hbar-1-8.north east)+(-4, +1)$) {$\bar H$};

		\node[align=left,font=\scriptsize] (Jlist) at (6.2,0.2) {$J_1=\{1,5\}$\\
			$J_2=\{1,6\}$\\
			$J_3=\{1,7\}$\\
			$J_4=\{1,8\}$\\
			$J_5=\{2,5\}$\\
			$J_6=\{3,6\}$\\
			$J_7=\{4,7\}$\\
			$J_8=\{2,8\}$};
		
		\node[align=left,font=\scriptsize] at (6.2,1.85) {Each $|J_i|=K=2$.};

		\draw[-{Stealth[length=2.2mm]},thick] (Jlist.west) -- ($(Hbar.east)+(0.25,0)$);
	\end{tikzpicture}

  \caption{{Illustration of $\bar H$ with} $n=8$, $K=2$, $L=3$.\\
			Each column $i$ selects an independent uniform subset $J_i\subset[n]$ of size $K$ and sets $\bar h_{j,i}=\mathbf{1}_{J_i}(j)$.}
  \label{fig:barH}
\end{figure}

Next, define a sequence $\{X_{l,i} : i \in [n], l \in [K]\}$ of $Kn$ i.i.d. Rademacher random variables, and enumerate the elements of each
random set $J_{i}, i\in [n]$:
\[ J_i = \{j(i,1),...,j(i,K)\} \text{ where } j(i,1) <j(i,2) <\ldots<j(i,K) \]

Define the matrix $\bar{R}_2$ by
\begin{equation*}
    {[\bar{R}_2]}_{j(i,l),i} = \begin{cases}
        \frac{1}{L} \cdot X_{l,i}, &\text{ for } l\in [K] \\
        0 & \text{ otherwise}
    \end{cases}
\end{equation*}
In other words, we use
${\bar{H}}$ as a hashing matrix to create
$\bar{R_2}$, which is illustrated in Figure \ref{fig:barR2}.
\begin{figure}[h!]
  \centering

  \begin{tikzpicture}
		
		\matrix (Rbar) [
		matrix of nodes,
		nodes={draw, minimum width=8mm, minimum height=8mm, font=\tiny, inner sep=1pt, align=center},
		row sep=-\pgflinewidth,
		column sep=-\pgflinewidth,
		] at (0,-0.55) {
			|[fill=green!10]|$\frac{1}{L}X_{1,1}$ & |[fill=green!10]|$\frac{1}{L}X_{1,2}$ & |[fill=green!10]|$\frac{1}{L}X_{1,3}$ & |[fill=green!10]|$\frac{1}{L}X_{1,4}$ & \textcolor{gray}{0} & \textcolor{gray}{0} & \textcolor{gray}{0} & \textcolor{gray}{0}\\
			\textcolor{gray}{0} & \textcolor{gray}{0} & \textcolor{gray}{0} & \textcolor{gray}{0} & |[fill=green!10]|$\frac{1}{L}X_{1,5}$ & \textcolor{gray}{0} & \textcolor{gray}{0} & |[fill=green!10]|$\frac{1}{L}X_{1,8}$\\
			\textcolor{gray}{0} & \textcolor{gray}{0} & \textcolor{gray}{0} & \textcolor{gray}{0} & \textcolor{gray}{0} & |[fill=green!10]|$\frac{1}{L}X_{1,6}$ & \textcolor{gray}{0} & \textcolor{gray}{0}\\
			\textcolor{gray}{0} & \textcolor{gray}{0} & \textcolor{gray}{0} & \textcolor{gray}{0} & \textcolor{gray}{0} & \textcolor{gray}{0} & |[fill=green!10]|$\frac{1}{L}X_{1,7}$ & \textcolor{gray}{0}\\
			|[fill=green!10]|$\frac{1}{L}X_{2,1}$ & \textcolor{gray}{0} & \textcolor{gray}{0} & \textcolor{gray}{0} & |[fill=green!10]|$\frac{1}{L}X_{2,5}$ & \textcolor{gray}{0} & \textcolor{gray}{0} & \textcolor{gray}{0}\\
			\textcolor{gray}{0} & |[fill=green!10]|$\frac{1}{L}X_{2,2}$ & \textcolor{gray}{0} & \textcolor{gray}{0} & \textcolor{gray}{0} & |[fill=green!10]|$\frac{1}{L}X_{2,6}$ & \textcolor{gray}{0} & \textcolor{gray}{0}\\
			\textcolor{gray}{0} & \textcolor{gray}{0} & |[fill=green!10]|$\frac{1}{L}X_{2,3}$ & \textcolor{gray}{0} & \textcolor{gray}{0} & \textcolor{gray}{0} & |[fill=green!10]|$\frac{1}{L}X_{2,7}$ & \textcolor{gray}{0}\\
			\textcolor{gray}{0} & \textcolor{gray}{0} & \textcolor{gray}{0} & |[fill=green!10]|$\frac{1}{L}X_{2,4}$ & \textcolor{gray}{0} & \textcolor{gray}{0} & \textcolor{gray}{0} & |[fill=green!10]|$\frac{1}{L}X_{2,8}$\\
		};
		
		\draw[thick] (Rbar-1-1.north west) rectangle (Rbar-8-8.south east);

		\foreach \c in {1,...,8} {
			\node[font=\scriptsize] at ($(Rbar-1-\c.north)+(0,3.2mm)$) {$i=\c$};
		}
		\foreach \r in {1,...,8} {
			\node[font=\scriptsize] at ($(Rbar-\r-1.west)+(-6mm,0)$) {$j=\r$};
		}
		
		\node[font=\normalsize] at ($(Rbar-1-8.north east)+(-4,+1)$) {$\bar R_2$};

		\node[align=left,font=\scriptsize,anchor=west] at ($(Rbar.east)+(0.8,1.2)$) {$\ell=1$ is the first selected row in $J_i$,\\
			$\ell=2$ is the second selected row in $J_i$.};
	\end{tikzpicture}

  \caption{{Illustration of $\bar R_2$ with} $n=8$, $K=2$, $L=3$.\\
			Here, we enumerate $J_i=\{j(i,1)<\cdots<j(i,K)\}$ and set
			$\bar r_{j(i,\ell),i}=\frac{1}{L}X_{\ell,i}$ for $\ell\in[K]$, else $0$.}
  \label{fig:barR2}
\end{figure}
As mentioned before in Section~\ref{sec:outline}, this construction $\bar{R_2}$ is not sufficient for our purpose because $\norm{\bar{R_2}}$ may be large.

To reduce the norm of the random perturbation, we remove the heavy rows of $\bar H$ and $\bar R_2$ to construct the matrices $H$ and $R_2$.  Formally, we set $\text{Row}_j(H) = \text{Row}_j(\bar{H})$ if the number of non-zero entries in $\text{Row}_j(\bar{H})$ does not exceed $L$ and $\text{Row}_j(H) = 0$ otherwise. Note that the number
of non-zero entries in each row and column of the matrix $H$ is at most $L$. The process of transforming $\bar H$ to $H$ is illustrated in Figure \ref{fig:H}.
\begin{figure}[h!]
  \centering

  	\begin{tikzpicture}

		\matrix (Hbar2) [
		matrix of nodes,
		nodes={draw, minimum size=5.2mm, font=\scriptsize, inner sep=0pt, anchor=center},
		row sep=-\pgflinewidth,
		column sep=-\pgflinewidth,
		] at (-4.1,-0.55) {
			|[fill=blue!12]|\textcolor{blue}{1} & |[fill=blue!12]|\textcolor{blue}{1} & |[fill=blue!12]|\textcolor{blue}{1} & |[fill=blue!12]|\textcolor{blue}{1} & 0 & 0 & 0 & 0\\
			0 & 0 & 0 & 0 & |[fill=blue!12]|\textcolor{blue}{1} & 0 & 0 & |[fill=blue!12]|\textcolor{blue}{1}\\
			0 & 0 & 0 & 0 & 0 & |[fill=blue!12]|\textcolor{blue}{1} & 0 & 0\\
			0 & 0 & 0 & 0 & 0 & 0 & |[fill=blue!12]|\textcolor{blue}{1} & 0\\
			|[fill=blue!12]|\textcolor{blue}{1} & 0 & 0 & 0 & |[fill=blue!12]|\textcolor{blue}{1} & 0 & 0 & 0\\
			0 & |[fill=blue!12]|\textcolor{blue}{1} & 0 & 0 & 0 & |[fill=blue!12]|\textcolor{blue}{1} & 0 & 0\\
			0 & 0 & |[fill=blue!12]|\textcolor{blue}{1} & 0 & 0 & 0 & |[fill=blue!12]|\textcolor{blue}{1} & 0\\
			0 & 0 & 0 & |[fill=blue!12]|\textcolor{blue}{1} & 0 & 0 & 0 & |[fill=blue!12]|\textcolor{blue}{1}\\
		};
		\draw[thick] (Hbar2-1-1.north west) rectangle (Hbar2-8-8.south east);
		\node[font=\normalsize] at ($(Hbar2-1-8.north east)+(-2,+1)$) {$\bar H$};

		\foreach \c in {1,...,8} {\node[font=\scriptsize] at ($(Hbar2-1-\c.north)+(0,3.2mm)$) {$\c$};}
		\foreach \r in {1,...,8} {\node[font=\scriptsize] at ($(Hbar2-\r-1.west)+(-3.2mm,0)$) {$\r$};}

		\draw[red,very thick] (Hbar2-1-1.west) -- (Hbar2-1-8.east);
		\node[red,font=\scriptsize,align=left] at ($(Hbar2-1-8.east)+(2,0.35)$) {$\|\mathrm{Row}_1(\bar H)\|_0=4>L$\\(deleted)};

		\matrix (H) [
		matrix of nodes,
		nodes={draw, minimum size=5.2mm, font=\scriptsize, inner sep=0pt, anchor=center},
		row sep=-\pgflinewidth,
		column sep=-\pgflinewidth,
		] at (4.1,-0.55) {
			|[fill=gray!15]|0 & |[fill=gray!15]|0 & |[fill=gray!15]|0 & |[fill=gray!15]|0 & |[fill=gray!15]|0 & |[fill=gray!15]|0 & |[fill=gray!15]|0 & |[fill=gray!15]|0\\
			0 & 0 & 0 & 0 & |[fill=blue!12]|\textcolor{blue}{1} & 0 & 0 & |[fill=blue!12]|\textcolor{blue}{1}\\
			0 & 0 & 0 & 0 & 0 & |[fill=blue!12]|\textcolor{blue}{1} & 0 & 0\\
			0 & 0 & 0 & 0 & 0 & 0 & |[fill=blue!12]|\textcolor{blue}{1} & 0\\
			|[fill=blue!12]|\textcolor{blue}{1} & 0 & 0 & 0 & |[fill=blue!12]|\textcolor{blue}{1} & 0 & 0 & 0\\
			0 & |[fill=blue!12]|\textcolor{blue}{1} & 0 & 0 & 0 & |[fill=blue!12]|\textcolor{blue}{1} & 0 & 0\\
			0 & 0 & |[fill=blue!12]|\textcolor{blue}{1} & 0 & 0 & 0 & |[fill=blue!12]|\textcolor{blue}{1} & 0\\
			0 & 0 & 0 & |[fill=blue!12]|\textcolor{blue}{1} & 0 & 0 & 0 & |[fill=blue!12]|\textcolor{blue}{1}\\
		};
		\draw[thick] (H-1-1.north west) rectangle (H-8-8.south east);
		\node[font=\normalsize] at ($(H-1-8.north east)+(-2,+1)$) {$H$};

		\foreach \c in {1,...,8} {\node[font=\scriptsize] at ($(H-1-\c.north)+(0,3.2mm)$) {$\c$};}
		\foreach \r in {1,...,8} {\node[font=\scriptsize] at ($(H-\r-1.west)+(-3.2mm,0)$) {$\r$};}

		\draw[-{Stealth[length=2.5mm]},thick] ($(Hbar2.east)+(0.4,0)$) -- node[above,font=\scriptsize] {row trimming} ($(H.west)+(-0.4,0)$);
	\end{tikzpicture}

  \caption{{Illustration of constructing $ H$ from $\bar H$ with} $n=8$, $K=2$, $L=3$.\\
			Here, we keep row $j$ if $\|\mathrm{Row}_j(\bar H)\|_0\le L$, else set it to $0$.}
  \label{fig:H}
\end{figure}
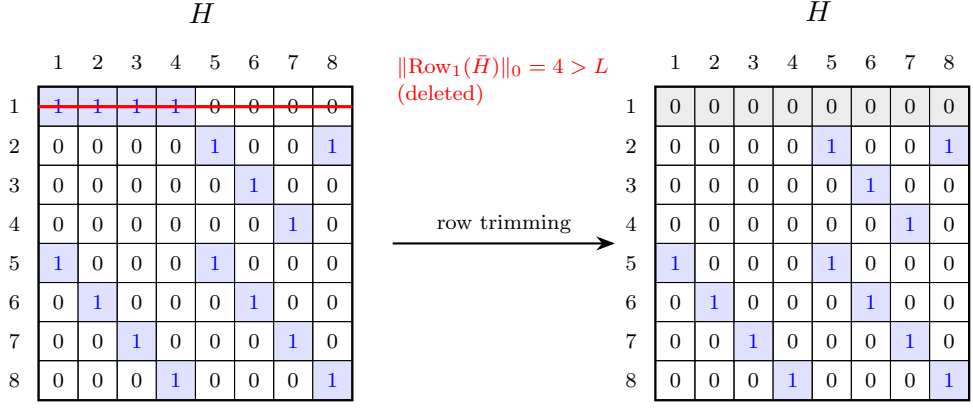

Finally, set $R_2 = H \odot \bar{R_2}$ where $\odot$ stands for the entry-wise (Hadamard) product, which was illustrated earlier in Figure \ref{fig:R2} in Section~\ref{sec:outline}. Formally, we have the following definition.

\begin{definition} \label{def:perturb}
    Let $V$ be an $n$-dimensional deterministic pattern matrix with parameters $\alpha, \beta, \gamma,\rho$. Let $\eta_{1,1} \etc \eta_{1,n}$ and $\eta_{2,1} \etc \eta_{2,n}$ be i.i.d. Rademacher random variables. Let
    \begin{align*}
        R_1=\frac{1}{\rho\sqrt{n}} \cdot D_1 V D_2
    \end{align*}
    where $D_1 = \diag(\eta_{1,1} \etc \eta_{1,n})$ and $D_2 = \diag(\eta_{2,1} \etc \eta_{2,n})$. Let $0<K<L$ be two positive integers. For $i\in [n]$, let $J_i$ be independent subsets of $[n]$ of size $K$ uniformly distributed among such subsets. Let $\bar H$ be the random matrix with entries $\bar{h}_{j,i}$  such that, for $i,j \in [n]$, we have
\[ \bar{h}_{j,i}= \mathbbm{1}_{J_i} (j) \]
We define the matrix $H$ by setting $\text{Row}_j(H) = \text{Row}_j(\bar{H})$ if the number of non-zero entries in $\text{Row}_j(\bar{H})$ does not exceed $L$ and setting $\text{Row}_j(H) = 0$ otherwise. Let $\{X_{l,i} : i \in [n], l \in [K]\}$ be i.i.d. Rademacher random variables, and enumerate the elements of each
random set $J_{i}, i\in [n]$:
\[ J_i = \{j(i,1),...,j(i,K)\} \text{ where } j(i,1) <j(i,2) <\ldots<j(i,K) \]
Let $\bar{R}_2$ be the random matrix with entries
\begin{equation*}
    [\bar{R}_2]_{j(i,l),i} = \begin{cases}
        \frac{1}{L} \cdot X_{l,i}, &\text{ for } l\in [K] \\
        0 & \text{ otherwise}
    \end{cases}
\end{equation*}
Let $R_2 = H \odot \bar{R_2}$. Then $(R_1,\{J_i\}_{i \in [n]},\overline{H},H,\{X_{i,l}\}_{i \in [n],l \in[L]},\overline{R}_2,R_2)$ is called a $n$-dimensional oblivious perturbation model with pattern matrix $V$ and parameters $\rho$, $K$, and $L$.
\end{definition}

The random perturbation constructed in the above manner has an $O(1)$ norm bound.

\begin{remark}[Norm of the perturbation]\label{rem:norm}
    Let $0<\alpha, \beta, \gamma <1$ and $\rho>0$. Let $V$ be an $n$ dimensional pattern matrix with parameters $\alpha, \beta, \gamma,\rho$. Let $K$ and $L$ with $L>K$ be positive integers. Let $(R_1,\{J_i\}_{i \in [n]},\overline{H},H,\{X_{i,l}\}_{i \in [n],l \in[L]},\overline{R}_2,R_2)$ be a $n$-dimensional oblivious perturbation model with pattern matrix $V$ and parameters $\rho$, $K$, and $L$. Then, $\norm{R_1 + R_2} \le 2$.
\end{remark}

\begin{proof}
    Note that the property $\norm{V}\le \rho\sqrt{n}$ leads to $\norm{R_1} \le 1$, and by the Schur test,
\[ \norm{R_2} \le \max_{j\in n} \cbr*{ \norm{\row_j(R_2)}_1 , \norm{\col_j(R_2)}_1 } \le 1 \]
Thus, we have $\norm{R_1+R_2} \le \norm{R_1} +  \norm{R_2} \le 2$.
\end{proof}

\section{Pattern Matrices} \label{sec:pattern}

In this section, we first give an example of a $\sqrt{n}$-sparse vector whose Walsh-Hadamard transform is also $\sqrt{n}$-sparse and then provide details of the randomized construction of pattern matrices mentioned in Section \ref{s:overview-R1}.

\begin{proposition} \label{prop:hadamardctrex}
    Let $n=2^{2k}$ for $k>0$, and let $H=H_n$ be the $n$-dimensional Walsh Hadamard matrix. Then, there exists $x \in  \R^n$ with $\supp(x) = 2^k = \sqrt{n}$ such that $\supp(Hx) = 2^k = \sqrt{n}$.
\end{proposition}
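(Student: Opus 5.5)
The plan is to use the tensor-product structure of the Walsh--Hadamard matrix. With the Sylvester convention, $H_{2^m} = H_2^{\otimes m}$, so for $n = 2^{2k}$ we have
\[
H_n = H_{2^k} \otimes H_{2^k}.
\]
We will index coordinates of $\R^n$ by pairs $(a,b)\in[2^k]\times[2^k]$, so that $\R^n \cong \R^{2^k}\otimes\R^{2^k}$. The mixed-product property then gives
\[
H_n (u\otimes v) = (H_{2^k}u)\otimes(H_{2^k}v)
\]
for all $u,v\in\R^{2^k}$.

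The natural choice is $x = e_1 \otimes \mathbf{1}$, where $e_1\in\R^{2^k}$ is the first coordinate vector and $\mathbf{1}\in\R^{2^k}$ is the all-ones vector. This vector has support of size $1\cdot 2^k = \sqrt n$. Next we compute the two factors:
\begin{itemize}
\item $H_{2^k}e_1$ is the first column of $H_{2^k}$, which is $\mathbf{1}$ (full support $2^k$);
\item $H_{2^k}\mathbf{1} = 2^k e_1$, since the rows of $H_{2^k}$ are mutually orthogonal $\pm1$ vectors, the first row is $\mathbf{1}$, and every other row is orthogonal to it and therefore sums to zero.
\end{itemize}
Hence $H_n x = \mathbf{1}\otimes 2^k e_1$, whose support also has size $2^k=\sqrt n$. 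This completes the argument.

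The only point needing care is the convention for the Walsh--Hadamard matrix. If the paper uses the natural (Sylvester) ordering, the Kronecker identity is exact. If it uses the sequency ordering or some other one, the matrix differs from the Sylvester one by row and column permutations $P H_{\mathrm{Syl}} Q$. In that case we take $x = Q^{-1}(e_1\otimes\mathbf{1})$, and supports are unchanged in size, so the statement holds for any standard convention. I expect no real obstacle beyond stating this convention; the rest is the two-line computation above.
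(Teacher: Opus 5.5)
Your proof is correct and is essentially the paper's argument: the paper indexes coordinates by $\mathbb{F}_2^k\times\mathbb{F}_2^k$ with entries $(-1)^{i_a\cdot j_a+i_b\cdot j_b}$, which is your factorization $H_n=H_{2^k}\otimes H_{2^k}$ written in coordinates. It takes $x_{(j_a,j_b)}=\mathbf{1}_{\{j_a=0\}}$ (your $e_1\otimes\mathbf{1}$) and finds $Hx$ supported exactly on $\{i_b=0\}$ (your $\mathbf{1}\otimes 2^k e_1$), while your remark on ordering conventions is a harmless addition the paper does not need.
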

\begin{proof}
    Let $n = 2^{2k}$ for $k>0$. Consider an $n \times n$ Walsh Hadamard matrix $H$ and index its rows and columns with tuples from $\F_2^{2k} \cong \F_2^{k} \times \F_2^{k} := \cbr*{ (a_1 \etc a_{k}, b_1 \etc b_{k}) \vert a_i, b_i \in \cbr*{0,1} }$. That is to say, the $(i,j)\textsuperscript{th}$ entry of $H$ is referenced as the $((i_a, i_b), (j_a, j_b))\textsuperscript{th}$ entry of $H$, where
    \begin{align*}
       (i_a, i_b) &= (i_{a_1} \etc i_{a_{k}}, i_{b_1} \etc i_{b_{k}}) \in \{0,1\}^{2k} \\
       (j_a, j_b) &= (j_{a_1} \etc j_{a_{k}}, j_{b_1} \etc j_{b_{k}}) \in \{0,1\}^{2k}
    \end{align*}
    are the binary representations of $i$ and $j$ in $2k$ bits. One can then check that $H_{(i,j)} = H_{((i_a, i_b), (j_a, j_b))} = (-1)^{i_a \cdot j_a + i_b \cdot j_b}$ where $\cdot$ represents the usual component wise inner product of tuples. 

    Now consider the vector $x = (x_1 \etc x_n) \in \R^n$ given by
    \begin{align*}
        x_j = x_{(j_a, j_b)} = \begin{cases}
            1 \quad \text{if } j_a = (0 \etc 0) \\
            0 \quad \text{otherwise}
        \end{cases}
    \end{align*}
    $x$ is only supported on $2^k$ coordinates, so is $ \sqrt{n}$-sparse, and,
    \begin{align*}
        (Hx)_i &= (Hx)_{(i_a, i_b)} \\
        &= \sum_{j_a, j_b} (-1)^{i_a \cdot j_a + i_b \cdot j_b} \ind_{ j_a = (0 \etc 0)} \\
        &= \sum_{j_b} (-1)^{i_b \cdot j_b}
    \end{align*}
    Observe that the sum is non-zero if and only if $i_b = (0 \etc 0)$. Thus, $Hx$ is supported on $2^k$ coordinates as well.
\end{proof}

The following proposition proves that the model for pattern matrices described in Section \ref{s:overview-R1} results in matrices that satisfy the relevant properties (as in Definition \ref{def:pattern}) with high probability.

\begin{proposition} \label{prop:pattern}
Let $V_1$ be a matrix with Rademacher entries that are $2q$-wise independent where $q=\log n$. Let $W$ be a vector with 4-wise independent Rademacher coordinates. Let $V_2$ and $V_3$ be
a random matrices whose rows are independent copies of $W$. Set, 
\[ V = V_1 \odot V_2 \odot V_3^T\]
where we recall that $\odot$ stands for the entry-wise (Hadamard) product and $V_1, V_2, V_3$ are independent of each other.
    With high probability, $V$ satisfies the following properties,
    \begin{enumerate}
        \item There exists $c_{\ref{prop:pattern}.1}>0$ such that $\norm{V}\le c_{\ref{prop:pattern}.1}\sqrt{n}$,
        \item There exists $c_{\ref{prop:pattern}.2}, c_{\ref{prop:pattern}.3}, c_{\ref{prop:pattern}.4}>0$ with $c_{\ref{prop:pattern}.2}, c_{\ref{prop:pattern}.4}<1$ such that for any $_1 \etc \eta_n \in \{ -1, 1 \}$, and for any $\alpha n$-sparse vector $x \in \bS^{n-1}$ with $\alpha \le c_{\ref{prop:pattern}.2}$, we have 
        \[ \abs{ \{ j \in [n] : \abs{e_j^T V\diag(\eta_{1} \etc \eta_{n})x} \ge c_{\ref{prop:pattern}.3} \} } \ge c_{\ref{prop:pattern}.4} n \]
        and
        \[ \abs{ \{ j \in [n] : \abs{e_j^T V^T\diag(\eta_{1} \etc \eta_{n})x} \ge c_{\ref{prop:pattern}.3} \} } \ge c_{\ref{prop:pattern}.4} n \]
    \end{enumerate}
\end{proposition}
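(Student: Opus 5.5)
The plan is to prove the two properties separately, each time conditioning on two of the three factors and using the randomness of the third.

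\emph{Norm bound (property 1).} Condition on $V_2,V_3$. Then $S:=V_2\odot V_3^T$ is a fixed $\pm1$ matrix, and $V=V_1\odot S$ again has $\pm1$ entries that are $2q$-wise independent Rademacher, since multiplying by fixed signs preserves this. Take $q=\log n$ and expand
\[\E\Tr\big((VV^T)^q\big)=\sum \E\big(v_{i_1j_1}v_{i_2j_1}v_{i_2j_2}\cdots v_{i_1j_q}\big),\]
where the sum runs over closed walks of length $2q$. Each term involves at most $2q$ entries, so its expectation equals the one for a fully independent Rademacher matrix. The standard moment computation for i.i.d. sign matrices then gives $\E\Tr((VV^T)^q)\le n\,(Cn)^q$; I would cite the usual combinatorial count of even closed walks, valid for $q\le \log n$. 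By Markov's inequality, with probability $1-o(1)$,
\[\norm{V}\le \big(n^2 (Cn)^q\big)^{1/2q}\le c_{\ref{prop:pattern}.1}\sqrt n .\]
Since this holds for every conditioning, it holds unconditionally.

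\emph{Spreading (property 2), single vector.} First I would remove the quantifier over $\eta$: for a fixed support, $y=\diag(\eta_1\etc\eta_n)x$ ranges over the same set of $\alpha n$-sparse unit vectors as $x$. So it suffices to show that every $\alpha n$-sparse unit $y$ has many large coordinates of $Vy$ and of $V^Ty$. Condition on $V_1,V_3$. Then $(Vy)_j=\sum_k w_{jk}c_{jk}y_k$, with fixed signs $c_{jk}$ and with $w_j$ the $j$th row of $V_2$. Rows are independent, and within a row the entries are 4-wise independent. Therefore $\E(Vy)_j^2=1$ and $\E(Vy)_j^4\le 3$. Paley--Zygmund gives
\[\P\big(|(Vy)_j|\ge 1/2\big)\ge (3/4)^2/3=3/16 .\]
The coordinates are independent, so a Chernoff bound shows that at least $3n/32$ coordinates satisfy $|(Vy)_j|\ge1/2$, except with probability $e^{-c_0 n}$, where $c_0$ is an absolute constant.

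\emph{Union bound over a net.} Let $\mathcal N$ be a $\mu$-net of the $\alpha n$-sparse unit vectors, with
\[|\mathcal N|\le \binom{n}{\alpha n}(3/\mu)^{\alpha n}\le \exp\big(\alpha n\log(3e/(\alpha\mu))\big).\]
On the event $\norm V\le c_{\ref{prop:pattern}.1}\sqrt n$, take any sparse $y$ and a net point $y'$ with the same support and $\norm{y-y'}_2\le\mu$. Then $\norm{V(y-y')}_2\le c_{\ref{prop:pattern}.1}\mu\sqrt n$, so $|(V(y-y'))_j|\ge 1/4$ for at most $16c_{\ref{prop:pattern}.1}^2\mu^2 n$ indices $j$. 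Choose $\mu$ so that this count is at most $3n/64$. Then every sparse $y$ has at least $3n/64$ coordinates with $|(Vy)_j|\ge 1/4$. So the claim holds with $c_{\ref{prop:pattern}.3}=1/4$ and $c_{\ref{prop:pattern}.4}=3/64$.

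The union bound over $\mathcal N$ costs $\exp(\alpha n\log(3e/(\alpha\mu)) - c_0 n)$. With $\mu$ fixed, this is $e^{-c n}$ once $\alpha\le c_{\ref{prop:pattern}.2}$ is small enough that $\alpha\log(3e/(\alpha\mu))\le c_0/2$. This bound holds conditionally on $V_1,V_3$, hence unconditionally. Intersecting with the norm event finishes the $V$ part. The $V^T$ part is identical, conditioning on $V_1,V_2$ and using the independent columns of $V_3^T$ (the rows of $V_3$).

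\emph{Expected obstacle.} The main obstacle is the norm bound. One has to check that the moment method genuinely needs only $2q$-wise independence, and that the walk-counting bound $\E\Tr((VV^T)^q)\le n(Cn)^q$ holds with an absolute $C$ at $q=\log n$. The rest is a routine Paley--Zygmund, Chernoff and net argument.
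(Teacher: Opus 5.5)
Your proposal is correct and follows essentially the same route as the paper. Both arguments condition on two of the three factors: the norm is bounded by the moment method using $2q$-wise independence, spreading uses Paley--Zygmund with 4-wise independence plus a Chernoff/Hoeffding bound over independent rows, and a net over sparse vectors is combined with the norm bound, with the same argument run for $V^T$ via $V_3$. The one small difference is in handling the quantifier over $\eta$. You observe that $\diag(\eta_1,\ldots,\eta_n)x$ is again an $\alpha n$-sparse unit vector, whereas the paper takes a $2^{\alpha n}$ union bound over sign patterns; your observation is cleaner and makes that extra union bound unnecessary.
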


\begin{proof}
    We first look at the norm. By the standard trace moment method, 
\begin{align*}
    \P \paren*{ \norm{V} \ge t} &\le \frac{\E \norm{V}^{2q} }{t^{2q}} \\
    &\le \frac{\E \sqbr*{\Tr \paren*{V^TV}^{q}} }{t^{2q}} \\
    &\le \frac{\E \sqbr*{ \E \sqbr*{\Tr \paren*{V^TV}^{q} \vert V_2, V_3} } }{t^{2q}}
\end{align*}
Now, conditioned on $V_2$ and $V_3$, $V$ is a random matrix with $\pm 1$ entries that are $2q$ wise independent. Since the expression for $\Tr \paren*{V^TV}^{q}$ contains only products of order $2q$, the distribution of $\Tr \paren*{V^TV}^{q}$ remains the same if the entries of $V$ are assumed to be fully independent. Thus it suffices to analyze $\E \sqbr*{\Tr \paren*{V^TV}^{q}} $ for a matrix $V$ with independent $\pm 1$ entries. By \cite{matmom}, there exists $c_1$ and $c_2$ such that for $q = c_1\log n$, we have $\E \sqbr*{\Tr \paren*{V^TV}^{q}} \le c_2n^q$, and so we have $\norm{V} \le c_2\sqrt{n}$ with high probability.

Next, we consider the number of ``good" rows. Fix $x \in \bS^{n-1}$, $\eta_1 \etc \eta_n \in \{ -1, 1 \}$ and $j \in [n]$, and let $X_j = X_j(\eta_1 \etc \eta_n, \beta, x)$ be the indicator random variable for the event that $\abs{e_j^T VD_{\eta}x} < \beta$, for some $\beta>0$ where $D_{\eta} = \diag(\eta_{1} \etc \eta_{n})$. Since, $\E \sqbr{X_j} = \E \sqbr{\E \sqbr{X_j \vert V_1, V_3}}$, it is enough to look at $\E \sqbr{X_j}$ when the rows of $V$ are independent copies of $W$. Let $Z_j = \paren*{e_j^T VD_{\eta}x}^2$. Then, $\E Z_j = 1$. By Paley-Zygmund inequality,
\[ \Pb \paren*{Z_j > \theta} \ge \frac{(1-\theta)^2}{\E Z_j^2} \]
Since the expression for $Z_j^2$ only has products of order 4 in the entries of $V$, and the rows of $V$ are assumed to be 4-wise independent, $\E Z_j^2$ remains unchanged if we assume that the rows of $V$ have independent $\pm 1$ entries. In, this case, by using Hoeffding's inequality and sub-gaussianity, we have that $1 \le \E Z_j^2 \le c_3$. Thus, there exist $c_4$ and $c_5$ such that $\Pb \paren*{\abs{e_j^T VD_{\eta}x}< c_4} = \E X_j(\eta_1 \etc \eta_n, c_4, x) \le c_5 < 1$. By Hoeffding's inequality, 
we can find $c_6<1$ and $c_7$ so that, 
\begin{align*}
    \Pb \paren*{ \sum_{j=1}^n X_j(\eta_1 \etc \eta_n, c_4, x) \ge c_6 n} &\le \exp (-c_7 n) \\
    \implies \Pb \paren*{ | \{j \in [n] : |e_j^TVD_{\eta}x|\le c_4 \}|\ge c_6 n } &\le \exp(-c_7n)
\end{align*}

If we further assume that $x$ is $\alpha n$-sparse, then $e_j^TVD_{\eta}x$ depends on only those $\eta_i$ for which $i$ is in the support of $x$. Thus, using the union bound, 
\begin{align*}
    \Pb \paren*{ \exists \eta \in \{-1,1 \}^n : \vert \{j \in [n] : |e_j^TVD_{\eta}x|\le c_4 \}|\ge c_6 n } &\le 2^{\alpha n} \exp(-c_7n) \\
    &\le \exp(-c_7n/2)
\end{align*}
when $\alpha < c_7/2\log 2 =:\alpha_0$. 

Let $\cN$ be a $\nu$-net of $\alpha n$-sparse vectors for the set of all $\alpha n$-sparse unit vectors. This means that for every $\alpha n$-sparse $y \in \bS^{n-1}$, there exists an $\alpha n$-sparse $x \in \cN$ such that $\norm{x-y}_2 \le \nu$.

When $\log \abs{\cN} \le c_7n/4$ (we shall later show that this is the case when $\alpha$ is small enough), by a union bound over $\cN$, we get
\begin{align*}
    \Pb \paren*{ \exists x \in \cN, \eta \in \{-1,1 \}^n : \lvert \{j \in [n] : |e_j^TVD_{\eta}x|\le c_4 \}\rvert \ge c_6 n } &\le \exp(-c_7n/4)
\end{align*}
Define the event $\cE := \{ \lvert \{j \in [n] : |e_j^TVD_{\eta}x|\ge c_4 \}\rvert \ge (1-c_6)n \quad \forall x \in \cN, \eta \in \{-1,1 \}^n  \}$. We have shown that $\Pb(\cE) \ge 1-\exp(-c_7n/4)$.

Now let $y$ be an arbitrary $\alpha n$-sparse vector. Then there is $x \in \cN$ such that $\norm{x-y} \le \nu$. Conditioning on the event that $\norm{V} \le c_2 \sqrt{n}$, we have $\norm{VD_{\eta}(x-y)} \le c_2 \nu \sqrt{n}$ for all $\eta$. Then, 
\[ \lvert \{j \in [n] : |e_j^TVD_{\eta}(x-y)| \ge \lambda c_2 \nu  \}\rvert \le \frac{n}{\lambda^2} \]

Let $\lambda$ be such that $1/\lambda^2 = (1-c_6)/2$. Once $\lambda$ has been chosen, let $\nu$ be chosen such that $\lambda c_2 \nu = c_4/2$. Then, conditioned on $\cE$ and on the event that $\norm{V} \le c_2 \sqrt{n}$, we have, for all $\eta \in \{-1,1 \}^n$,
\begin{align*}
    \lvert \{j \in [n] : |e_j^TVD_{\eta}x|\ge c_4 \text{ and } |e_j^TVD_{\eta}(x-y)| \le c_4/2  \}\rvert \ge (1-c_6)n/2 \\
    \implies \lvert \{j \in [n] : |e_j^TVD_{\eta}y|\ge c_4/2  \}\rvert \ge (1-c_6)n/2
\end{align*}

We now show that $\log \abs{\cN} \le c_7/4$ when $\alpha$ is small enough. From \cite[Corollary 4.2.11]{vershynin2018high} we have,
\begin{align*}
    \abs{\cN} &\le \binom{n}{\alpha n} \paren*{\frac{3}{\nu}}^{\alpha n} \\
    &\le \paren*{\frac{3e}{\nu\alpha}}^{\alpha n}
\end{align*}

When $\alpha < \min \{ \nu, 1/3e \}$,
\begin{align*}
    \log \abs{\cN} &\le \alpha n \log \paren*{\frac{3e}{\nu\alpha}} \\
    &\le \alpha n \log(3e) + \alpha n \log \paren*{\frac{1}{\nu}} + \alpha n \log \paren*{\frac{1}{\alpha}} \\
    &\le 3 \alpha n \log \paren*{\frac{1}{\alpha}}
\end{align*}
We further let $\alpha$ be small enough that $3 \alpha  \log \paren*{\frac{1}{\alpha}} \le c_7/4$.

We can repeat the same arguments by looking at $V^T$ with $V_3$ in place of $V_2$, to get the third property.

\end{proof}

\begin{corollary}\label{cor:randomnessV}
    For any $\rho \ge c_{\ref{prop:pattern}.1}$, $0<\alpha \le c_{\ref{prop:pattern}.2}$, $0<\beta \le c_{\ref{prop:pattern}.3}$, and $0<\gamma \le c_{\ref{prop:pattern}.4}$, there exists an $n$-dimensional pattern matrix $V$ with parameters $\alpha, \beta, \gamma,\rho$. Moreover, this pattern matrix can be constructed using $O((\log n)^2)$ random bits.
\end{corollary}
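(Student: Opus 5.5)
The corollary has two parts: the existence of a pattern matrix with the stated parameters, and the count of random bits needed to generate it. I would prove them separately.

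\textbf{Existence.} Fix $\rho \ge c_{\ref{prop:pattern}.1}$, $\alpha \le c_{\ref{prop:pattern}.2}$, $\beta \le c_{\ref{prop:pattern}.3}$, $\gamma \le c_{\ref{prop:pattern}.4}$. Proposition~\ref{prop:pattern} says the random matrix $V = V_1 \odot V_2 \odot V_3^T$ satisfies properties (1) and (2) with positive probability (indeed with high probability). So some realization $V$ satisfies them.

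For that realization, $V$ has $\pm 1$ entries, since it is a Hadamard product of $\pm 1$ matrices. It satisfies $\norm{V} \le c_{\ref{prop:pattern}.1}\sqrt n \le \rho\sqrt n$. Every $\alpha n$-sparse unit vector is also $c_{\ref{prop:pattern}.2} n$-sparse, so for every sign pattern $\eta$ at least $c_{\ref{prop:pattern}.4} n \ge \gamma n$ coordinates of $VD_\eta x$ and of $V^TD_\eta x$ have absolute value at least $c_{\ref{prop:pattern}.3} \ge \beta$. This is exactly Definition~\ref{def:pattern}. The only point to check is that every condition in the definition is monotone in the right direction: a larger $\rho$, or smaller $\alpha,\beta,\gamma$, only weakens it.

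\textbf{Randomness count.} I would count the bits for each factor.
\begin{itemize}
\item $V_1$ is a family of $N=n^2$ Rademacher variables that are $2q$-wise independent, with $q = \log n$. Use the standard construction: evaluate a uniformly random polynomial of degree $<2q$ over $\mathrm{GF}(2^m)$, with $2^m \ge n^2$, at $n^2$ distinct field points, and output one bit of each value. The values are $2q$-wise independent and uniform. The seed is $2q$ field elements, i.e.\ $O(q\log n) = O((\log n)^2)$ bits.
\item A single $4$-wise independent vector $W\in\{\pm1\}^n$ costs $O(\log n)$ bits by the same construction with degree $<4$.
\end{itemize}
So generating $V_1$ costs $O((\log n)^2)$ bits, and, if the pieces are taken one at a time, each row of $V_2$ and each column of $V_3$ costs $O(\log n)$ bits.

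\textbf{Main obstacle.} The problem is reconciling the independence of the rows of $V_2$ (and of the columns of $V_3$) with the stated bound. Proposition~\ref{prop:pattern} uses that independence in the Hoeffding step: it needs an $\exp(-cn)$ tail so that one can union bound over the net and over $2^{\alpha n}$ sign patterns. With fully independent rows, the naive count is $n$ seeds of $O(\log n)$ bits each for $V_2$, and likewise for $V_3$, giving $O(n\log n)$ bits in total, dominated by $V_2,V_3$ and not by $V_1$. That bound would still fit within the $O(n\log n)$ budget of Theorem~\ref{coro:pertur}.

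To reach the claimed $O((\log n)^2)$, I would first try to show that the Hoeffding step can be replaced by a moment bound that needs only $O(\log n)$-wise (or otherwise limited) independence across rows. Such a bound still has to beat the $\exp(\alpha n\log(1/\alpha))$ size of the union bound, which limited independence does not obviously give. If that fails, I would fall back on the observation that only existence is needed: the pattern matrix is deterministic in the final model. The $O((\log n)^2)$ count then has to be read as the seed length for $V_1$ alone, with $V_2,V_3$ accounted for separately. I expect this accounting to be the main point needing care. The existence part is immediate.
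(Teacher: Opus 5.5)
Your existence argument is the paper's, including the monotonicity check in $\rho,\alpha,\beta,\gamma$; the paper treats existence as immediate from Proposition~\ref{prop:pattern}. Your seed count for $V_1$ also matches: a random polynomial of degree $<2q$ over a field of size $\ge n^2$ costs $O(q\log n)=O((\log n)^2)$ bits, which is exactly the Vadhan construction the paper cites.

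The obstacle you isolate is genuine, and the paper's proof does not resolve it. The paper accounts for $V_2,V_3$ as ``$O(n)$ many 4-wise independent signs,'' which prices them as a single $O(\log n)$-bit seed. That contradicts Proposition~\ref{prop:pattern}, whose Hoeffding step needs the indicators $X_j$ to be independent across $j$ given $V_1,V_3$. This requires the rows of $V_2$ (and, for the $V^T$ part, the columns of $V_3$) to be mutually independent.

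Mutually independent rows, each containing a uniform sign, already force a seed of at least $n$ bits, since entropies add. The natural construction uses $\Theta(n\log n)$ bits. So the paper's argument actually supports an $O(n\log n)$ bound, i.e.\ $V$ is specified by $O(n)$ field elements of $O(\log n)$ bits each.

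This is harmless downstream. In the proof of Theorem~\ref{coro:pertur}, the count for $R_1$ becomes $O(n\log n)$ instead of $O(n)$. That is still dominated by the $O(\delta^{-2}n\log n)$ bits for $R_2$, and the linear-storage claim behind Theorem~\ref{t:main} survives.

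Two adjustments to your write-up follow. First, conclude by correcting the bound to $O(n\log n)$. Reading ``$O((\log n)^2)$'' as the seed length of $V_1$ alone is a reinterpretation, not a proof of the stated claim.

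Second, drop the plan of replacing Hoeffding by a limited-independence moment bound: no seed of sublinear length can support an argument of this type. The union bound in Proposition~\ref{prop:pattern} runs over $e^{\Theta(n)}$ pairs $(x,\eta)$. If $V$ is a function of a uniform $s$-bit seed, every nonempty bad event has probability at least $2^{-s}$. The union bound can then close only if all but fewer than $2^{s}$ of those events are empty, which no concentration inequality delivers. A genuine $O((\log n)^2)$-bit construction would need an essentially different certificate for property (2) of Definition~\ref{def:pattern}.
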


\begin{proof}
    We only need to justify the claim about random bit complexity. Considering the construction of $V$, we need to generate $n^2$ many $\log(n)$ wise independent signs and $O(n)$ many 4-wise independent signs.
    
    The claim follows from \cite[Defintion 3.31, Construction 3.32]{vadhan2012pseudorandomness}, by noting that picking a hash function $h$ in their construction can be done by choosing $k$ independent unformly distributed random variables in a finite field of size bigger than $n$ ($n$ and $k$ in this sentence are as in \cite[Defintion 3.31, Construction 3.32]{vadhan2012pseudorandomness}) which requires $O(k \log(n))$ random bits. 
\end{proof}
\section{Infimum over compressible vectors} \label{sec:comp}

This section is devoted to proving the following proposition, which establishes a lower bound for the infimum of $\norm{Mx}_2$ over compressible vectors. 

\begin{theorem} \label{prop:comp}
    Let $A$ be an $n \times n$ deterministic matrix such that $\norm{A}= 1$. Let $0<\e<1$. Let $0<\alpha, \beta, \gamma <1$ and $\rho>0$. Let $V$ be an $n$ dimensional pattern matrix with parameters $\alpha, \beta, \gamma,\rho$. Let $K$ and $L$ with $L>K$ be positive integers. Let $(R_1,\{J_i\}_{i \in [n]},\overline{H},H,\{X_{i,l}\}_{i \in [n],l \in[L]},\overline{R}_2,R_2)$ be a $n$-dimensional oblivious perturbation model with pattern matrix $V$ and parameters $\rho$, $K$, and $L$. Let $M= A+ \varepsilon (R_1+R_2)$. Then exist constants $c_{\ref{prop:comp}.1}, c_{\ref{prop:comp}.2}, c_{\ref{prop:comp}.3}, c_{\ref{prop:comp}.4} > 0$ such that, for $\alpha < c_{\ref{prop:comp}.1}$ and $\nu = c_{\ref{prop:comp}.2} \varepsilon$, we have 
    \[ \Pb \paren*{ \inf_{v \in \comp(\alpha, \nu)} \norm{Mv}_2  \ge { c_{\ref{prop:comp}.3} \varepsilon } } \ge 1-\exp \paren*{-c_{\ref{prop:comp}.4}n} \]
\end{theorem}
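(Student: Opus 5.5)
We condition on $R_2$ and on $D_2$, and treat $A_1 := A + \varepsilon R_2 + \frac{\varepsilon}{\rho\sqrt n} V D_2$ restricted appropriately as deterministic. More precisely, write
\[ Mx = A_1' x + \frac{\varepsilon}{\rho\sqrt{n}} D_1 (V D_2 x), \qquad A_1' := A+\varepsilon R_2 . \]
Once $R_2$ and $D_2$ are fixed, the only randomness left is $D_1$, whose signs $\eta_{1,i}$ are independent. Note that $\|A_1'\|\le 2$ by Remark~\ref{rem:norm}, and $\|M\|\le 3$. The plan is the standard route: a single-vector estimate, tensorization, a union bound over a net of sparse vectors, and then approximation to pass to $\comp(\alpha,\nu)$.

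\emph{Single vector.} Fix an $\alpha n$-sparse $x\in\bS^{n-1}$. By Definition~\ref{def:pattern}(2), applied with $\eta=\eta_2$ (it holds for every sign pattern, so conditioning on $D_2$ is harmless), the set
\[ G=\{i:|(VD_2x)_i|\ge\beta\} \]
has $|G|\ge\gamma n$. For $i\in G$ we have $(Mx)_i=a_i+\eta_{1,i} b_i$, where $a_i$ is fixed and $|b_i|\ge \varepsilon\beta/(\rho\sqrt n)$. Since $a_i+b_i$ and $a_i-b_i$ are $2|b_i|$ apart, at least one of them has absolute value at least $|b_i|$. Hence
\[ \Pb\bigl(|(Mx)_i|<\varepsilon\beta/(\rho\sqrt n)\bigr)\le 1/2, \]
independently over $i\in G$. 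By Chernoff (or Lemma~\ref{lem:tensorization}), with probability at least $1-e^{-c\gamma n}$ at least $\gamma n/4$ of these coordinates exceed $\varepsilon\beta/(\rho\sqrt n)$. On that event
\[ \|Mx\|_2\ge \frac{\varepsilon\beta\sqrt{\gamma}}{2\rho}=:2c_0\varepsilon . \]
Since $\beta,\gamma,\rho$ are constants coming from Corollary~\ref{cor:randomnessV}, this bound is of order $\varepsilon$.

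\emph{Net and union bound.} Take a $\nu'$-net $\cN$ of the $\alpha n$-sparse unit vectors, consisting of sparse vectors, with
\[ |\cN|\le\Bigl(\frac{3e}{\nu'\alpha}\Bigr)^{\alpha n}. \]
Choose $\nu' = c_0\varepsilon/6$. Then
\[ \log|\cN|\le \alpha n\log\frac{C}{\varepsilon\alpha}, \]
and we need this to be at most $c\gamma n/2$. \textbf{This is the main obstacle:} it forces $\alpha\lesssim \gamma/\log(1/\varepsilon)$, which makes $\alpha$ depend mildly on $\varepsilon$, whereas the statement wants $\alpha<c_{\ref{prop:comp}.1}$ with an absolute constant. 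I would try to avoid the $\varepsilon$-dependence by noting that in the pattern property the lower bound $\beta$ is stable. Indeed, $\|V D_2(x-y)\|\le \rho\sqrt n\,\|x-y\|$, so a constant-scale net (with $\nu'$ absolute) already preserves $\gamma n/2$ coordinates with $|(VD_2 y)_i|\ge\beta/2$ for every sparse $y$.

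The cleaner route is then the following. Define $G_y=\{i: |(VD_2y)_i|\ge\beta\}$, which is deterministic given $D_2$ and is large for every $y$. For each $y$, the map $D_1\mapsto\|My\|$ is the distance-type quantity above, and we need a union bound only over which sign pattern works. I would therefore apply the single-vector bound on a net at the scale $\nu'=c_0\varepsilon/6$, but restricted to the support structure. If that still costs a $\log(1/\varepsilon)$ factor, I would accept $\alpha$ depending on $\varepsilon$ logarithmically, or absorb it by taking $c_{\ref{prop:comp}.4}$ small. Assuming the counting works, the union bound gives, with probability $1-e^{-cn}$,
\[ \|Mx\|\ge 2c_0\varepsilon \quad\text{for all } x\in\cN . \]

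\emph{Extension.} For sparse $y$, pick $x\in\cN$ with $\|x-y\|\le\nu'$. Then
\[ \|My\|\ge 2c_0\varepsilon-3\nu'\ge 1.5c_0\varepsilon . \]
For $v\in\comp(\alpha,\nu)$, choose a sparse unit $y$ with $\|v-y\|\le\nu$. Then
\[ \|Mv\|\ge 1.5c_0\varepsilon-3\nu\ge c_0\varepsilon \]
once $\nu=c_{\ref{prop:comp}.2}\varepsilon$ with $c_{\ref{prop:comp}.2}=c_0/6$. Finally, integrating out the conditioning on $R_2$ and $D_2$ yields the claim with $c_{\ref{prop:comp}.3}=c_0$.
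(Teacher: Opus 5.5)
Your argument follows the paper's proof step for step: condition on $R_2,D_2$; use the pattern property and the fact that $a_i\pm b_i$ cannot both be small; tensorize; take a union bound over a net of mesh $\asymp\varepsilon$; approximate. The gap is exactly the one you flagged, and neither of your remedies closes it.

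A constant-mesh net buys nothing, because the pattern property already holds for every sparse $y$ by definition. What must pass from net points to all sparse $y$ is the lower bound on $\|My\|_2$. Since the shift $A+\varepsilon R_2$ is arbitrary with norm up to $2$, $\|M(x-y)\|_2$ can be of order $\|x-y\|_2$, which forces the mesh to be $O(\varepsilon)$.

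Shrinking $c_{\ref{prop:comp}.4}$ or the threshold does not help either. Conditionally on $R_2,D_2$, the only randomness is the $2^n$ sign patterns of $D_1$, and a shift can cancel one of them exactly. So no single-vector bound that is uniform over shifts can beat $2^{-n}$ at any threshold, while $\log|\cN|\approx\alpha n\log\frac{C}{\varepsilon\alpha}$. As written, you prove the statement only for $\alpha\lesssim 1/\log(1/\varepsilon)$, not for $\alpha$ below an absolute constant.

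The paper's proof has the same hidden dependence. It cites the count from Proposition~\ref{prop:pattern}, but the bound $\log|\cN|\le 3\alpha n\log(1/\alpha)$ was derived there under $\alpha<\min\{\nu,1/3e\}$, and here $\nu=c_{\ref{prop:comp}.2}\varepsilon$. An $\varepsilon$-dependent $\alpha$ would then propagate into $K$, the threshold on $n$, and the bit count of Theorem~\ref{coro:pertur}.

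The absolute-$\alpha$ version can be recovered by taking $\varepsilon$ out of the net argument. For a support $S$ with $|S|\le\alpha n$, let $E_S=\mathrm{span}\{(A+\varepsilon R_2)e_j: j\in S\}$; this is fixed after conditioning and has dimension at most $\alpha n$. For $y$ supported on $S$, the shift $(A+\varepsilon R_2)y$ lies in $E_S$, so
\[
\|My\|_2\ \ge\ \varepsilon\,\bigl\|P_{E_S^\perp}D_1Wy\bigr\|_2,\qquad W=\tfrac{1}{\rho\sqrt n}VD_2,\quad \|W\|\le 1.
\]
The right-hand side is $\varepsilon$ times a $1$-Lipschitz function of $y$ that involves neither $A$ nor $\varepsilon$. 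A constant-mesh net on each $S$ then suffices, and $\binom{n}{\alpha n}C^{\alpha n}e^{-c'n}$ is small for $\alpha$ below an absolute constant.

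For the single-vector bound, put $w=Wy$. Since $\|w\|_2\le 1$, at least $\gamma n/2$ of the (at least $\gamma n$) coordinates with $|w_i|\ge\beta/(\rho\sqrt n)$ also satisfy $|w_i|\le\sqrt{2/(\gamma n)}$; call this set $I$. Applying the coordinate projection $P_I$ can only decrease the distance to $E_S$. Let $P'$ be the projection onto $(P_IE_S)^\perp$ in $\R^I$, and let $D_{1,I}$ be $D_1$ restricted to $I$. Then
\[
\E\|P'D_{1,I}w_I\|_2^2=\sum_{i\in I}P'_{ii}w_i^2\ge\frac{\beta^2}{\rho^2 n}\bigl(|I|-\alpha n\bigr).
\]
This function of $(\eta_{1,i})_{i\in I}$ is convex and $\sqrt{2/(\gamma n)}$-Lipschitz, so Talagrand's inequality gives $\Pb\bigl(\|P_{E_S^\perp}D_1w\|_2<c\bigr)\le e^{-c'n}$ once $\alpha\le\gamma/4$.
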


\begin{proof}
 To analyze $\inf_{v \in \comp(\alpha, \nu)} \norm{Mv}_2 $, we first pass to a $\nu$-net of $\alpha n$-sparse vectors. Let $\cN$ be a $2\nu$-net of $\alpha n$-sparse vectors for $\comp(\alpha, \nu)$. This means that for every $v \in \comp(\alpha, \nu)$, there exists an $\alpha n$-sparse $x \in \bS^{n-1}$ such that $\norm{x-v}_2 \le 2\nu$. Such a net can be obtained simply by considering a $\nu$-net of $\alpha n$-sparse vectors for the set of all $\alpha n$-sparse vectors and using the fact that each compressible vector is at most $\nu$ distance away from an $\alpha n$-sparse vector. Thus,
\begin{align*}
    \norm{Mv} &\ge \norm{Mx} - \norm{M}\norm{x-v}_2 \\
    \text{or, } \norm{Mv} &\ge \norm{Mx} - 6\nu 
\end{align*}
using the fact that $\norm{M} \le 3$ and $\norm{x-v}_2 \le 2\nu$. Thus, 
\[ \inf_{v \in \comp(\alpha, \nu)} \norm{Mv}_2 \ge \inf_{x \in \cN} \norm{Mx}_2 - 6\nu  \]

Now, fix an $x \in \cN$. Condition on $D_{2}$ and $R_2$ and let $A_1 = A + \varepsilon R_2$. Then, $Mx= A_1x + \frac{\varepsilon}{\rho \sqrt{n}} \cdot D_1 V D_{2}x $. Let $J(x) = \{j \in [n] : |\row_j(V)D_{2}x|\ge \beta|\}$. By the definition of a pattern matrix, we have $\abs{J(x)} \ge \gamma n$. Then, for $j \in J(x)$, we have, 

\[ \P \paren*{ |(Mx)_j| \ge \frac{\varepsilon \beta}{\rho\sqrt{n}} \bigg \vert R_2, D_{2} } = \P \paren*{ \abs*{(A_1x)_j + \frac{\varepsilon}{\rho\sqrt{n}} \cdot \eta_{1,j} \row_j(V)x } \ge \frac{\varepsilon \beta}{\rho\sqrt{n}} \bigg\vert R_2, D_{2}} \ge 1/2 \]
since both $(A_1x)_j + \frac{\varepsilon}{\rho\sqrt{n}} \cdot  \row_j(V)x $ and $(A_1x)_j - \frac{\varepsilon}{\rho\sqrt{n}} \cdot  \row_j(V)x $ cannot be within $\frac{\varepsilon \beta}{\rho\sqrt{n}}$ of 0.

Since the coordinates of $Mx$ are independent after conditioning on  $D_{2}$ and $R_2$, we can use the following tensorization lemma,

\begin{lemma}[\cite{rudelson2008littlewood}, Lemma 2.2 (2)]\label{lem:tensorization}
    Let $X_1 \etc X_N$ be independent random variables satisfying $$\P \paren*{\abs{X_i} \ge 1} \ge c_{\ref{lem:tensorization}.1}$$ for some $c_{\ref{lem:tensorization}.1} \in (0,1)$. Then, there exists constants $c_{\ref{lem:tensorization}.2}, c_{\ref{lem:tensorization}.3} > 0$, depending on $c_{\ref{lem:tensorization}.1}$, so that for the random vector $X := (X_1 \etc X_N)$, we have, 
    \[ \P \paren*{ \norm{X}_2 \le c_{\ref{lem:tensorization}.2}\sqrt{N} } \le e^{-c_{\ref{lem:tensorization}.3}N} \]
\end{lemma}

Applying Lemma \ref{lem:tensorization} with $X_j = \frac{\rho\sqrt{n}}{\varepsilon \beta} (Mx)_j$ for $j \in J(x)$, and observing that \[\norm{Mx} \ge \sqrt{\sum_{j \in J(x)} (Mx)_j^2},\] we get,
\begin{align*}
\P \paren*{ \norm{Mx}_2 \le c_{\ref{lem:tensorization}.2} \frac{\beta \varepsilon }{\rho\sqrt{n}} \sqrt{\abs{J(x)}} } &\le e^{-c_{\ref{lem:tensorization}.3}\abs{J(x)}} \\
    \text{or, } \P \paren*{ \norm{Mx}_2 \le {\varepsilon c_1} } &\le e^{-c_2 n}
\end{align*}
for absolute constants $c_1$ and $c_2$, using the fact that $\abs{J(x)} \ge \gamma n$. When $\alpha $ is small enough that $3 \alpha  \log \paren*{\frac{1}{\alpha}} \le c_2/2$, we have, by a union bound (similar to Prop. \ref{prop:pattern} ),
\begin{align*}
\P \paren*{ \exists x \in \cN : \norm{Mx}_2 \le {\varepsilon c_1}  } &\le e^{-c_2 n/2}
\end{align*}
That is, with probability greater than $1-\exp \paren*{-c_2 n/2}$, 
\[ \inf_{x \in \cN} \norm{Mx}_2 \ge {\varepsilon c_1} \]
Letting $6\nu = {\varepsilon c_1/2}$, we get,
\[ \inf_{v \in \comp(\alpha, \nu)} \norm{Mv}_2  \ge \frac{\varepsilon c_1}{2} \]
with probability greater than $1-\exp \paren*{-c_2 n/2}$.

\end{proof}

\section{Invertibility for Incompressible Vectors} \label{sec:incomp}

To bound $\inf_{x \in \incomp(\alpha, \nu)} \norm{Mx}_2$, we use the following invertibility by distance result from \cite{rudelson2008littlewood}.

\begin{lemma}[Lemma 3.5 in \cite{rudelson2008littlewood}]\label{invertdist} Let $M$ be any random matrix. Let $Y_1,\dots,Y_n$ denote the column vectors of $M$, and let $H_k$ denote the span of all column vectors except the $k$th. Then for every $\alpha,\nu\in(0,1)$ and every $t>0$, one has
\begin{equation}
\mathbb{P}\left(\inf_{x\in \mathrm{Incomp}(\alpha,\nu)} \|Mx\|_2 < t \nu\, n^{-1/2}\right)
\le \frac{1}{\alpha n}\sum_{k=1}^{n}\mathbb{P}\left(\operatorname{dist}(Y_k,H_k)<t\right).
\end{equation}
    
\end{lemma}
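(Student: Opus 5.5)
The plan is to prove Lemma~\ref{invertdist} by a deterministic inclusion of events followed by an averaging (Markov) argument. The key fact is that an incompressible vector has many coordinates of size at least of order $n^{-1/2}$. For each such coordinate $k$ we then have $\|Mx\|_2 \ge |x_k|\operatorname{dist}(Y_k,H_k)$.

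\emph{Step 1: spread coordinates.} Take $x\in\incomp(\alpha,\nu)$ and let $\sigma=\{k:|x_k|\ge \nu n^{-1/2}\}$. We show $|\sigma|\ge \alpha n$. Suppose instead $|\sigma|<\alpha n$. Let $y$ be the restriction of $x$ to $\sigma$. The coordinates off $\sigma$ satisfy $|x_k|<\nu n^{-1/2}$, so
\[
\|x-y\|_2^2=\sum_{k\notin\sigma}x_k^2<\nu^2 .
\]
Normalizing $y$ to $y/\|y\|_2$ costs at most a comparable amount; if needed one replaces $\nu$ by a constant multiple, or uses a slightly more careful form of the definition. This shows $x$ is within distance $\nu$ of an $\alpha n$-sparse unit vector, so $x\in\comp(\alpha,\nu)$, a contradiction. (In \cite{rudelson2008littlewood} the normalization is absorbed exactly as stated; I would follow their convention, treating the ``sparse $v$'' condition as allowing $v$ in the unit ball, which changes nothing in Sections~\ref{sec:comp}--\ref{sec:incomp}.)

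\emph{Step 2: deterministic distance bound.} For any $k$, write $Mx=\sum_j x_jY_j = x_kY_k + h$ with $h\in H_k$. Then
\[
\|Mx\|_2\ge |x_k|\operatorname{dist}(Y_k,H_k).
\]
Suppose the event $\inf_{x\in\incomp}\|Mx\|_2<t\nu n^{-1/2}$ occurs, witnessed by some $x$. Every $k\in\sigma(x)$ has $|x_k|\ge\nu n^{-1/2}$, so every such $k$ satisfies $\operatorname{dist}(Y_k,H_k)<t$. Therefore, on this event, the random variable $N=|\{k:\operatorname{dist}(Y_k,H_k)<t\}|$ satisfies $N\ge\alpha n$.

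\emph{Step 3: Markov.} By Markov's inequality,
\[
\mathbb P(N\ge\alpha n)\le \frac{\E N}{\alpha n}=\frac{1}{\alpha n}\sum_k\mathbb P(\operatorname{dist}(Y_k,H_k)<t),
\]
which is the claim.

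The only delicate point is Step 1, the normalization convention in the definition of compressible vectors. Everything else is deterministic linear algebra plus Markov's inequality. No independence of columns is used, so the lemma applies to our $M$ with dependent columns.
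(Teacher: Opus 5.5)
Your argument is the Rudelson--Vershynin proof of \cite[Lemma~3.5]{rudelson2008littlewood}, which is all the paper relies on, since it states the lemma without proof. Steps~2--3 are correct: $\|Mx\|_2\ge |x_k|\operatorname{dist}(Y_k,H_k)$, a witness $x$ forces $N\ge|\sigma(x)|$, and Markov finishes, with no independence used. The problem is that the hedge in Step~1 cannot be left open. In \cite{rudelson2008littlewood}, $\comp(\alpha,\nu)$ consists of unit vectors within $\nu$ of the set of \emph{all} $\alpha n$-sparse vectors. There is no normalization to absorb, and Step~1 is exactly your two lines with $y=P_\sigma x$.

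With the paper's definition ($v\in\bS^{n-1}$), the candidate is $y/\|y\|_2$, and $\|x-y/\|y\|_2\|_2^2=2(1-\|y\|_2)\le 2\|x-y\|_2^2$. So you only get distance $<\sqrt2\,\nu$, and this loss is real. Take $\alpha n$ an integer and let $x$ have $\alpha n-1$ equal large coordinates, with all other coordinates equal to $\nu' n^{-1/2}$. Then $\min\{\|x-v\|_2^2: v\in\bS^{n-1},\ |\supp(v)|\le\alpha n\}=2\bigl(1-\sqrt{1-(1-\alpha)\nu'^2}\bigr)$. Whenever $\nu^2>4\alpha$, this exceeds $\nu^2$ for some $\nu'<\nu$ close to $\nu$. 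Such an $x$ is incompressible in the paper's sense, yet only $\alpha n-1$ of its coordinates satisfy $|x_k|\ge\nu n^{-1/2}$. Now take the deterministic matrix $M=I-(1-\theta)xx^\top$ with $\theta=t\nu''n^{-1/2}$ and $\nu'(1-t^2)^{-1/2}<\nu''<\nu$, for $t$ small. Then $\operatorname{dist}(Y_k,H_k)=\bigl(1+x_k^2(\theta^{-2}-1)\bigr)^{-1/2}\ge t$ at every small coordinate. Hence the left side equals $1$ while the right side is at most $(\alpha n-1)/(\alpha n)$. So the lemma as printed is false for the paper's definition in this range, not merely unproven by your Step~1.

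You therefore have to commit to one of your two fixes.
\begin{itemize}
\item Adopt the Rudelson--Vershynin definition of $\comp$. This enlarges $\comp$, so Theorem~\ref{prop:comp} and Lemma~\ref{lem:randnormincomp} need a $3\nu$- rather than $2\nu$-approximation by unit sparse vectors. That is a change of constants only.
\item Keep the paper's definition and run your argument with $\sigma(x)=\{k:|x_k|>\nu/\sqrt{2n}\}$. This set has more than $\alpha n$ elements by the first half of the proof of Lemma~\ref{lem:sharpened-spread}. You then obtain the inequality with $t\nu(2n)^{-1/2}$ in place of $t\nu n^{-1/2}$, and the factor $\sqrt2$ is absorbed into $c_{\ref{thm:incomp}.3}$ in Theorem~\ref{thm:incomp}.
\end{itemize}
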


We will apply Lemma \ref{invertdist} to our case $M= A+ \varepsilon R$. Let $Y_1,\dots,Y_n$ denote the column vectors of $M$, and let $H_k$ denote the span of all column vectors except the $k$th. Without loss of generality, we analyze
\begin{align*}
    \mathbb{P}\left(\operatorname{dist}(Y_n,H_n)<t\right)
\end{align*}
and all the other terms in the sum
\begin{align*}
    \sum_{k=1}^{n}\mathbb{P}\left(\operatorname{dist}(Y_k,H_k)<t\right)
\end{align*}
can be bounded by exactly the same way.

By Corollary \ref{lem:zeta-meas}, we know that there exists a measurable random normal vector $Z$ of the subspace $H_n$. Therefore, we have
\begin{align*}
    \operatorname{dist}(Y_n,H_n) \ge |\langle Y_n, Z \rangle|
\end{align*}
and therefore it suffices to bound
\begin{align*}
    \mathbb{P}\left(|\langle Y_n, Z \rangle|<t\right)
\end{align*}
To this end, we first prove that the random normal vector $Z$ is incompressible with high probability.

\begin{lemma}[Random normal is incompressible] \label{lem:randnormincomp}
    Let $A$ be an $n \times n$ deterministic matrix such that $\norm{A}= 1$. Let $0<\e<1$. Let $0<\alpha, \beta, \gamma<1$ and $\rho>0$. Let $V$ be an $n$ dimensional pattern matrix with parameters $\alpha, \beta, \gamma,\rho$. Let $K$ and $L$ with $L>K$ be positive integers. Let $(R_1,\{J_i\}_{i \in [n]},\overline{H},H,\{X_{i,l}\}_{i \in [n],l \in[L]},\overline{R}_2,R_2)$ be a $n$-dimensional oblivious perturbation model with pattern matrix $V$ and parameters $\rho$, $K$, and $L$. Let $M= A+ \varepsilon (R_1+R_2)$. 
    Let $Y_1,...,Y_n$ be the columns of $M$. Let $\zeta$ be a measurable function such that for every $(y_1,\dots,y_{n-1})\in(\mathbb{R}^n)^{n-1}$, we have
		\[
		\langle \operatorname{span}(y_1,...,y_{n-1}),\zeta(y_1,\dots,y_{n-1})\rangle=0.
		\] Let $Z=\zeta(Y_1,...,Y_{n-1})$.
Then, there exists $c_{\ref{lem:randnormincomp}}>0$, such that for any $\alpha < c_{\ref{prop:comp}.1}$ and $\nu = c_{\ref{prop:comp}.2} \varepsilon$, we have
\begin{align*}
    \mathbb{P}(Z \in \mathrm{Comp}(\alpha,\nu)) \le \exp \paren*{-c_{\ref{lem:randnormincomp}} n/2}
\end{align*}
\end{lemma}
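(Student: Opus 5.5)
The plan is to reduce the statement to the compressible-vector bound of Theorem~\ref{prop:comp}, applied to the transpose of the matrix formed by the first $n-1$ columns. By construction, $Z$ is orthogonal to $Y_1,\dots,Y_{n-1}$. Write $M'$ for the $n\times(n-1)$ matrix with these columns. Then $Z^T M' = 0$, i.e.\ $(M')^T Z=0$. Hence, if $Z\in\comp(\alpha,\nu)$ after normalization (if $Z=0$ there is nothing to prove, or we normalize), we get
\[
\inf_{v\in\comp(\alpha,\nu)}\norm{(M')^T v}_2 = 0 .
\]
It therefore suffices to show that, with probability at least $1-\exp(-cn/2)$,
\[
\inf_{v\in\comp(\alpha,\nu)}\norm{(M')^T v}_2 \ge c\,\varepsilon>0 .
\]
Here $(M')^T = A'^T + \varepsilon(R_1'^T + R_2'^T)$, where primes denote dropping the last column. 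So $(M')^T$ consists of the first $n-1$ rows of
\[
M^T = A^T + \varepsilon\Big(\tfrac{1}{\rho\sqrt n} D_2 V^T D_1 + R_2^T\Big).
\]

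The key steps mirror the proof of Theorem~\ref{prop:comp}, with the roles of $D_1$ and $D_2$ swapped and $V^T$ in place of $V$. This is exactly why $D_2$ and the $V^T$ property in Definition~\ref{def:pattern} are needed.

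First, take a $2\nu$-net $\cN$ of $\alpha n$-sparse unit vectors for $\comp(\alpha,\nu)$. Using $\norm{M}\le 3$, we reduce to showing
\[
\inf_{x\in\cN}\norm{(M')^T x}_2 \ge c_1\varepsilon .
\]

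Second, fix $x\in\cN$ and condition on $R_2$ and $D_1$. Set $A_1 = A^T+\varepsilon R_2^T$. For each row index $j\le n-1$,
\[
((M')^T x)_j = (A_1x)_j + \tfrac{\varepsilon}{\rho\sqrt n}\,\eta_{2,j}\,\big(e_j^T V^T D_1 x\big).
\]
By the second pattern property (applied to $V^T$ with signs $\eta_{1,\cdot}$), at least $\gamma n$ indices satisfy $|e_j^TV^TD_1x|\ge\beta$. Removing index $n$ leaves at least $\gamma n-1\ge \gamma n/2$ of them. For each such $j$, the symmetric sign $\eta_{2,j}$ gives
\[
|((M')^Tx)_j| \ge \frac{\varepsilon\beta}{\rho\sqrt n}
\]
with probability at least $1/2$.

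Third, these coordinates are conditionally independent, since the $\eta_{2,j}$ are independent. So Lemma~\ref{lem:tensorization} gives
\[
\Pb\big(\norm{(M')^Tx}_2\le c_1\varepsilon\big)\le e^{-c_2 n}.
\]
Fourth, take a union bound over $\cN$, using $\log|\cN|\le 3\alpha n\log(1/\alpha)\le c_2 n/2$ for $\alpha<c_{\ref{prop:comp}.1}$, and choose $\nu=c_{\ref{prop:comp}.2}\varepsilon$ small enough that $6\nu\le c_1\varepsilon/2$.

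The main subtlety is dependence. $R_2$ is built from $\bar H$ via row trimming, so its rows are not independent, and the restriction to $n-1$ columns interacts with it. However, conditioning on all of $R_2$ (together with $D_1$) makes $A_1$ deterministic. The only randomness left is then $D_2$, which is independent of $(R_2,D_1)$, so the dependence structure of $R_2$ is harmless.

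One point to check carefully is that $\nu$-closeness of the normalized $Z$ to sparse vectors transfers correctly. Because $Z\neq 0$ can be normalized and the net is built from unit vectors, this works. It is also worth checking that passing from $n$ to $n-1$ coordinates only costs a constant. Tracking the constants gives the exponent $c_{\ref{lem:randnormincomp}}n/2$.
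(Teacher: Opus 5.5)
Your proposal is correct and is essentially the paper's proof. The paper likewise passes to $M^T$ with its last row erased, conditions on $D_1$ and $R_2$, uses the independent signs $\eta_{2,j}$ together with the $V^T$ half of the pattern property, tensorizes with Lemma~\ref{lem:tensorization}, and union-bounds over the same $2\nu$-net as in Theorem~\ref{prop:comp}. Your explicit handling of the dropped index $n$ (keeping $\gamma n-1\ge\gamma n/2$ good coordinates) and of the normalization of $Z$ tidies details the paper glosses over. One caveat is shared with the paper: the bound $\log|\cN|\le 3\alpha n\log(1/\alpha)$ tacitly needs $\alpha\le\nu=c_{\ref{prop:comp}.2}\varepsilon$, so $c_{\ref{prop:comp}.1}$ must be allowed to depend on $\varepsilon$.
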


\begin{proof}
Let $M'$ be the $(n-1) \times n$ matrix such that $M'$ is $M^T$ with last row erased. 
    We have
    \begin{align*}
        \{Z \in \mathrm{Comp}(\alpha,\nu)\} \subset \{ \exists x (x \in \mathrm{Comp}(\alpha,\nu), M' x=0) \}
    \end{align*}

As in the proof of Theorem \ref{prop:comp}, consider a $2\nu$-net of $\alpha n$-sparse vectors for $\comp(\alpha, \nu)$. Condition on $D_{1}$ and $R_2$ and let $A_1 = A + \varepsilon R_2$. Then, $M^Tx= A_1^Tx + \frac{\varepsilon}{\rho\sqrt{n}} \cdot D_{2} V^T D_{1}x $. Let $J(x) = \{j \in [n-1] : |\row_j(V^T)D_{1}x|\ge \beta|\}$. By the definition of a pattern matrix, we have $\abs{J(x)} \ge \gamma n$. Then, for $j \in J(x)$, we have, 
\begin{align*} 
	&\P \paren*{ |(M'x)_j| \ge \frac{\varepsilon \beta}{\rho\sqrt{n}} \bigg \vert R_2, D_{1} } \\=& \P \paren*{ \abs*{(A_1^Tx)_j + \frac{\varepsilon}{\rho\sqrt{n}} \cdot \eta_{2,j} \row_j(V^T)D_{1}x } \ge \frac{\varepsilon \beta}{\rho\sqrt{n}} \bigg\vert R_2, D_{1}} \ge 1/2 
\end{align*}
Proceeding as in the proof of Theorem \ref{prop:comp}, we obtain
\[ \P \paren*{ \norm{M'x}_2 \le {\varepsilon c_1}  } \le e^{-c_2 n} \]
for some constants $c_1$ and $c_2$.
and consequently,
\[ \inf_{v \in \comp(\alpha, \nu)} \norm{M'v}_2  \ge \frac{\varepsilon c_1}{2} \]
with probability greater than $1-\exp \paren*{-c_2 n/2}$. So,
\[ \Pb \paren*{ \exists x (x \in \mathrm{Comp}(\alpha,\nu), M' x=0) } \le \exp \paren*{-c_2 n/2} \]

\end{proof}

Now we have proved that the random normal vector $Z$ is an incompressible vector with high probability. This is useful because incompressible vectors are spread according to the following lemma in \cite{rudelson2008littlewood}.

\begin{lemma}[Incompressible vectors are spread]\label{lem:sharpened-spread}
Let \(x\in \operatorname{Incomp}(\alpha,\nu)\). Then the set
\[
\Theta=\left\{\, i\in[n] : \frac{\nu}{\sqrt{2n}}<|x_i|\le \sqrt{\frac{2}{\alpha n}} \,\right\}
\]
satisfies
\[
|\Theta|>\frac{\alpha n}{2}.
\]
\end{lemma}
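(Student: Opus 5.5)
We prove Lemma~\ref{lem:sharpened-spread} by splitting $[n]$ according to the size of $|x_i|$ and showing that the coordinates outside $\Theta$ cannot carry most of the mass of $x$ unless $x$ is close to a sparse vector. Define the ``large'' set $\Theta_{\mathrm{big}}=\{i : |x_i|>\sqrt{2/(\alpha n)}\}$ and the ``small'' set $\Theta_{\mathrm{small}}=\{i: |x_i|\le \nu/\sqrt{2n}\}$, so that $[n]=\Theta\sqcup\Theta_{\mathrm{big}}\sqcup\Theta_{\mathrm{small}}$.

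First we bound $\Theta_{\mathrm{big}}$. Since $\|x\|_2=1$, Chebyshev-type counting gives $|\Theta_{\mathrm{big}}|\cdot \frac{2}{\alpha n}<\sum_i x_i^2=1$, so $|\Theta_{\mathrm{big}}|<\alpha n/2$. Next, the mass on $\Theta_{\mathrm{small}}$ is at most $\|x_{\Theta_{\mathrm{small}}}\|_2^2\le n\cdot \frac{\nu^2}{2n}=\nu^2/2$.

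Now we argue by contradiction. Suppose $|\Theta|\le \alpha n/2$. Then $S:=\Theta\cup\Theta_{\mathrm{big}}$ has $|S|<\alpha n$, and $x_S$ (the restriction of $x$ to $S$) is $\alpha n$-sparse with $\|x-x_S\|_2=\|x_{\Theta_{\mathrm{small}}}\|_2\le \nu/\sqrt2<\nu$. The only care needed is that the definition of $\comp(\alpha,\nu)$ uses a sparse vector $v$ on the sphere, so we normalize: take $v=x_S/\|x_S\|_2$ (note $\|x_S\|_2^2\ge 1-\nu^2/2>0$). Then
\[
\|x-v\|_2\le \|x-x_S\|_2+\bigl\|x_S-x_S/\|x_S\|_2\bigr\|_2=\|x-x_S\|_2+(1-\|x_S\|_2).
\]
Since $1-\|x_S\|_2\le 1-\|x_S\|_2^2=\|x-x_S\|_2^2$, we get $\|x-v\|_2\le \nu/\sqrt2+\nu^2/2$, which is at most $\nu$ for $\nu\le 2-\sqrt2$.

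This normalization step is the only delicate point; everything else is counting. If one wants to avoid any restriction on $\nu$, the cleaner route is to use the exact orthogonal decomposition $\|x-v\|_2^2=\|x-x_S\|_2^2+(1-\|x_S\|_2)^2$, valid because $x-x_S\perp x_S$. This gives $\|x-v\|_2^2\le \nu^2/2+\nu^4/4$, which is at most $\nu^2$ for all $\nu\le\sqrt2$, covering $\nu\in(0,1)$. Either way, $x\in\comp(\alpha,\nu)$, contradicting incompressibility. Hence $|\Theta|>\alpha n/2$.
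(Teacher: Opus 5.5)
Your proof is correct and is essentially the paper's argument. The paper also splits off the small coordinates, shows by the same normalize-and-contradict step that the complement of the small set (your $S=\Theta\cup\Theta_{\mathrm{big}}$) has more than $\alpha n$ elements, and then subtracts the fewer than $\alpha n/2$ big coordinates. Your orthogonal-decomposition quantity $\|x-v\|_2^2=\|x-x_S\|_2^2+(1-\|x_S\|_2)^2$ equals the paper's $2\bigl(1-\sqrt{1-\delta^2}\bigr)$, so it covers all $\nu\in(0,1)$ without the restriction $\nu\le 2-\sqrt2$ that your first route needs.
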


\begin{proof}
Define
\[
A=\left\{\, i\in[n] : |x_i|\le \frac{\nu}{\sqrt{2n}} \,\right\}.
\]
Let \(P_A:\mathbb R^n\to\mathbb R^n\) denote the orthogonal projection onto the coordinate subspace supported on \(A\), i.e., 
\[
(P_Ax)_i=
\begin{cases}
 x_i, & i\in A,\\
 0, & i\notin A.
\end{cases}
\]
Then
\[
\|P_Ax\|_2^2
=
\sum_{i\in A}|x_i|^2
\le
|A|\cdot \frac{\nu^2}{2n}
\le
n\cdot \frac{\nu^2}{2n}
=
\frac{\nu^2}{2}.
\]
Hence
\[
\|P_Ax\|_2\le \frac{\nu}{\sqrt 2}<\nu.
\]

We claim that
\[
|A^c|>\alpha n.
\]
Indeed, suppose for contradiction that \(|A^c|\le \alpha n\). Put
\[
y=P_{A^c}x.
\]
Then \(y\neq 0\), because otherwise \(x=P_Ax\) and therefore
\[
1=\|x\|_2=\|P_Ax\|_2\le \frac{\nu}{\sqrt 2}<1,
\]
a contradiction. Now define
\[
z=\frac{y}{\|y\|_2}.
\]
Then \(z\in S^{n-1}\) and
\[
|\operatorname{supp}(z)|\le |A^c|\le \alpha n.
\]
Set
\[
\delta=\|x-y\|_2=\|P_Ax\|_2.
\]
Since \(x=y+P_Ax\) and the two summands have disjoint supports, they are orthogonal. Because \(\|x\|_2=1\), we get
\[
\|y\|_2^2=1-\delta^2.
\]
Therefore
\[
\langle x,z\rangle
=
\left\langle y+P_Ax,\frac{y}{\|y\|_2}\right\rangle
=
\|y\|_2,
\]
and so
\[
\|x-z\|_2^2
=
2-2\langle x,z\rangle
=
2-2\|y\|_2
=
2\bigl(1-\sqrt{1-\delta^2}\bigr).
\]
Since \(0\le \delta\le 1\),
\[
1-\sqrt{1-\delta^2}
=
\frac{\delta^2}{1+\sqrt{1-\delta^2}}
\le
\delta^2.
\]
Hence
\[
\|x-z\|_2^2
\le
2\delta^2
=
2\|P_Ax\|_2^2
\le
\nu^2.
\]
Thus \(\|x-z\|_2\le \nu\), with \(z\in S^{n-1}\) and \(|\operatorname{supp}(z)|\le \alpha n\). This shows that \(x\in \operatorname{Comp}(\alpha,\nu)\), contradicting \(x\in \operatorname{Incomp}(\alpha,\nu)\). The claim follows.

Now define
\[
B=\left\{\, i\in[n] : |x_i|>\sqrt{\frac{2}{\alpha n}} \,\right\}.
\]
If \(|B|\ge \alpha n/2\), then \(B\neq \emptyset\) and therefore
\[
1=\|x\|_2^2
\ge
\sum_{i\in B}|x_i|^2
>
|B|\cdot \frac{2}{\alpha n}
\ge
\frac{\alpha n}{2}\cdot \frac{2}{\alpha n}
=
1,
\]
a contradiction. Hence
\[
|B|<\frac{\alpha n}{2}.
\]

Finally, let
\[
\Theta=A^c\setminus B.
\]
Then
\[
|\Theta|
\ge
|A^c|-|B|
>
\alpha n-\frac{\alpha n}{2}
=
\frac{\alpha n}{2}.
\]
Moreover, if \(i\in \Theta\), then \(i\notin A\) and \(i\notin B\), so
\[
\frac{\nu}{\sqrt{2n}}<|x_i|\le \sqrt{\frac{2}{\alpha n}}.
\]
This proves the claim.
\end{proof}

The standard argument (e.g., \cite{rudelson2008littlewood}) to bound the probability \begin{align*}
    \mathbb{P}\left(\operatorname{dist}(Y_n,H_n)<t\right)
\end{align*}
uses the fact
\begin{align*}
    \operatorname{dist}(Y_n,H_n) \ge |\langle Y_n, Z \rangle|
\end{align*}
and then condition on $Z$. 

In our case, we cannot use this approach directly because $\sigma(Y_n)$ and $\sigma(Y_1,...,Y_{n-1})$ are not mutually independent. Nevertheless, the pure random signs $X_{i,n}$ will be independent with $\sigma(Y_1,...,Y_{n-1})$. To make the argument work, we need to show that there are sufficiently many random signs in the last column which will not be removed eventually ($h_{ij} \ne 0$). To understand when a random sign in the last column will be removed eventually, one key observation is that, a nonzero position in the $n$th column of $\bar R_2$ will be eventually be reset to zero if and only if it is in a row that is already heavy enough in the $1,...,n-1$ columns.

More formally, we define index set of the heavy rows of $\bar H$ up to column $n-1$ as
\begin{align}
		I_{\mathrm{heavy},n}
		&=\left\{ i\in[n]: \sum_{j=1}^{n-1}\bar h_{i,j} \ge L\right\}
		\label{eq:Iheavy}
	\end{align}
and the index set of the light rows up to column $n-1$ 
\begin{align}
		I_{\mathrm{light},n}
		&=\left\{ i\in[n]: \sum_{j=1}^{n-1}\bar h_{i,j} < L\right\}.
		\label{eq:Ilight}
	\end{align}
So the heavy rows may be removed if there is an extra nonzero entry in the column $n$, but the light rows will never be removed no matter what happens in the column $n$. With the above observation, we can write
\begin{align*}
			\langle Y_n,Z\rangle
			=
			\big\langle \mathrm{Col}_n(A+\varepsilon R_1),\, Z\big\rangle
			+\frac{\varepsilon}{L}\sum_{l=1}^K X_{l,n}\, Z_{j(n,l)}\, \mathbf{1}_{\{j(n,l)\in I_{\mathrm{light},n}\}}
		\end{align*}

We will write $\langle Y_n, Z \rangle$ as a function of two independent random elements. To this end, we first define the random elements \[
		\Xi_1=(R_1,\ (X_{l,j})_{l\in[K],\,j\in[n-1]},\ J_1,\dots,J_{n-1}),
		\qquad
		\Xi_2=J_n.
		\qquad \Xi_3=\{X_{l,n}: l \in [K]\}\] In other words, $(\Xi_1,\Xi_2)$ contains the information of $R_1$, the position of all nonzero entries of $R_2$, and the value of random signs for the entries of $R_2$ only in columns $1,...,n-1$. And $\Xi_2$ only contains the information of random signs for the entries of $R_2$ only in columns $1,...,n$. By construction, $\Xi_1,\Xi_2,\Xi_3$ are mutually independent, so it suffices to write $\langle X_n, Z \rangle$ and a function of $\Xi_1,\Xi_2,\Xi_3$ so that we can use conditioning (iterated expectation). In particular, $I_{\mathrm{heavy},n}$ and $I_{\mathrm{heavy},n}$ are completely determined by $\Xi_1$ and the random normal vector $Z$ is completely determined by $(\Xi_1,\Xi_2)$ (or more precisely, completely determined by $\Xi_1$ and $J_n \cap I_{\mathrm{heavy},n}$). Therefore, in the expression
        \begin{align*}
			\langle Y_n,Z\rangle
			=
			\big\langle \mathrm{Col}_n(A+\varepsilon R_1),\, Z\big\rangle
			+\frac{\varepsilon}{L}\sum_{l=1}^K X_{l,n}\, Z_{j(n,l)}\,\mathbf{1}_{\{j(n,l)\in I_{\mathrm{light},n}\}}.
		\end{align*}
        the parts, $\big\langle \mathrm{Col}_n(A+\varepsilon R_1),\, Z\big\rangle$ and $\, Z_{j(n,l)}\,\mathbf{1}_{\{j(n,l)\in I_{\mathrm{light},n}\}}$ will be fixed after conditioning on $(\Xi_1,\Xi_2)$, and then bounding $\langle Y_n,Z\rangle$ becomes the Littlewood-Offord problem. To get a good bound, we need sufficiently many good coordinates to show up in the coefficient vector $\{\, Z_{j(n,l)}\,\mathbf{1}_{\{j(n,l)\in I_{\mathrm{light},n}\}}\}_{l \in [K]}$. To this end, the first step is to show there are enough light rows, or in other words, there cannot be too many heavy rows.

\begin{lemma}[Heavy rows are not too many]
		\label{thm:Iheavy}
		Let $(R_1,\{J_i\}_{i \in [n]},\overline{H},H,\{X_{i,l}\}_{i \in [n],l \in[L]},\overline{R}_2,R_2)$ be a $n$-dimensional oblivious perturbation model with pattern matrix $V$ and parameters $K$ and $L$. Let \begin{align*}
		I_{\mathrm{heavy},n}
		&=\left\{ i\in[n]: \sum_{j=1}^{n-1}\bar h_{i,j} \ge L\right\}.
	\end{align*}
    be the index set of heavy rows up to column $n-1$. Then, for $L>K$, we have
		\begin{equation}\label{eq:chernoff-bin}
			\mathbb{E}|I_{\mathrm{heavy},n}|\le n e^{-K}\Bigl(\frac{eK}{L}\Bigr)^{L}.
		\end{equation}
        and for any $\lambda\in(0,1]$, we have
		\begin{align}
			\mathbb{P}\Bigl\{|I_{\mathrm{heavy},n}|>\tfrac{\lambda}{2}n\Bigr\}
			\le&
			\exp\Biggl(-\frac{\bigl(\max\{(\tfrac{\lambda}{2}n-\mathbb{E}|I_{\mathrm{heavy},n}|),0\}\bigr)^2}{2nK^2}\Biggr)\label{eq:main-bound}
            \\\le&
			\exp\Biggl(-\frac{\bigl(\max\{\tfrac{\lambda}{2}n-n e^{-K}(eK/L)^L,0\}\bigr)^2}{2nK^2}\Biggr).\label{eq:main-bound-explicit}
		\end{align}

	\end{lemma}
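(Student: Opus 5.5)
Both parts of Lemma~\ref{thm:Iheavy} reduce to a computation for a single row followed by a concentration argument. For the expectation we use linearity. For the tail we apply the bounded differences inequality, Lemma~\ref{lem:bd}, with the independent sets $J_1,\dots,J_{n-1}$ as the underlying variables.

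\textbf{Expectation.} Fix a row $i\in[n]$ and set $S_i=\sum_{j=1}^{n-1}\bar h_{i,j}$.
\begin{itemize}
\item The sets $J_j$ are independent uniform $K$-subsets of $[n]$, so the indicators $\bar h_{i,j}=\mathbbm{1}_{J_j}(i)$ for $j\in[n-1]$ are independent Bernoulli variables with parameter $K/n$.
\item Hence $S_i\sim\mathrm{Bin}(n-1,K/n)$, with mean $\mu=(n-1)K/n\le K$.
\item The standard Chernoff bound for $t\ge\mu$ reads $\mathbb{P}(S_i\ge t)\le e^{-\mu}(e\mu/t)^t$. Apply it with $t=L>K\ge\mu$.
\item The map $\mu\mapsto e^{-\mu}\mu^L$ is increasing on $[0,L]$, so we may replace $\mu$ by $K$. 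This gives $\mathbb{P}(S_i\ge L)\le e^{-K}(eK/L)^L$.
\end{itemize}
Summing over the $n$ rows yields \eqref{eq:chernoff-bin}.

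\textbf{Tail bound.} Write $f(J_1,\dots,J_{n-1})=|I_{\mathrm{heavy},n}|$, a function of $n-1$ independent random sets.
\begin{itemize}
\item Replace a single $J_k$ by another $K$-subset $J_k'$. Only the counts $S_i$ with $i\in J_k\cup J_k'$ change, and each changes by at most one.
\item $J_k$ contributes to at most $K$ row counts. So removing it lowers at most $K$ counts and can drop at most $K$ rows from heavy to light. Adding $J_k'$ raises at most $K$ counts and can make at most $K$ rows newly heavy.
\item A row that drops and a row that rises move $f$ in opposite directions. Therefore $|f-f'|\le K$, i.e.\ $b_k=K$ for every $k$.
\item Consequently $v=\tfrac14(n-1)K^2\le nK^2/4$.
\end{itemize}

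Now apply Lemma~\ref{lem:bd} with $t=\tfrac{\lambda}{2}n-\mathbb{E}|I_{\mathrm{heavy},n}|$.
\begin{itemize}
\item If $t\le0$, the stated bound is $\exp(0)=1$ and holds trivially. This is why the maximum with $0$ appears in the statement.
\item If $t>0$, Lemma~\ref{lem:bd} gives $\exp(-t^2/(2v))\le\exp(-2t^2/(nK^2))$. This is stronger than the claimed $\exp(-t^2/(2nK^2))$, which proves \eqref{eq:main-bound}.
\end{itemize}
Finally, \eqref{eq:main-bound-explicit} follows by inserting \eqref{eq:chernoff-bin}. Decreasing $\mathbb{E}|I_{\mathrm{heavy},n}|$ to its upper bound only decreases $\max\{\tfrac{\lambda}{2}n-\cdot\,,0\}$, so the right-hand side only increases.

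\textbf{Main obstacle.} The step needing the most care is the bounded-difference constant. We must check that changing one column alters the heavy count by at most $K$, not $2K$. As argued above, the number of rows whose status flips can be as large as $2K$, but their effects on $f$ have opposite signs, so the net change is at most $K$. Should this fail, $b_k=2K$ would still give $v\le nK^2$ and hence the claimed exponent $\exp(-t^2/(2nK^2))$ exactly. So the statement is robust to this point.
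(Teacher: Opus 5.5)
Your proof is correct and follows the paper's route: a per-row Binomial/Chernoff bound summed by linearity for the expectation, then the bounded differences inequality (Lemma~\ref{lem:bd}) over the independent column sets $J_k$ for the tail. Your version is slightly sharper and cleaner than the paper's. You sum only over the $n-1$ columns that actually define $I_{\mathrm{heavy},n}$, whereas the paper writes the sum up to $n$. Your opposite-sign argument also gives $b_k=K$ instead of the paper's $b_k=2K$ (which comes from $|J_k\triangle J_k'|\le 2K$), so you obtain a stronger exponent than the one stated.
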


	\begin{proof}
The idea of the proof is to first upper bound $\mathbb{E}|I_{\mathrm{heavy},n}|$ and then obtain concentration inequalities for $|I_{\mathrm{heavy},n}|$.

		Fix $i\in[n]$ and $j\in[n]$. Since $J_j$ is uniform among all $K$--subsets of $[n]$, we have
		\begin{equation}\label{eq:prob-in}
			\mathbb{P}(\bar h_{i,j}=1)=\mathbb{P}(i\in J_j)=\frac{\binom{n-1}{K-1}}{\binom{n}{K}}=\frac{K}{n}.
		\end{equation}
		Because $J_1,\dots,J_n$ are independent, the random variables
		$\bar h_{i,1}=\mathbf{1}_{\{i\in J_1\}},\dots,\bar h_{i,n}=\mathbf{1}_{\{i\in J_n\}}$
		are independent Bernoulli random variables with success probability $K/n$.
		
		Therefore, for each row $i$, we have
		$\sum_{j=1}^n \bar h_{i,j}$ follows Binomial distribution with parameters $n$ and $K/n$.
		In particular, the distribution does not depend on $i$. Let $S$ be a Binomial random variable with parameters $n$ and $K/n$. Note that $\E[S]=K$. Assume $L>K$. By Chernoff bound, we have
		\[
		\mathbb{P}\{S\ge L\}\le e^{-K}\Bigl(\frac{eK}{L}\Bigr)^L,
		\]
		
        Since
		\[
		|I_{\mathrm{heavy},n}|=\sum_{i=1}^n \mathbf{1}_{\{\sum_{j=1}^n \bar h_{i,j}\ge L\}},
		\]
        by linearity of expectation, we have
\begin{equation}\label{eq:ET}
			\mathbb{E}|I_{\mathrm{heavy},n}|
			=\sum_{i=1}^n \mathbb{P}\Bigl\{\sum_{j=1}^n \bar h_{i,j}\ge L\Bigr\}
			=n\,\mathbb{P}\{S\ge L\} \le n\,e^{-K}\Bigl(\frac{eK}{L}\Bigr)^L
		\end{equation}

        To obtain concentration inequalities for $|I_{\mathrm{heavy},n}|$, we can write
        \[
		T=|I_{\mathrm{heavy},n}|=\sum_{i=1}^n \mathbf{1}_{\{\sum_{j=1}^n \bar h_{i,j}\ge L\}}=f(J_1,\dots,J_n).
		\]
		Let $k\in[n]$ be fixed. Consider two sequences of column-sets
		\[
		(J_1,\dots,J_n)\quad\text{and}\quad (J_1,\dots,J_{k-1},J_k',J_{k+1},\dots,J_n)
		\]
		that differ only in the $k$th coordinate. For any fixed row index $i\in[n]$, replacing $J_k$ by $J_k'$ changes the row sum
		$\sum_{j=1}^n \bar h_{i,j}$ by at most $1$, and it changes it only if $i\in J_k\triangle J_k'$ (symmetric difference).
		Therefore, the indicator
		\(\mathbf{1}_{\{\sum_{j=1}^n \bar h_{i,j}\ge L\}}\)
		can change only for $i\in J_k\triangle J_k'$.
		Hence the total count $T$ can change by at most
		\begin{equation}\label{eq:bd-2K}
			|J_k\triangle J_k'|\le |J_k|+|J_k'|=2K,
		\end{equation}
		because both $J_k$ and $J_k'$ have exactly $K$ elements.
		Thus $T$ satisfies the bounded differences condition of Lemma~\ref{lem:bd} with $m=n$ and
		\begin{equation}\label{eq:ck}
			b_1=\cdots=b_n=2K.
		\end{equation}
		
		\
		By Lemma~\ref{lem:bd} and \eqref{eq:ck}, we have
		\[
		v=\frac14\sum_{k=1}^n b_k^2=\frac14\,n\,(2K)^2=nK^2.
		\]
		Therefore, for every $t>0$,
		\begin{equation}\label{eq:bd-applied}
			\mathbb{P}\{T-\mathbb{E}T>t\}\le \exp\Bigl(-\frac{t^2}{2nK^2}\Bigr).
		\end{equation}
		To bound $\mathbb{P}\{T>\tfrac{\lambda}{2}n\}$, set
		\[
		t=\tfrac{\lambda}{2}n-\mathbb{E}T.
		\]
		If $t\le 0$, then $\tfrac{\lambda}{2}n\le \mathbb{E}T$ and the inequality \eqref{eq:main-bound} holds trivially because its right-hand side equals $1$.
		Assume now $t>0$. Then
		\[
		\{T>\tfrac{\lambda}{2}n\}=\{T-\mathbb{E}T>t\},
		\]
		so \eqref{eq:bd-applied} yields
		\[
		\mathbb{P}\Bigl\{T>\tfrac{\lambda}{2}n\Bigr\}
		\le \exp\Bigl(-\frac{t^2}{2nK^2}\Bigr)
		=\exp\Biggl(-\frac{\bigl(\tfrac{\lambda}{2}n-\mathbb{E}T\bigr)^2}{2nK^2}\Biggr).
		\]
		Combining the cases $t\le 0$ and $t>0$ proves \eqref{eq:main-bound}.

	\end{proof}

With the above preparation, we can now show that sufficiently many good coordinates show up in the coefficient vector $\left( Z_{j(n,l)}\,\mathbf{1}_{\{j(n,l)\in I_{\mathrm{light},n}\}}\right)_{l \in [K]}$.

\begin{lemma}[Good coordinates]\label{lem:spreadlight}
Let $A$ be an $n \times n$ deterministic matrix such that $\norm{A}= 1$. Let $0<\e<1$. Let $0<\alpha, \beta, \gamma<1$, and $\rho>0$. Let $V$ be an $n$ dimensional pattern matrix with parameters $\alpha, \beta, \gamma,\rho$. Let $K$ and $L$ with $L>K$ be positive integers. Let $(R_1,\{J_i\}_{i \in [n]},\overline{H},H,\{X_{i,l}\}_{i \in [n],l \in[L]},\overline{R}_2,R_2)$ be a $n$-dimensional oblivious perturbation model with pattern matrix $V$ and parameters $\rho$, $K$, and $L$. Let $M= A+ \varepsilon (R_1+R_2)$. Let $Y_1,...,Y_n$ be the columns of $M$. Let $\zeta$ be a measurable function such that for every $(y_1,\dots,y_{n-1})\in(\mathbb{R}^n)^{n-1}$, we have
		\[
		\langle \operatorname{span}(y_1,...,y_{n-1}),\zeta(y_1,\dots,y_{n-1})\rangle=0.
		\] Let $Z=\zeta(Y_1,...,Y_{n-1})$.
Let \begin{align*}
    \Theta_0=\{l \in [n]:\frac{\nu}{\sqrt{2n}}\leqslant |Z_l|\leqslant \sqrt{\frac{2}{\alpha n}}\}
\end{align*}
Let $0< \lambda<1$. Then for any
		$t\in\bigl(0,\lambda/8\bigr)$, we have
		\begin{align*}
			\Pb\Bigl\{\abs{J_n\cap \Theta_0\cap I_{\mathrm{light},n}}\le tK\Bigr\}
			\ &\le\
			\Pb\{\abs{\Theta_0}<\lambda n\}
			+\Pb\Bigl\{\abs{I_{\mathrm{heavy},n}}>\tfrac{\lambda}{2}n\Bigr\}
            \\
			&\quad
			+\exp\Bigl(-2K\bigl(\tfrac34-\tfrac{\lambda}{2}\bigr)^2\Bigr)
			+\exp\Bigl(-\tfrac{K}{2}\bigl(\tfrac{\lambda}{2}-4t\bigr)^2\Bigr).
		\end{align*}
\end{lemma}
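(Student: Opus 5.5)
The proof is a conditioning argument. The sets $I_{\mathrm{heavy},n}$ and $I_{\mathrm{light},n}$ are functions of $\Xi_1$ alone. Write $\Gamma_1=J_n\cap I_{\mathrm{heavy},n}$ and $\Gamma_2=J_n\cap I_{\mathrm{light},n}$.

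\textbf{Step 1: what $Z$ and $\Theta_0$ depend on.} The key observation is that $Z$, and hence $\Theta_0$, is a function of $(\Xi_1,\Gamma_1)$ only. The first $n-1$ columns of $R_2$ are determined by $\Xi_1$ together with which rows are deleted. A row $i\in I_{\mathrm{light},n}$ has at most $L$ ones among all $n$ columns, so it is never deleted. A row $i\in I_{\mathrm{heavy},n}$ is deleted according to whether it already has more than $L$ ones in the first $n-1$ columns or $i\in\Gamma_1$. So $(Y_1,\dots,Y_{n-1})$, and therefore $Z=\zeta(Y_1,\dots,Y_{n-1})$, is a function of $(\Xi_1,\Gamma_1)$.

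\textbf{Step 2: conditional law of $\Gamma_2$.} Because $J_n$ is independent of $\Xi_1$, Lemma~\ref{lem:cond-unif-mu} applies with $I=I_{\mathrm{heavy},n}$ frozen. Conditionally on $(\Xi_1,\Gamma_1)$, the set $\Gamma_2$ is a uniform unordered sample without replacement of size $K-|\Gamma_1|$ from $I_{\mathrm{light},n}$.

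\textbf{Step 3: the good event $G$.} Let $G$ be the $(\Xi_1,\Gamma_1)$-measurable event
\[
\{|\Theta_0|\ge\lambda n\}\cap\{|I_{\mathrm{heavy},n}|\le\tfrac{\lambda}{2}n\}\cap\{|\Gamma_1|\le K/4\}.
\]
On $G$ we have $|\Theta_0\cap I_{\mathrm{light},n}|\ge\lambda n/2$ and $|I_{\mathrm{light},n}|\le n$. So the population fraction of $\Theta_0$ inside $I_{\mathrm{light},n}$ is $\mu\ge\lambda/2$, and the sample size is $k=K-|\Gamma_1|\ge 3K/4$.

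\textbf{Step 4: concentration of the sample.} Apply the lower tail of Proposition~\ref{prop:BM12} (Remark~\ref{rem:BM-lower}) to the indicator population $s_i=\mathbf 1_{\Theta_0}$ on $I_{\mathrm{light},n}$, with $b-a\le 1$. The event $|\Gamma_2\cap\Theta_0|\le tK$ forces the empirical mean to be at most $tK/k\le 4t/3$. That is a deviation of at least $\lambda/2-4t/3\ge\lambda/2-4t$ below $\mu$, which is positive because $t<\lambda/8$. Hoeffding then gives the conditional bound
\[
\exp\bigl(-2k(\lambda/2-4t)^2\bigr)\le\exp\bigl(-\tfrac{K}{2}(\lambda/2-4t)^2\bigr),
\]
where the last step uses only $k\ge K/4$, so the constants have slack. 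Integrating over $G$ yields the fourth term of the statement.

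\textbf{Step 5: the complement of $G$.} The first two pieces of $G^c$ are exactly the first two terms of the statement. For the third piece I need $\Pb\{|\Gamma_1|>K/4\}$ on the event $|I_{\mathrm{heavy},n}|\le\lambda n/2$. I would condition on $\Xi_1$ only; then $J_n$ is still a uniform $K$-subset of $[n]$, independent of $I_{\mathrm{heavy},n}$. Now $|\Gamma_1|$ is a sample sum from a population whose mean is at most $\lambda/2$. Since $\lambda<1$, the threshold $K/4$ exceeds the mean, and the upper tail of Proposition~\ref{prop:BM12} gives a bound of the form $\exp(-2K(\text{gap})^2)$ with gap $\ge 1/4-\lambda/2$.

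This step is the main obstacle. To match the stated exponent $(\tfrac34-\tfrac{\lambda}{2})$ I would need to adjust the threshold on $|\Gamma_1|$, for instance working with the light-sample fraction $|\Gamma_2|/K\ge 1/4$ rather than $|\Gamma_1|\le K/4$. The constants in Steps 3 and 4 then have to be re-tuned so that $k\ge K/4$ still controls the fourth term.

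The remaining work is careful bookkeeping of the measurability: freezing $I_{\mathrm{heavy},n}$ and $\Theta_0$ in Step 2 is legitimate because both are $(\Xi_1,\Gamma_1)$-measurable, and Lemma~\ref{lem:cond-unif-mu} is applied with that deterministic $I$.
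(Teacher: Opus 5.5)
Your approach is the paper's: freeze $\Xi_1$, note that $\Theta_0$ depends only on $(\Xi_1,\Gamma_1)$, use Lemma~\ref{lem:cond-unif-mu} to see that $\Gamma_2$ is a uniform sample of size $K-|\Gamma_1|$ from $I_{\mathrm{light},n}$, and apply Proposition~\ref{prop:BM12} twice.

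Step~5 as written is wrong. On $\{|I_{\mathrm{heavy},n}|\le\tfrac{\lambda}{2}n\}$ the population mean is only known to be at most $\lambda/2$. That exceeds $1/4$ once $\lambda\ge\tfrac12$, so the claim that $\lambda<1$ puts the threshold $K/4$ above the mean is false. The cut $|\Gamma_1|\le K/4$ therefore gives no bound for $\lambda\in[\tfrac12,1)$. Even for $\lambda<\tfrac12$ it yields the exponent $(\tfrac14-\tfrac{\lambda}{2})^2$, not the stated one.

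The repair you sketch is exactly what the paper does: split on $\{|\Gamma_1|>\tfrac34K\}$.
\begin{itemize}
\item The upper-tail gap becomes $\tfrac34-\tfrac{\lambda}{2}>\tfrac14$ for every $\lambda<1$, which gives the third term $\exp(-2K(\tfrac34-\tfrac{\lambda}{2})^2)$.
\item On $\{|\Gamma_1|\le\tfrac34K\}$ you have $k\ge K/4$, so $tK/k\le 4t$ instead of $4t/3$.
\end{itemize}
Your Step~4 then goes through verbatim: the deviation is at least $\tfrac{\lambda}{2}-4t>0$, and the bound is $\exp(-2k(\tfrac{\lambda}{2}-4t)^2)\le\exp(-\tfrac{K}{2}(\tfrac{\lambda}{2}-4t)^2)$. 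No further re-tuning of constants is needed.
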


\begin{proof}
		The first step is to separate the two bad events
		\[
		\mathcal E_1=\{\abs{\Theta_0}\ge \lambda n\},
		\qquad
		\mathcal E_2=\Bigl\{\abs{I_{\mathrm{heavy},n}}\le \tfrac{\lambda}{2}n\Bigr\}.
		\]
        and then we can bound $\abs{J_n\cap \Theta_0\cap I_{\mathrm{light},n}}$ under the good event $\mathcal E_1 \cap \mathcal E_2$.
		Formally, by union bound, we obtain
\begin{equation}\label{eq:union-new}
			\Pb\Bigl(\{\abs{J_n\cap \Theta_0\cap I_{\mathrm{light},n}}\le tK\}\Bigr)
			\le
			\Pb(\mathcal E_1^c)+\Pb(\mathcal E_2^c)
			+\Pb\Bigl(\{\abs{J_n\cap \Theta_0\cap I_{\mathrm{light},n}}\le tK\} \cap \mathcal E_1 \cap\mathcal E_2\Bigr).
		\end{equation}
		It remains to bound the term $\Pb\Bigl(\{\abs{J_n\cap \Theta_0\cap I_{\mathrm{light},n}}\le tK\} \cap \mathcal E_1 \cap\mathcal E_2\Bigr)$.
		
As is explained in the overview, the key idea show that $\Pb\Bigl(\{\abs{J_n\cap \Theta_0\cap I_{\mathrm{light},n}}\le tK\} \cap \mathcal E_1 \cap\mathcal E_2\Bigr)$ is small is to condition on $R_1$, the first $n-1$ columns of $\bar R_2$, and  the set $J_n \cap I_{\mathrm{heavy}, n}$, which in combination completely determines $\Theta_0$. To do this formally, let
		\[
		\Xi_1=(R_1,\ (X_{l,j})_{l\in[K],\,j\in[n-1]},\ J_1,\dots,J_{n-1}),
		\]which determines $R_1$ and the first $n-1$ columns of $\bar R_2$, and let
        \[
		\Xi_2=J_n.
		\]
		By construction, $\Xi_1$ and $\Xi_2$ are independent.
		Moreover, $I_{\mathrm{light},n}$ and $I_{\mathrm{heavy},n}$ are functions of only $\Xi_1$.
        We observe that the first $n-1$ columns of $R_2$ only depends on $\Xi_1$ and $J_n\cap I_{\mathrm{heavy},n}$ because $J_n$ can affect the first $n-1$ columns of $R_2$ only by causing deletions and what happens in $I_{\mathrm{light},n}$
        will not cause deletion in any case. More formally, for $i \in [n]$ and $j \in [n-1]$, we can explicitly write
        \begin{align*}
            (r_2)_{ij}=(\bar r_2)_{ij} \1_{[n]\backslash(J_n\cap I_{\mathrm{heavy},n})}(j)
        \end{align*}
		Since $\Theta_0$ depends only on the first $n-1$ columns of $R_2$ and therefore and only depends on $\bigl(\Xi_1,\ J_n\cap I_{\mathrm{heavy},n}\bigr)$,
         there exists a measurable map $\psi$ such that
		\begin{equation}\label{eq:Theta0-psi}
			\Theta_0=\psi\bigl(\Xi_1,\ J_n\cap I_{\mathrm{heavy},n}\bigr).
		\end{equation}
		
		Conditioning on $R_1$ and the first $n-1$ columns of $\bar R_2$ is relatively easy because of independence between $\Xi_1$ and $\Xi_2$. To separate the randomness from $\Xi_1$ and from $\Xi_2$ and do conditioning, we write
		\[
		\1_{\{\abs{J_n\cap \Theta_0\cap I_{\mathrm{light},n}}\le tK,\ \abs{\Theta_0}\ge \lambda n,\ \abs{I_{\mathrm{heavy},n}}\le \tfrac{\lambda}{2}n\}}
		=\phi(\Xi_1,\Xi_2)
		\]
		for some measurable $\phi$.
		Let $\phi_1(\xi_1)=\E_{\Xi_2}\phi(\xi_1,\Xi_2)$.
		By independence,
		\begin{align}\label{eq:phi1-outer}
			&\Pb\Bigl(\abs{J_n\cap \Theta_0\cap I_{\mathrm{light},n}}\le tK,\ \abs{\Theta_0}\ge \lambda n,\ \abs{I_{\mathrm{heavy},n}}\le \tfrac{\lambda}{2}n\Bigr)
			=\E\,\phi_1(\Xi_1).
		\end{align}
		
		Now we fix (condition) on the randomness from $\Xi_1$. Formally, we fix $\xi_1$ with $\abs{I_{\mathrm{heavy},n}(\xi_1)}\le \tfrac{\lambda}{2}n$, because otherwise we have $\abs{I_{\mathrm{heavy},n}(\xi_1)}>\lambda n/2$, and then $\phi_1(\xi_1)=0$, which reduces to a trivial case.
		Then $I_{\mathrm{light},n}(\xi_1)$ and $I_{\mathrm{heavy},n}(\xi_1)$ are deterministic. To also condition on $J_n\cap I_{\mathrm{heavy},n}$, we define
		\[
		\Gamma_1=J_n\cap I_{\mathrm{heavy},n}(\xi_1),
		\qquad
		\Gamma_2=J_n\cap I_{\mathrm{light},n}(\xi_1).
		\]
		By equation \eqref{eq:Theta0-psi} (with $\Xi_1=\xi_1$), we can write $\Theta_0=\psi(\xi_1,\Gamma_1)$.

		To ensure that there is still enough randomness after we conditioning on $\Xi_1$ and $\Gamma_1$, we need to show  $|\Gamma_1|$ is not too large. To separate the bad event that $|\Gamma_1|$ is large (which will happen with low probability), we write (for fixed $\xi_1$ with $\abs{I_{\mathrm{heavy},n}(\xi_1)}\le \tfrac{\lambda}{2}n$)
		\begin{align*}
			\phi(\xi_1,\Xi_2)
			&=\1_{\{\abs{\psi(\xi_1,\Gamma_1)\cap \Gamma_2}\le tK,\ \abs{\psi(\xi_1,\Gamma_1)}\ge \lambda n\}}\\
			&\le \1_{\{\abs{\Gamma_1}>\tfrac34K\}}
			+\1_{\{\abs{\psi(\xi_1,\Gamma_1)\cap \Gamma_2}\le tK,\ \abs{\psi(\xi_1,\Gamma_1)}\ge \lambda n,\ \abs{\Gamma_1}\le\tfrac34K\}}.
		\end{align*}
		Taking expectations over $\Xi_2$ gives
		\begin{align}\label{eq:phi1-split}
			\phi_1(\xi_1)
			&\le \Pb\bigl(\abs{\Gamma_1}>\tfrac34K\bigr)
			+\E\Bigl[\1_{\{\abs{\psi(\xi_1,\Gamma_1)\cap \Gamma_2}\le tK,\ \abs{\psi(\xi_1,\Gamma_1)}\ge \lambda n,\ \abs{\Gamma_1}\le\tfrac34K\}}\Bigr].
		\end{align}

		Now we will show that with high probability $|\Gamma_1|$ is not too large. And the idea is to use Hoeffding's inequality for random sampling without replacement (Proposition~\ref{prop:BM12}). We continue working with fixed $\xi_1$ with $\abs{I_{\mathrm{heavy},n}(\xi_1)}\le \tfrac{\lambda}{2}n$. Since $J_n$ is an unordered random sampling without replacement, by definition \ref{def:unorsamp}, there exists a random sequence \(T_1\) which is a random sampling of an ordered \(K\)-element sequence uniformly from \([n]\)
without replacement, such that
\[
J_n(\omega):=\{{T_1(\omega)(1)},\dots,{T_1(\omega)(k)}\}=\mathrm{Range}(T_1(\omega)).
\]
Then the sequence $S_{1}(1),\dots,S_{1}(K)$ with $S_{1}(j)=\1_{ I_{\mathrm{heavy},n}(\xi_1)}(T_1(\omega)(j))$ is a random sampling of an ordered \(K\)-element sequence uniformly from population $\mathcal S_1$ without replacement, where \[\mathcal S_1=(s_{1,1},\dots,s_{1,n})\] with
		\[
		s_{1,i}=\1_{I_{\mathrm{heavy},n}(\xi_1)}(i)\in\{0,1\}.
		\]
		Then $\sum_{j=1}^K S_{1}(j)$ is the number of sampled indices in $J_n$ that fall in $I_{\mathrm{heavy},n}(\xi_1)$, i.e.
		\[
		\sum_{j=1}^K S_{1}(j) \ {=}\ \abs{J_n\cap I_{\mathrm{heavy},n}(\xi_1)}=\abs{\Gamma_1}.
		\]
		Here $a=0$, $b=1$ in the sense of Proposition~\ref{prop:BM12}, and
		\[
		\frac{1}{n}\sum_{i=1}^n s_{1,i}=\frac{\abs{I_{\mathrm{heavy},n}(\xi_1)}}{n}\le \tfrac{\lambda}{2}
		\]
		Therefore, applying Proposition~\ref{prop:BM12} with $t=\tfrac34-\frac{1}{n}\sum_{i=1}^n s_i$, we obtain
		\begin{align}
			\Pb\bigl(\abs{\Gamma_1}>\tfrac34K\bigr)
			=&\Pb\Bigl(\frac1K\sum_{j=1}^K S_{1}(j)-\frac{1}{n}\sum_{i=1}^n s_{1,i}\ge \tfrac34-\frac{1}{n}\sum_{i=1}^n s_{1,i}\Bigr)\nonumber
			\\\le& \exp\bigl(-2K(\tfrac34-\frac{1}{n}\sum_{i=1}^n s_{1,i})^2\bigr)
			\le \exp\Bigl(-2K\bigl(\tfrac34-\tfrac{\lambda}{2}\bigr)^2\Bigr).
			\label{eq:r-tail-new}
		\end{align}

		Now, (after conditioning on $R_1$, the first $n-1$ columns of $\bar R_2$, and  the set $J_n \cap I_{\mathrm{heavy}, n}$) we can take advantage of the extra randomness  in the choice of $\Gamma_2$ to guarantee that, 
with high probability, sufficiently many of elements of the set $J_n \cap I_{\mathrm{light},n}$  fall in $\Theta_0 \cap I_{\mathrm{light},n}$. More formally, we will show that, conditioning on  the first $n-1$ columns of $\bar R_2$ and $J_n\cap I_{\mathrm{heavy},n}$, the set $J_n \cap I_{\mathrm{light},n}$ is a random uniform subset of $I_{\mathrm{light},n}$ with cardinality $K-|J_n\cap I_{\mathrm{heavy},n}|$. To this end, let $\mu(\theta_1,B)$ be a regular conditional probability of $\Gamma_2$ given $\Gamma_1$ (so intuitively, $\mu(\theta_1,B)=\Pb(B|\Gamma_1=\theta_1)$), and then we have
		\begin{align}
			&\E\Bigl[\1_{\{\abs{\psi(\xi_1,\Gamma_1)\cap \Gamma_2}\le tK,\ \abs{\psi(\xi_1,\Gamma_1)}\ge \lambda n,\ \abs{\Gamma_1}\le\tfrac34K\}} \ \Big|\ \Gamma_1\Bigr]\notag\\
			&\qquad=\1_{\{\abs{\psi(\xi_1,\Gamma_1)}\ge \lambda n,\ \abs{\Gamma_1}\le\tfrac34K\}}
			\int \1_{\{\abs{\psi(\xi_1,\Gamma_1)\cap \theta_2}\le tK\}}\ \mu(\Gamma_1,d\theta_2).
			\label{eq:condexp-mu}
		\end{align}
				Fix an index set $\theta_1\subset I_{\mathrm{heavy},n}(\xi_1)$ with $\abs{\theta_1}\le \tfrac34K$.
		By Lemma~\ref{lem:cond-unif-mu} applied with $J=J_n$ and $I=I_{\mathrm{heavy},n}(\xi_1)$, the conditional law $\mu(\theta_1,\cdot)$ is the probability distribution of a random sampling of an unordered $(K-\abs{\theta_1})$-element subset
uniformly from  $I_{\mathrm{light},n}(\xi_1)$ without replacement. 
        In other words, we have
        \begin{align*}
            \int \1_{\{\abs{\psi(\xi_1,\theta_1)\cap \theta_2}\le tK\}}\ \mu(\theta_1,d\theta_2)=\Pb_{\Upsilon}(\{\abs{\psi(\xi_1,\theta_1)\cap \Upsilon}\le tK\})
        \end{align*}
        where $\Upsilon$ is the probability distribution of a random sampling of an unordered $(K-\abs{\theta_1})$-element subset
uniformly from  $I_{\mathrm{light},n}(\xi_1)$ without replacement, in some appropriate auxiliary probability space. Enumerate the elements in $I_{\mathrm{light},n}(\xi_1)$ so that $I_{\mathrm{light},n}(\xi_1)=\{s_{2,1},...,s_{2,\abs{I_{\mathrm{light},n}}}\}$.

By definition \ref{def:unorsamp}, there exists a random sequence \(T_2\), which is a random sampling of an ordered \((K-\abs{\theta_1})\)-element sequence uniformly from \(|I_{\mathrm{light},n}(\xi_1)|\)
without replacement (also in the auxiliary probability space), such that
\[
\Upsilon(\omega)=\{s_{2,T_2(\omega)(1)},\dots,s_{2,T_2(\omega)(k)}\}=\mathrm{Range}(s \circ T_2(\omega)).
\]
		Set 
		\[
		A=\psi(\xi_1,\theta_1)\cap I_{\mathrm{light},n}(\xi_1).
		\]
		For fixed $\xi_1$ and $\theta_1$ with $\abs{\psi(\xi_1,\theta_1)}\ge \lambda n$ and $\abs{I_{\mathrm{heavy},n}(\xi_1)}\le \tfrac{\lambda}{2}n$, we have
		\begin{equation}\label{eq:A-lower}
			\abs{A}
			\ge \abs{\psi(\xi_1,\theta_1)}-\abs{I_{\mathrm{heavy},n}(\xi_1)}
			\ge \tfrac{\lambda}{2}n.
		\end{equation}
Then the sequence $S_{3}(1),\dots,S_{3}(K)$ with $S_{3}(j)=\1_{A}(s_{2,T_2(\omega)(j)})$ is a random sampling of an ordered \((K-\abs{\theta_1})\)-element sequence uniformly from population $\mathcal S_3$ without replacement, where \[\mathcal S_3=(s_{3,1},\dots,s_{3,|I_{\mathrm{light},n}(\xi_1)|})\] with
		\[
		s_{3,i}=\1_{A}(s_{2,i})\in\{0,1\}.
		\]
Also, we have
		\[
		\frac{1}{K-\abs{\theta_1}}\sum_{j=1}^{K-\abs{\theta_1}} S_3(j)=\frac{\abs{A\cap \Upsilon}}{K-\abs{\theta_1}}=\frac{\abs{\psi(\xi_1,\theta_1)\cap \Upsilon}}{K-\abs{\theta_1}},
		\qquad
		\mu=\frac{\abs{A}}{\abs{I_{\mathrm{light},n}(\xi_1)}}\ge \frac{\abs{A}}{n}\ge \tfrac{\lambda}{2},
		\]
		where we used \eqref{eq:A-lower} and $\abs{I_{\mathrm{light},n}(\xi_1)}\le n$.
		Moreover, $\abs{\theta_1}\le 3K/4$ implies $K-\abs{\theta_1}\ge K/4$, hence $tK/(K-\abs{\theta_1})\le 4t$.
		Thus the event $\{\abs{\psi(\xi_1,\theta_1)\cap \Upsilon}\le tK\}$ implies
		\[
		\frac{1}{K-\abs{\theta_1}}\sum_{j=1}^{K-\abs{\theta_1}} S_3(j) \le 4t,
		\qquad\text{i.e.}\qquad
		\frac{1}{K-\abs{\theta_1}}\sum_{j=1}^{K-\abs{\theta_1}} S_3(j)-\mu \le -( \mu-4t).
		\]
		Since $t<\lambda/8$, we have $\mu-4t\ge \lambda/2-4t>0$.
		Applying \eqref{eq:BM-hoeffding-lower}, we obtain
		\begin{align}
			\int \1_{\{\abs{\psi(\xi_1,\theta_1)\cap \theta_2}\le tK\}}\ \mu(\theta_1,d\theta_2)
            =&\Pb_{\Upsilon}(\{\abs{\psi(\xi_1,\theta_1)\cap \Upsilon}\le tK\})\notag
			\\\le &\exp\bigl(-2(K-\abs{\theta_1})(\mu-4t)^2\bigr)\notag
			\\\le& \exp\Bigl(-2\cdot\tfrac{K}{4}\bigl(\tfrac{\lambda}{2}-4t\bigr)^2\Bigr)\notag
			\\=&\exp\Bigl(-\tfrac{K}{2}\bigl(\tfrac{\lambda}{2}-4t\bigr)^2\Bigr).
			\label{eq:BM-bound-main}
		\end{align}
		Inserting \eqref{eq:BM-bound-main} into \eqref{eq:condexp-mu} and taking expectations gives
		\begin{equation}\label{eq:small-hit-bound}
			\E\Bigl[\1_{\{\abs{\psi(\xi_1,\Gamma_1)\cap \Gamma_2}\le tK,\ \abs{\psi(\xi_1,\Gamma_1)}\ge \lambda n,\ \abs{\Gamma_1}\le\tfrac34K\}}\Bigr]
			\le \exp\Bigl(-\tfrac{K}{2}\bigl(\tfrac{\lambda}{2}-4t\bigr)^2\Bigr).
		\end{equation}

		Combining \eqref{eq:phi1-split}, \eqref{eq:r-tail-new}, and \eqref{eq:small-hit-bound} yields
		\[
		\phi_1(\xi_1)
		\le \exp\Bigl(-2K\bigl(\tfrac34-\tfrac{\lambda}{2}\bigr)^2\Bigr)
		+\exp\Bigl(-\tfrac{K}{2}\bigl(\tfrac{\lambda}{2}-4t\bigr)^2\Bigr).
		\]
		This bound is uniform in $\xi_1$ with $\abs{I_{\mathrm{heavy},n}(\xi_1)}\le \tfrac{\lambda}{2}n$ (and trivial otherwise since then $\phi_1(\xi_1)=0$).
		Therefore, taking expectations in \eqref{eq:phi1-outer} gives
		\[
		\Pb\Bigl(\abs{J_n\cap \Theta_0\cap I_{\mathrm{light},n}}\le tK,\ \abs{\Theta_0}\ge \lambda n,\ \abs{I_{\mathrm{heavy},n}}\le \tfrac{\lambda}{2}n\Bigr)
		\le \exp\Bigl(-2K\bigl(\tfrac34-\tfrac{\lambda}{2}\bigr)^2\Bigr)
		+\exp\Bigl(-\tfrac{K}{2}\bigl(\tfrac{\lambda}{2}-4t\bigr)^2\Bigr).
		\]
		Combining this with \eqref{eq:union-new} yields the desired result.
	\end{proof}

\begin{corollary}[Sufficiently many good coordinates]\label{lem:lightspread}
	Let $A$ be an $n \times n$ deterministic matrix such that $\norm{A}= 1$. Let $0<\e<1$. Let $0<\alpha, \beta, \gamma<1$ and $\rho>0$. Let $V$ be an $n$ dimensional pattern matrix with parameters $\alpha, \beta, \gamma,\rho$. Let $K$ and $L$ with $L>K$ be positive integers. Let $(R_1,\{J_i\}_{i \in [n]},\overline{H},H,\{X_{i,l}\}_{i \in [n],l \in[L]},\overline{R}_2,R_2)$ be a $n$-dimensional oblivious perturbation model with pattern matrix $V$ and parameters $\rho$, $K$, and $L$. Let $M= A+ \varepsilon (R_1+R_2)$. Let $Y_1,...,Y_n$ be the columns of $M$. Let $\zeta$ be a measurable function such that for every $(y_1,\dots,y_{n-1})\in(\mathbb{R}^n)^{n-1}$, we have
		\[
		\langle \operatorname{span}(y_1,...,y_{n-1}),\zeta(y_1,\dots,y_{n-1})\rangle=0.
		\] Let $Z=\zeta(Y_1,...,Y_{n-1})$. Let $W \in \R^K$ be such that $W_l=Z_{j(n,l)}\,\mathbf{1}_{\{j(n,l)\in I_{\mathrm{light},n}\}}$. If $\delta,\alpha,L,K,n$ satisfy $L>eK$, $K \ge \log (\frac{8}{\alpha})+8\log (\frac{16}{\alpha \delta})+\frac{128}{\alpha^2} \log (\frac{16}{\alpha \delta})$, and $n \ge \frac{2}{c_{\ref{lem:randnormincomp}}} \log (\frac{16}{\alpha \delta})$, $\frac{n}{K^2} \ge \frac{128}{\alpha^2} (\frac{16}{\alpha \delta})$, then we have
	\begin{align*}
		\card(\{l \in [K]:\frac{\nu}{\sqrt{2n}}\leqslant |W_l|\leqslant \sqrt{\frac{2}{\alpha n}}\}) \ge \frac{\alpha K}{32}
	\end{align*}
	with probability $1-\frac{1}{4}\alpha \delta$.
\end{corollary}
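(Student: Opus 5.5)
The plan is to apply Lemma~\ref{lem:spreadlight} with $\lambda=\alpha/2$ and $t=\alpha/32$, and to make each of its four error terms at most $\alpha\delta/16$. First observe that the coordinates $l\in[K]$ with $j(n,l)\in J_n\cap\Theta_0\cap I_{\mathrm{light},n}$ are exactly the ones for which $W_l=Z_{j(n,l)}$ satisfies $\frac{\nu}{\sqrt{2n}}\le|W_l|\le\sqrt{2/(\alpha n)}$. Hence the cardinality in the statement is at least $\abs{J_n\cap\Theta_0\cap I_{\mathrm{light},n}}$. The event we need to rule out is therefore contained in $\{\abs{J_n\cap\Theta_0\cap I_{\mathrm{light},n}}\le tK\}$ with $t=\alpha/32$. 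This $t$ is admissible because $t<\lambda/8=\alpha/16$.

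Next I bound the four terms in turn.

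\textbf{Term 1.} On the event $Z\in\incomp(\alpha,\nu)$, Lemma~\ref{lem:sharpened-spread} gives $\abs{\Theta_0}>\alpha n/2=\lambda n$. Lemma~\ref{lem:randnormincomp} then yields $\Pb\{\abs{\Theta_0}<\lambda n\}\le\exp(-c_{\ref{lem:randnormincomp}}n/2)$. This is at most $\alpha\delta/16$ by the hypothesis $n\ge\frac{2}{c_{\ref{lem:randnormincomp}}}\log\frac{16}{\alpha\delta}$.

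\textbf{Term 2.} I use Lemma~\ref{thm:Iheavy}. Since $L>eK$, we have $(eK/L)^L<1$, so $\E\abs{I_{\mathrm{heavy},n}}\le ne^{-K}$. The assumption $K\ge\log(8/\alpha)$ gives $e^{-K}\le\alpha/8=\lambda/4$. Therefore $\tfrac{\lambda}{2}n-\E\abs{I_{\mathrm{heavy},n}}\ge\tfrac{\alpha}{8}n$, and the bound becomes $\exp(-\alpha^2n/(128K^2))$. This is at most $\alpha\delta/16$ by the hypothesis on $n/K^2$. As stated, that hypothesis carries the factor $\frac{16}{\alpha\delta}$ rather than its logarithm, which is more than enough since $x\ge\log x$.

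\textbf{Term 3.} This is $\exp(-2K(3/4-\alpha/4)^2)\le\exp(-K/2)$, using $\alpha<1$. It is at most $\alpha\delta/16$ because $K\ge 8\log\frac{16}{\alpha\delta}$.

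\textbf{Term 4.} Here $\tfrac{\lambda}{2}-4t=\tfrac{\alpha}{4}-\tfrac{\alpha}{8}=\tfrac{\alpha}{8}$. The term is therefore $\exp(-K\alpha^2/128)$, which is at most $\alpha\delta/16$ because $K\ge\frac{128}{\alpha^2}\log\frac{16}{\alpha\delta}$.

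Summing the four terms gives failure probability at most $\alpha\delta/4$, which is the claim.

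I expect no serious obstacle, since everything reduces to bookkeeping of constants. The one step that needs care is matching the choice $\lambda=\alpha/2$ to Lemma~\ref{lem:sharpened-spread}. That lemma uses strict inequalities while $\Theta_0$ uses non-strict ones. The discrepancy is harmless, because $\Theta_0$ contains the set $\Theta$ of Lemma~\ref{lem:sharpened-spread}, so $\abs{\Theta_0}\ge\abs{\Theta}>\alpha n/2$.
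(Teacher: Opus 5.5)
Your proof is correct and matches the paper's: you make the same choice $\lambda=\alpha/2$, $t=\alpha/32$ in Lemma~\ref{lem:spreadlight}, and you bound each of the four error terms by $\alpha\delta/16$ using exactly the stated hypotheses (your $\exp(-K/2)$ for the third term is slightly sharper than the paper's $\exp(-K/8)$). Both you and the paper leave one point implicit: Term~1 needs $Z\in\bS^{n-1}$, because otherwise $Z\notin\comp(\alpha,\nu)$ does not give $Z\in\incomp(\alpha,\nu)$, so $\zeta$ should be taken unit-valued (e.g.\ normalized, as ``random normal'' is meant to convey).
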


\begin{proof}
	Let \begin{align*}
		\Theta_0=\{l \in [n]:\frac{\nu}{\sqrt{2n}}\leqslant |Z_l|\leqslant \sqrt{\frac{2}{\alpha n}}\}
	\end{align*}
	
	Combining Lemma \ref{lem:sharpened-spread} and Lemma \ref{lem:randnormincomp}, we have
	\begin{align*}
		\mathbb{P}(|\Theta_0| < \frac{1}{2}\alpha n) \le \exp \paren*{-c_{\ref{lem:randnormincomp}} n/2}
	\end{align*}
	
	By Lemma \ref{thm:Iheavy}, we have
	\begin{align*}
		\mathbb{P}\Bigl\{|I_{\mathrm{heavy},n}|>\tfrac{\alpha}{4}n\Bigr\}
		\le
		\exp\Biggl(-\frac{\bigl(\tfrac{\alpha}{4}n-n e^{-K}(eK/L)^L\bigr)_+^2}{2nK^2}\Biggr)
	\end{align*}
	We observe that $(eK/L)^L<1$ if we require $L>eK$. In this case, if we also require $K> -\log (\frac{\alpha}{8})$, then we have
	\begin{align*}
		\mathbb{P}\Bigl\{|I_{\mathrm{heavy},n}|>\tfrac{\alpha}{4}n\Bigr\}
		\le
		\exp\Biggl(-\frac{\bigl(\tfrac{\alpha}{8}n\bigr)^2}{2nK^2}\Biggr)=\exp\Biggl(-\frac{{\alpha^2}}{128K^2}n\Biggr)
	\end{align*}
	
	By Lemma \ref{lem:spreadlight}, for any
	$t\in\bigl(0,\lambda/8\bigr)$, we have
	\begin{align}
		\Pb\Bigl\{\abs{J_n\cap \Theta_0\cap I_{\mathrm{light},n}}\le tK\Bigr\}
		\ &\le\
		\Pb\{\abs{\Theta_0}<\lambda n\}
		+\Pb\Bigl\{\abs{I_{\mathrm{heavy},n}}>\tfrac{\lambda}{2}n\Bigr\}\\
		&\quad
		+\exp\Bigl(-2K\bigl(\tfrac34-\tfrac{\lambda}{2}\bigr)^2\Bigr)
		+\exp\Bigl(-\tfrac{K}{2}\bigl(\tfrac{\lambda}{2}-4t\bigr)^2\Bigr).
	\end{align}
	Choosing $\lambda=\frac{1}{2}\alpha<1$ and $t=\frac{\lambda}{16}=\frac{1}{32}\alpha$, we have
	\begin{align}
		\Pb\Bigl\{\abs{J_n\cap \Theta_0\cap I_{\mathrm{light},n}}\le \frac{1}{32}\alpha K\Bigr\}
		\ &\le\
		\Pb\{\abs{\Theta_0}<\frac{1}{2}\alpha n\}
		+\Pb\Bigl\{\abs{I_{\mathrm{heavy},n}}>\frac{1}{4}\alpha n\Bigr\}\\
		&\quad
		+\exp\Bigl(-\frac {1}{8}K\Bigr)
		+\exp\Bigl(-\frac{K}{2}\bigl(\frac{1}{8}\alpha\bigr)^2\Bigr)
		\\& \le \exp \paren*{-c_{\ref{lem:randnormincomp}} n/2}
		+\exp\Biggl(-\frac{{\alpha^2}}{128K^2}n\Biggr)\\
		&\quad
		+\exp\Bigl(-\frac {1}{8}K\Bigr)
		+\exp\Bigl(-\frac{K}{2}\bigl(\frac{1}{8}\alpha\bigr)^2\Bigr)
	\end{align}
	
	Therefore, if we require $n \ge \frac{2}{c_{\ref{lem:randnormincomp}}} \log (16/(\alpha \delta))$, $\frac{n}{K^2} \ge \frac{128}{\alpha^2} \log(16/(\alpha \delta))$, $K \ge \log (\frac{8}{\alpha})+8\log (16/(\alpha \delta))+\frac{128}{\alpha^2} \log (16/(\alpha \delta))$, $L>eK$, then we have
	\begin{align*}
		&\Pb \left( \card(\{l \in [K]:\frac{\nu}{\sqrt{2n}}\leqslant |W_l|\leqslant \sqrt{\frac{2}{\alpha n}}\}) \ge \frac{\alpha K}{32} \right)
		\\=& \Pb\Bigl\{\abs{J_n\cap \Theta_0\cap I_{\mathrm{light},n}}\le \frac{1}{32}\alpha K\Bigr\} \le \frac{1}{4}(\alpha \delta)
	\end{align*}
\end{proof}

Next, we prove a small ball probability result that will be used after we condition on $(\Xi_1,\Xi_2)$.

\begin{lemma}[Small ball probability]
		\label{lem:levyforw}
		Fix integers $n,K\ge 1$.
	Let $w=(w_1,\dots,w_K)\in\mathbb R^K$ be deterministic.
	Fix parameters $\alpha>0$ and $\nu>0$ and define the index set
	\begin{equation}
		\label{eq:Theta}
		\Theta
		=\Bigl\{\,l\in[K]: \frac{\nu}{\sqrt{2n}}\le |w_l|\le \sqrt{\frac{2}{\alpha n}}\,\Bigr\}.
	\end{equation}
	Assume that $\Theta\neq\varnothing$.
	Let $(X_{l,n})_{l\in\Theta}$ be independent random signs (Rademacher variables), i.e.
	\begin{equation}
		\label{eq:rad}
		\mathbb P\{X_{l,n}=1\}=\mathbb P\{X_{l,n}=-1\}=\frac12\qquad(l\in\Theta).
	\end{equation}
	Fix deterministic constants $\e>0$ and $L>0$. 
		Then there exists a constant $c_{\ref{lem:levyforw}}>0$ such that for every
		$t\ge \e\nu/L$,
		\[
			\mathcal L\!\left(\frac{\e}{L}\sum_{l\in\Theta} w_l X_{l,n},\;\frac{t}{\sqrt n}\right)
			\le
			\frac{c_{\ref{lem:levyforw}}Lt}{\e\nu\sqrt{|\Theta|}}.
		\]
	\end{lemma}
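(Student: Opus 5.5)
The plan is to apply Rogozin's theorem (Lemma~\ref{lem:rogozin}) to the sum $\sum_{l\in\Theta}\xi_l$ with $\xi_l=\frac{\e}{L}w_lX_{l,n}$. These are independent symmetric two-point variables taking the values $\pm\frac{\e}{L}w_l$. First I will use the scaling lemma (Lemma~\ref{lem:scale}) to move the factor $\e/L$ into the radius, so that
\[
\mathcal L\!\left(\frac{\e}{L}\sum_{l\in\Theta} w_l X_{l,n},\frac{t}{\sqrt n}\right)=\mathcal L\!\left(\sum_{l\in\Theta} w_l X_{l,n},\frac{Lt}{\e\sqrt n}\right).
\]
Write $r=\frac{Lt}{\e\sqrt n}$ for this radius.

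Next I bound the concentration function of each summand at a common small radius. Put $r_l=r_0:=\frac{\nu}{2\sqrt{2n}}$ for all $l\in\Theta$. For $l\in\Theta$ we have $|w_l|\ge \nu/\sqrt{2n}$, so the two atoms $\pm w_l$ are at distance $2|w_l|\ge \sqrt2\,\nu/\sqrt n>2r_0$. Hence no interval of half-length $r_0$ contains both atoms, and $\mathcal L(w_lX_{l,n},r_0)\le 1/2$. It follows that
\[
\sum_{l\in\Theta}\bigl(1-\mathcal L(w_lX_{l,n},r_0)\bigr)r_0^2\ \ge\ \frac{|\Theta|\nu^2}{16n}>0,
\]
which is positive because $\Theta\ne\varnothing$. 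Rogozin's theorem also requires $r_0\le r$. This holds because $t\ge \e\nu/L$ gives $r\ge \nu/\sqrt n\ge r_0$; this is exactly why the hypothesis on $t$ is there.

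Applying Lemma~\ref{lem:rogozin} then gives
\[
\mathcal L\!\left(\sum_{l\in\Theta} w_lX_{l,n},r\right)\le \frac{c_{\ref{lem:rogozin}}\,r}{\sqrt{|\Theta|\nu^2/(16n)}}=\frac{4c_{\ref{lem:rogozin}}\,Lt}{\e\nu\sqrt{|\Theta|}}.
\]
This is the claimed bound with $c_{\ref{lem:levyforw}}=4c_{\ref{lem:rogozin}}$. The upper bound $|w_l|\le\sqrt{2/(\alpha n)}$ is not needed here.

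There is no substantive obstacle. The only points needing care are choosing $r_l$ small enough that each Rademacher atom pair separates, so that $\mathcal L\le 1/2$, while still satisfying $r_l\le r$. Both are checked directly from the lower threshold $\nu/\sqrt{2n}$ and the restriction $t\ge\e\nu/L$.
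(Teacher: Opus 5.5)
Your proof is correct and follows essentially the same route as the paper: rescale by $\varepsilon/L$, note that each two-point summand has concentration $1/2$ at radius $\nu/(2\sqrt{2n})$, and apply Rogozin's theorem (Lemma~\ref{lem:rogozin}) with that common radius. The only cosmetic difference is that the paper takes $s=\min\{r,\nu/(2\sqrt{2n})\}$ and uses $t\ge\varepsilon\nu/L$ only at the end to bound $r/s$, whereas you use that hypothesis up front to verify $r_0\le r$; this gives you the slightly cleaner constant $4c_{\ref{lem:rogozin}}$.
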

	
	\begin{proof}
    Let
    \begin{align*}
        T=\sum_{l\in\Theta} w_l\,X_{l,n},
        \qquad
        r=\frac{Lt}{\e\sqrt n}.
    \end{align*}
		By Lemma~\ref{lem:scale} we have
		\begin{equation}
			\label{eq:reduce}
			\mathcal L\!\left(\frac{\e}{L}\sum_{l\in\Theta} w_l X_{l,n},\frac{t}{\sqrt n}\right)
			=\mathcal L\!\left(\frac{\e T}{L},\frac{t}{\sqrt n}\right)
			=\mathcal L(T,r).
		\end{equation}
		
		Let $m=\card(\Theta)$ and enumerate $\Theta=\{l_1,\dots,l_m\}$. Also, let
		\[
		\zeta_i=w_{l_i}X_{l_i,n},\qquad i=1,\dots,m.
		\]
		Then $T=\sum_{i=1}^m \zeta_i$.
		Since the $X_{l,n}$ are independent, the $\zeta_i$ are independent as well.
		\[
		s:=\min\!\left\{r,\frac{\nu}{2\sqrt{2n}}\right\}.
		\]
		For every \(i\in[m]\), the definition of \(\Theta\) gives
		\[
		|w_{l_i}|\ge \frac{\nu}{\sqrt{2n}},
		\]
		so \(s\le |w_{l_i}|/2\). Since \(\zeta_i\) takes the two values \(\pm w_{l_i}\) with probability \(1/2\), any interval of radius \(s\) can contain at most one atom of \(\zeta_i\). Therefore
		\[
		\mathcal L(\zeta_i,s)=\frac12,
		\qquad i\in[m].
		\]
		Applying Lemma~\ref{lem:rogozin} with \(r_i=s\) for all \(i\), we obtain
		\begin{align*}
		\mathcal L(T,r)
		&\le
		\frac{c_{\ref{lem:rogozin}}\,r}{\sqrt{\sum_{i=1}^m \bigl(1-\mathcal L(\zeta_i,s)\bigr)s^2}}
		\\&=
		\frac{c_{\ref{lem:rogozin}}\,r}{\sqrt{m\,s^2/2}}
		=
		\frac{\sqrt2\,c_{\ref{lem:rogozin}}}{\sqrt m}\,\frac{r}{s}.
		\end{align*}
		By the definition of \(s\),
		\[
		\frac{r}{s}
		=
		\max\!\left\{1,\frac{2\sqrt{2n}}{\nu}r\right\}
		=
		\max\!\left\{1,\frac{2\sqrt2\,Lt}{\e\nu}\right\}.
		\]
		Since \(t\ge \e\nu/L\), we have
		\[
		\max\!\left\{1,\frac{2\sqrt2\,Lt}{\e\nu}\right\}
		\le
		\frac{(1+2\sqrt2)Lt}{\e\nu}.
		\]
		Combining this with \eqref{eq:reduce}, we conclude that
		\[
		\mathcal L\!\left(\frac{\e}{L}\sum_{l\in\Theta} w_l X_{l,n},\frac{t}{\sqrt n}\right)
		\le
		\frac{\sqrt2(1+2\sqrt2)c_{\ref{lem:rogozin}}Lt}{\e\nu\sqrt m}.
		\]
		Absorbing the absolute constants into \(c_{\ref{lem:levyforw}}\) finishes the proof.

	\end{proof}
With all the above preparation, we can now implement the conditioning argument to bound $\inf_{x \in \incomp(\alpha, \nu)} \norm{Mx}_2$.
\begin{theorem} [Invertibility for incompressible vectors]\label{thm:incomp}
    Let $A$ be an $n \times n$ deterministic matrix such that $\norm{A}= 1$. Let $0<\e<1$. Let $0<\alpha, \beta, \gamma <1$ and $\rho>0$. Let $V$ be an $n$ dimensional pattern matrix with parameters $\alpha, \beta, \gamma,\rho$. Let $K$ and $L$ with $L>K$ be positive integers. Let $(R_1,\{J_i\}_{i \in [n]},\overline{H},H,\{X_{i,l}\}_{i \in [n],l \in[L]},\overline{R}_2,R_2)$ be a $n$-dimensional oblivious perturbation model with pattern matrix $V$ and parameters $\rho$, $K$, and $L$. Let $M= A+ \varepsilon (R_1+R_2)$. Then there exist constants $c_{\ref{thm:incomp}.1}, c_{\ref{thm:incomp}.2},c_{\ref{thm:incomp}.3}$ such that for any $\alpha < c_{\ref{prop:comp}.1}$, $\nu = c_{\ref{prop:comp}.2} \varepsilon$, $K=\left\lceil \frac{c_{\ref{thm:incomp}.2}}{\delta^2\alpha^3}\right\rceil,
L=\lceil 2eK\rceil$, and $n\ge \frac{c_{\ref{thm:incomp}.1}}{\delta^5\alpha^{9}}$, we have
    \[ \Pb \paren*{ \inf_{v \in \incomp(\alpha, \nu)} \norm{Mv}_2  \ge c_{\ref{thm:incomp}.3} \e^3 \delta^2\alpha^{3}n^{-1} } \ge 1-\delta/2 \]
\end{theorem}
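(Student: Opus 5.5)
We follow the invertibility-by-distance scheme. By Lemma~\ref{invertdist} with $t$ to be chosen,
\[
\Pb\Bigl(\inf_{x\in\incomp(\alpha,\nu)}\|Mx\|_2<t\nu n^{-1/2}\Bigr)\le \frac{1}{\alpha n}\sum_{k=1}^n \Pb(\operatorname{dist}(Y_k,H_k)<t).
\]
Since the model is invariant in law under permuting columns (the sets $J_i$ are i.i.d. and the trimming rule is symmetric in columns; $V$ is not symmetric, but every lemma used only invokes the pattern property, so the argument for column $n$ applies verbatim to column $k$), it suffices to show $\Pb(\operatorname{dist}(Y_n,H_n)<t)\le \alpha\delta/2$. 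We use $\operatorname{dist}(Y_n,H_n)\ge|\langle Y_n,Z\rangle|$ with $Z=\zeta(Y_1,\dots,Y_{n-1})/\|\zeta(Y_1,\dots,Y_{n-1})\|$ from Lemma~\ref{lem:zeta-meas}. If $\zeta=0$, the first $n-1$ columns are dependent, and then $Mx=0$ for some unit $x$ supported on $[n-1]$; this case is absorbed into the compressible/incompressible analysis, or handled by choosing any measurable unit normal. We then write
\[
\langle Y_n,Z\rangle=\langle\col_n(A+\varepsilon R_1),Z\rangle+\frac{\varepsilon}{L}\sum_{l=1}^K X_{l,n}W_l,\qquad W_l=Z_{j(n,l)}\1_{\{j(n,l)\in I_{\mathrm{light},n}\}}.
\]

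The key conditioning step is as follows. The triple $(\Xi_1,\Xi_2)=(R_1,\text{first }n-1\text{ columns of }\bar R_2, J_n)$ determines $Z$, the shift $\langle \col_n(A+\varepsilon R_1),Z\rangle$, and $W$. Meanwhile $\Xi_3=(X_{l,n})_{l\in[K]}$ is independent of $(\Xi_1,\Xi_2)$. Let $\mathcal G$ be the event that $\Theta_W:=\{l:\nu/\sqrt{2n}\le|W_l|\le\sqrt{2/(\alpha n)}\}$ has cardinality at least $\alpha K/32$. This event is measurable in $(\Xi_1,\Xi_2)$, and Corollary~\ref{lem:lightspread} gives $\Pb(\mathcal G^c)\le\alpha\delta/4$ once $L>eK$, $K$ is large, and $n$ is large. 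We will check that $K=\lceil c_2/(\delta^2\alpha^3)\rceil$ and $n\ge c_1/(\delta^5\alpha^9)$ satisfy those hypotheses, including $n/K^2\gtrsim\alpha^{-2}\log(1/\alpha\delta)$, after adjusting constants.

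On $\mathcal G$ we fix $(\Xi_1,\Xi_2)$ and split the sum over $l\in\Theta_W$ and $l\notin\Theta_W$. The terms outside $\Theta_W$ together with the deterministic shift act as an independent shift, and the Lévy concentration function does not increase under adding independent variables. Lemma~\ref{lem:levyforw} with $t'=t\sqrt n$ then gives, for $t\sqrt n\ge\varepsilon\nu/L$,
\[
\Pb(|\langle Y_n,Z\rangle|<t\mid\Xi_1,\Xi_2)\le \frac{c\,L t\sqrt n}{\varepsilon\nu\sqrt{\alpha K/32}}.
\]
We choose $t$ so that this bound is at most $\alpha\delta/4$:
\[
t\asymp \frac{\alpha\delta\,\varepsilon\nu\sqrt{\alpha K}}{L\sqrt n}\asymp \frac{\alpha^{3/2}\delta\,\varepsilon^2}{\sqrt K\sqrt n},
\]
using $L\asymp K$ and $\nu=c\varepsilon$. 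The constraint $t\sqrt n\ge \varepsilon\nu/L$ reduces to $\alpha^{3/2}\delta\sqrt K\gtrsim1$. With $K\asymp \delta^{-2}\alpha^{-3}$ this holds by choosing $c_2$ large, which is what forces this choice of $K$. Averaging over $(\Xi_1,\Xi_2)$ yields $\Pb(\operatorname{dist}(Y_n,H_n)<t)\le\alpha\delta/2$, so the sum is at most $\delta/2$. The resulting bound is
\[
t\nu n^{-1/2}\asymp \frac{\alpha^{3/2}\delta\varepsilon^3}{\sqrt K\,n}\asymp \frac{\alpha^{3}\delta^{2}\varepsilon^{3}}{n},
\]
matching the claim.

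The main obstacle is bookkeeping rather than new ideas. We must verify that the hypotheses of Corollary~\ref{lem:lightspread} follow from the stated choices of $K,L,n$; in particular $n\ge c_1\delta^{-5}\alpha^{-9}$ must dominate $K^2\alpha^{-2}\log(1/\alpha\delta)\asymp\delta^{-4}\alpha^{-8}\log(\cdot)$, and it does since $\log(1/(\alpha\delta))\lesssim (\alpha\delta)^{-1}$. We must also justify measurability of the conditioning and the degenerate case $\zeta=0$.
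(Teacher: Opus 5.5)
Your proposal is correct and takes essentially the paper's route: invertibility by distance, conditioning on $(\Xi_1,\Xi_2)$ with the good-coordinate event from Corollary~\ref{lem:lightspread}, Lemma~\ref{lem:levyforw} applied to the independent signs $\Xi_3$, and the choices $K\asymp\delta^{-2}\alpha^{-3}$, $L\asymp K$. The only differences are cosmetic. The paper takes the smallest admissible threshold ($t=\varepsilon\nu/L$ in its normalization) and enlarges $K$ until the L\'evy bound is at most $\alpha\delta/4$, which for this $K$ agrees with your $t$ up to constants. You also spell out two steps the paper leaves implicit: normalizing the random normal, and discarding the terms outside $\Theta_W$ using that $\mathcal L$ does not increase under adding independent summands.
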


\begin{proof}
As in the statement of Corollary \ref{lem:lightspread}, let $Y_1,...,Y_n$ be the columns of $M$. Let $\zeta$ be a measurable function such that for every $(y_1,\dots,y_{n-1})\in(\mathbb{R}^n)^{n-1}$, we have
		\[
		\langle \operatorname{span}(y_1,...,y_{n-1}),\zeta(y_1,\dots,y_{n-1})\rangle=0.
		\] Let $Z=\zeta(Y_1,...,Y_{n-1})$. Let $W \in \R^K$ be such that $W_l=Z_{j(n,l)}\,\mathbf{1}_{\{j(n,l)\in I_{\mathrm{light},n}\}}$.
Consider the event
\begin{align*}
    \card(\{l \in [K]:\frac{\nu}{\sqrt{2n}}\leqslant |W_l|\leqslant \sqrt{\frac{2}{\alpha n}}\}) \ge \frac{\alpha}{32}K
\end{align*} 

Since this event is completely determined by  $(\Xi_1,\Xi_2)$, we can represent this event as
\begin{align*}
    \{\omega: (\Xi_1(\omega),\Xi_2 (\omega)) \in \mathscr{E}\}
\end{align*}
for some measurable set $\mathscr{E}$ in the corresponding measurable space where $(\Xi_1(\omega),\Xi_2 (\omega))$ takes values.

By Corollary \ref{lem:lightspread}, if $n \ge \frac{2}{c_{\ref{lem:randnormincomp}}} \log (16/(\alpha \delta))$, $\frac{n}{K^2} \ge \frac{128}{\alpha^2} (16/(\alpha \delta))$, $K \ge \log (\frac{8}{\alpha})+8\log (16/(\alpha \delta))+\frac{128}{\alpha^2} \log (16/(\alpha \delta))$, $L>eK$, then we have
\begin{align*}
    &\Pb(\card(\{l \in [K]:\frac{\nu}{\sqrt{2n}}\leqslant |W_l|\leqslant \sqrt{\frac{2}{\alpha n}}\}) \ge \frac{\alpha}{32}K)
    \\ =& \Pb(\{\omega: (\Xi_1(\omega),\Xi_2 (\omega)) \in \mathscr{E}\})
    \\ \ge& 1-\frac{1}{4}\delta \alpha
\end{align*}
and therefore in this case we have
\begin{align*}
    &\Pb(|\langle Y_n, Z \rangle|< \frac{t}{\sqrt{n}})
    \\ \le &\E \mathbbm{1}_{\{|\langle Y_n, Z \rangle|< \frac{t}{\sqrt{n}}\}}(\omega) \mathbbm{1}_{\{\omega: (\Xi_1(\omega),\Xi_2 (\omega)) \in \mathscr{E} \}}(\omega) +\Pb((\Xi_1(\omega),\Xi_2 (\omega)) \in \mathscr{E}^C)
    \\=&\E_{\Xi_1,\Xi_2} \E_{\Xi_3}\mathbbm{1}_{\{|\langle Y_n, Z \rangle|< \frac{t}{\sqrt{n}}\}}(\omega)\mathbbm{1}_{\{\omega: \Xi_1(\omega),\Xi_2 (\omega)) \in \mathscr{E}\}}(\omega) + \frac{1}{4}\delta \alpha
    \\=&\E_{\Xi_1,\Xi_2}[\mathbbm{1}_{\mathscr{E}}(\Xi_1,\Xi_2) \E_{\Xi_3}\mathbbm{1}_{\{|\langle Y_n, Z \rangle|< \frac{t}{\sqrt{n}}\}}(\omega)] + \frac{1}{4}\delta \alpha
    \\=&\E_{\Xi_1,\Xi_2}[\mathbbm{1}_{\mathscr{E}}(\Xi_1,\Xi_2) \Pb_{\Xi_3}(|\big\langle \mathrm{Col}_n(A+\varepsilon R_1),\, Z\big\rangle
			+\frac{\e}{L}\sum_{l=1}^K X_{l,n}\, Z_{j(n,l)}\,\mathbf{1}_{\{j(n,l)\in I_{\mathrm{light},n}\}}|< \frac{t}{\sqrt{n}})]+\frac{1}{4}\delta \alpha
\end{align*}

For fixed $(\xi_1,\xi_2) \in \mathscr{E}$, we have
\begin{align*}
    &\Pb_{\Xi_3}(|\big\langle \mathrm{Col}_n(A+\varepsilon R_1),\, Z\big\rangle
			+\frac{\e}{L}\sum_{l=1}^K X_{l,n}\, Z_{j(n,l)}\,\mathbf{1}_{\{j(n,l)\in I_{\mathrm{light},n}\}}|< \frac{t}{\sqrt{n}}) \\\le& \mathcal{L}_{\Xi_3}(\frac{\e}{L}\sum_{l=1}^K X_{l,n}\, Z_{j(n,l)}\,\mathbf{1}_{\{j(n,l)\in I_{\mathrm{light},n}\}},\frac{t}{\sqrt{n}})
    \\\le&
			\frac{c_{\ref{lem:levyforw}}Lt}{\e\nu\sqrt{\frac{\alpha}{32}K}}
\end{align*}
where the last line follow from Lemma \ref{lem:levyforw} for $t\ge \e\nu/L$.

Therefore, by iterated expectation, we have
\begin{align*}
    &\E_{\Xi_1,\Xi_2}[\mathbbm{1}_{\mathscr{E}}(\Xi_1,\Xi_2) \Pb_{\Xi_3}(|\big\langle \mathrm{Col}_n(A+R_1),\, Z\big\rangle
			+\frac{\e}{L}\sum_{l=1}^K X_{l,n}\, Z_{j(n,l)}\,\mathbf{1}_{\{j(n,l)\in I_{\mathrm{light},n}\}}|< \frac{t}{\sqrt{n}})]
            \\ \le & \E_{\Xi_1,\Xi_2}\left[\mathbbm{1}_{\mathscr{E}}(\Xi_1,\Xi_2) \frac{c_{\ref{lem:levyforw}}Lt}{\e\nu\sqrt{\frac{\alpha}{32}K}}\right]
            \\ \le & \frac{c_{\ref{lem:levyforw}}Lt}{\e\nu\sqrt{\frac{\alpha}{32}K}}
\end{align*}
and therefore we have
\begin{align*}
    \Pb(|\langle Y_n, Z \rangle|< \frac{t}{\sqrt{n}}) \le \frac{c_{\ref{lem:levyforw}}Lt}{\e\nu\sqrt{\frac{\alpha}{32}K}}+\frac{1}{4}\delta \alpha
\end{align*}
for $t\ge \e\nu/L$.

If we also require
\begin{align*}
    \frac{c_{\ref{lem:levyforw}}Lt}{\e\nu\sqrt{\frac{\alpha}{32}K}} \le \frac{1}{4}\delta \alpha ,
\end{align*}
then, by Lemma \ref{invertdist}, we have
\begin{align*}
\mathbb{P}\left(\inf_{x\in \mathrm{Incomp}(\alpha,\nu)} \|Mx\|_2 < t \nu\, n^{-1}\right)
\le &\frac{1}{\alpha n}\sum_{k=1}^{n}\mathbb{P}\left(\operatorname{dist}(Y_k,H_k)<t n^{-1/2}\right)
\\ \le & \frac{1}{\alpha n} n (\frac{1}{4}\delta \alpha + \frac{1}{4}\delta \alpha)
\\ = & \frac{1}{2}\delta
\end{align*}

In summary, the requirements are
\begin{enumerate}
\item \label{ineq:L-gt-eK}
  $L>eK$.
\item \label{ineq:K-lower}
  $K \ge \log \Bigl(\frac{8}{\alpha}\Bigr)+8\log \Bigl(\frac{16}{\alpha \delta}\Bigr)+\frac{128}{\alpha^2} \log \Bigl(\frac{16}{\alpha \delta}\Bigr)$.

\item \label{ineq:n-lower}
  $n \ge \frac{2}{c_{\ref{lem:randnormincomp}}} \log \Bigl(\frac{16}{\alpha \delta}\Bigr)$.

\item \label{ineq:n-over-K2}
  $\frac{n}{K^2} \ge \frac{128}{\alpha^2} \Bigl(\frac{16}{\alpha \delta}\Bigr)$.

\item \label{ineq:prob<delta}
$ \frac{c_{\ref{lem:levyforw}}Lt}{\e\nu\sqrt{\frac{\alpha}{32}K}} \le \frac{1}{4}\delta \alpha.$

\item \label{ineq:tlowerbd} $t\ge \e\nu/L$.
\end{enumerate}

To satisfy all the requirements, we choose
\[
K=\left\lceil \frac{S}{\delta^2\alpha^3}\right\rceil,
\qquad
L=\lceil 2eK\rceil, \qquad
t= \frac{\e\nu}{L},
\]
where \(S>0\) is a sufficiently large constant, to be chosen below. Note that the requirement \eqref{ineq:tlowerbd} is already satisfied by the choice of $t$.

We claim that with this choice, if
\[
n\ge \frac{2}{c_{\ref{lem:randnormincomp}}} \log \Bigl(\frac{16}{\alpha \delta}\Bigr)
\qquad\text{and}\qquad
n\ge \frac{8192\,S^2}{\delta^5\alpha^{9}},
\]
then all the requirements are satisfied. Since \(L=\lceil 2eK\rceil\), we have \(L>eK\). Therefore \eqref{ineq:L-gt-eK} holds.

We next verify the requirement \eqref{ineq:K-lower}. Since \(0<\alpha\le 1\) and \(0<\delta<1\), we have
\[
\log\!\Bigl(\frac{16}{\alpha\delta}\Bigr)\le \frac{16}{\alpha\delta}\le \frac{16}{\alpha^3\delta^2}.
\]
Hence
\[
8\log\!\Bigl(\frac{16}{\alpha\delta}\Bigr)\le \frac{128}{\alpha^3\delta^2}.
\]
Also,
\[
\frac{128}{\alpha^2}\log\!\Bigl(\frac{16}{\alpha\delta}\Bigr)
\le
\frac{128}{\alpha^2}\cdot \frac{16}{\alpha\delta}
\le
\frac{2048}{\alpha^3\delta^2}.
\]

Next, since \(0<\alpha, \delta \le 1\), we have
\[
\log\!\Bigl(\frac{8}{\alpha}\Bigr)\le \frac{8}{\alpha}\le \frac{64}{\alpha^3 \delta^2}.
\]

Combining the three bounds, we obtain
\begin{align*}
\log\!\Bigl(\frac{8}{\alpha}\Bigr)
+8\log\!\Bigl(\frac{16}{\alpha\delta}\Bigr)
+\frac{128}{\alpha^2}\log\!\Bigl(\frac{16}{\alpha\delta}\Bigr)
&\le
\frac{64}{\delta^2\alpha^3}
+\frac{128}{\delta^2\alpha^3}
+\frac{2048}{\delta^2\alpha^3}
\\
&=
\frac{2240}{\delta^2\alpha^3}.
\end{align*}
Hence, if \(S\ge 2240\), then
\[
K\ge
\log\!\Bigl(\frac{8}{\alpha}\Bigr)
+8\log\!\Bigl(\frac{16}{\alpha\delta}\Bigr)
+\frac{128}{\alpha^2}\log\!\Bigl(\frac{16}{\alpha\delta}\Bigr).
\]
Thus requirement \eqref{ineq:K-lower} is satisfied.

Moreover, we have, 
\begin{align*}
	\frac{c_{\ref{lem:levyforw}}Lt}{\e\nu\sqrt{\frac{\alpha}{32}K}} &= \frac{c_{\ref{lem:levyforw}}}{\sqrt{\frac{\alpha}{32}K}} \\
	& \le \frac{c_{\ref{lem:levyforw}}\sqrt{32}}{\sqrt{S}} \alpha \delta \\
	&\le \frac{\alpha \delta }{4}
\end{align*}
if $S \ge 512 c_{\ref{lem:levyforw}}^2$. Thus the requirement \eqref{ineq:prob<delta} is satisfied.

To guarantee requirement \eqref{ineq:n-over-K2}, we need
\[
\frac{n}{K^2}\ge \frac{128}{\alpha^2}\cdot \frac{16}{\alpha\delta}
=
\frac{2048}{\alpha^3\delta}.
\]
Since
\[
K=\left\lceil \frac{S}{\delta^2\alpha^3}\right\rceil,
\]
and
\[
\frac{S}{\delta^2\alpha^3}\ge 1,
\]
we have
\[
K\le \frac{2S}{\delta^2\alpha^3}.
\]
Hence it is sufficient to require
\[
n\ge \frac{2048}{\alpha^3\delta}
\left(\frac{2S}{\delta^2\alpha^3}\right)^2
=
\frac{8192\,S^2}{\delta^5\alpha^{9}}.
\]

Therefore, if
\[
n\ge \max \{\frac{2}{c_{\ref{lem:randnormincomp}}}\log\!\Bigl(\frac{16}{\alpha\delta}\Bigr),\frac{8192\,S^2}{\delta^5\alpha^{9}}\},
\]
and if we choose
\[
K=\left\lceil \frac{S}{\delta^2 \alpha^3}\right\rceil,
\qquad
L=\lceil 2eK\rceil,
\qquad
t=\frac{\e\nu}{L},
\]
with \(S\ge \max\{2240,512\,c_{\ref{lem:levyforw}}^2\}\), then all the required constraints are satisfied.

Next, we eliminate \(L\) and obtain a lower bound for our choice of \(t\) in terms of \(\e,\delta,\nu,\alpha\).

Since \(0<\delta, \alpha<1\), we have
\[
\frac{S}{\delta^2\alpha^3}\ge 1,
\]
and therefore
\[
L
= \lceil 2eK \rceil \le \frac{8eS}{\delta^2\alpha^3}.
\]
Thus
\[
\frac{1}{L}
\ge
\frac{\delta^2 \alpha^3}{8eS}.
\]
Substituting into the formula for \(t\), we obtain
\begin{align*}
t
&=
\frac{\e\nu}{L}
\\
&\ge
\frac{\e\nu\delta^2\alpha^3}{8eS}.
\end{align*}
Thus
\[
t\ge c_3\e \delta^2\nu\alpha^{3},
\qquad
c_3=\frac{1}{8eS}.
\]
Consequently,
\[
t\nu \ge c_3 \e \delta^2\nu^2\alpha^{3} = c_3c_{\ref{thm:incomp}.2}^2 \e^3 \delta^2\alpha^{3}.
\]
\end{proof}

\section{Invertibility on the whole sphere and condition number} 
\label{sec:mainthm}

Combining the results on invertibility for compressible and incompressible vectors, we can obtain the following conclusion about invertibility on the whole sphere.
\begin{theorem}[Bound for smallest singular value] \label{thm:invsphere}
    Let $A$ be an $n \times n$ deterministic matrix such that $\norm{A}= 1$. Let $0<\e<1$. Let $0<\alpha, \beta, \gamma <1$ and $\rho>0$. Let $V$ be an $n$ dimensional pattern matrix with parameters $\alpha, \beta, \gamma,\rho$. Let $K$ and $L$ with $L>K$ be positive integers. Let $(R_1,\{J_i\}_{i \in [n]},\overline{H},H,\{X_{i,l}\}_{i \in [n],l \in[L]},\overline{R}_2,R_2)$ be a $n$-dimensional oblivious perturbation model with pattern matrix $V$ and parameters $\rho$, $K$, and $L$. Let $M= A+ \varepsilon (R_1+R_2)$. Then there exist constants $c_{\ref{thm:invsphere}.1}, c_{\ref{thm:invsphere}.2}, c_{\ref{thm:invsphere}.3}, c_{\ref{thm:invsphere}.4},c_{\ref{thm:invsphere}.5}$ such that for any $\alpha < c_{\ref{thm:invsphere}.1}$, $\nu = c_{\ref{thm:invsphere}.2} \varepsilon$, $K=\left\lceil \frac{c_{\ref{thm:invsphere}.3}}{\delta^2\alpha^3}\right\rceil,
L=\lceil 2eK\rceil$, and $n\ge \frac{c_{\ref{thm:invsphere}.4}}{\delta^5\alpha^{9}}$, we have
    \[ \Pb \paren*{ s_n(M)  \ge c_{\ref{thm:invsphere}.5} \e^3 \delta^2\alpha^{3}n^{-1} } \ge 1-\delta \]
\end{theorem}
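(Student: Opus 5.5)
The proof combines Theorem \ref{prop:comp} and Theorem \ref{thm:incomp} through the decomposition $\bS^{n-1}=\comp(\alpha,\nu)\cup\incomp(\alpha,\nu)$, since
\[
s_n(M)=\min\Bigl\{\inf_{v\in\comp(\alpha,\nu)}\norm{Mv}_2,\ \inf_{v\in\incomp(\alpha,\nu)}\norm{Mv}_2\Bigr\}.
\]
First I fix the constants. Take $c_{\ref{thm:invsphere}.1}=c_{\ref{prop:comp}.1}$ and $c_{\ref{thm:invsphere}.2}=c_{\ref{prop:comp}.2}$, so that both earlier theorems apply with the same $\alpha$ and $\nu=c_{\ref{prop:comp}.2}\varepsilon$. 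Take $c_{\ref{thm:invsphere}.3}=c_{\ref{thm:incomp}.2}$, so that $K$ and $L$ are exactly as in Theorem \ref{thm:incomp}. Finally, take $c_{\ref{thm:invsphere}.4}\ge c_{\ref{thm:incomp}.1}$, enlarged further below.

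Theorem \ref{thm:incomp} then gives, with probability at least $1-\delta/2$, that the infimum over incompressible vectors is at least $c_{\ref{thm:incomp}.3}\varepsilon^3\delta^2\alpha^3 n^{-1}$. Theorem \ref{prop:comp} gives, with probability at least $1-\exp(-c_{\ref{prop:comp}.4}n)$, that the infimum over compressible vectors is at least $c_{\ref{prop:comp}.3}\varepsilon$.

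To compare the two lower bounds, note that $\varepsilon^3\delta^2\alpha^3n^{-1}\le\varepsilon$ because $\varepsilon,\delta,\alpha<1$ and $n\ge1$. So I set
\[
c_{\ref{thm:invsphere}.5}=\min\{c_{\ref{prop:comp}.3},\,c_{\ref{thm:incomp}.3}\},
\]
and on the intersection of the two good events $s_n(M)\ge c_{\ref{thm:invsphere}.5}\varepsilon^3\delta^2\alpha^3n^{-1}$.

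A union bound puts the failure probability at most $\delta/2+\exp(-c_{\ref{prop:comp}.4}n)$. It remains to make $\exp(-c_{\ref{prop:comp}.4}n)\le\delta/2$, which holds once $n\ge c_{\ref{prop:comp}.4}^{-1}\log(2/\delta)$. Since $\log(2/\delta)\le 2/\delta\le 2\delta^{-5}\alpha^{-9}$, the requirement $n\ge c_{\ref{thm:invsphere}.4}\delta^{-5}\alpha^{-9}$ covers this if I enlarge $c_{\ref{thm:invsphere}.4}$ to at least $\max\{c_{\ref{thm:incomp}.1},\,2/c_{\ref{prop:comp}.4}\}$.

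The argument is essentially bookkeeping, and the one point to check is that the two theorems are stated for the same random model and the same $\alpha,\nu$ regime. Both are, since each assumes $\alpha<c_{\ref{prop:comp}.1}$ and $\nu=c_{\ref{prop:comp}.2}\varepsilon$. Theorem \ref{prop:comp} holds for any $K<L$, in particular for the choice $L=\lceil 2eK\rceil$. I would also note that the norm bound $\norm{M}\le 3$ used inside Theorem \ref{prop:comp} holds deterministically by Remark \ref{rem:norm}, so it costs nothing in the union bound.
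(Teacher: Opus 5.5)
Your proposal is correct and follows the paper's own proof, which simply combines Theorem \ref{prop:comp} and Theorem \ref{thm:incomp} over the split $\bS^{n-1}=\comp(\alpha,\nu)\cup\incomp(\alpha,\nu)$ with a union bound. Your explicit bookkeeping fills in details the paper leaves implicit: you take $c_{\ref{thm:invsphere}.5}$ as the minimum of the two constants, and you enlarge $c_{\ref{thm:invsphere}.4}$ so that $\exp(-c_{\ref{prop:comp}.4}n)\le\delta/2$.
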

\begin{proof}
    This result now follows directly from combining Theorem \ref{prop:comp} and Theorem \ref{thm:incomp}.
\end{proof}

\begin{theorem}[Bound for condition number] \label{thm:condnum}
    Let $A$ be an $n \times n$ deterministic matrix such that $\norm{A}= 1$. Let $0<\e<1$. Let $0<\alpha, \beta, \gamma <1$ and $\rho>0$. Let $V$ be an $n$ dimensional pattern matrix with parameters $\alpha, \beta, \gamma,\rho$. Let $K$ and $L$ with $L>K$ be positive integers. Let $(R_1,\{J_i\}_{i \in [n]},\overline{H},H,\{X_{i,l}\}_{i \in [n],l \in[L]},\overline{R}_2,R_2)$ be a $n$-dimensional oblivious perturbation model with pattern matrix $V$ and parameters $\rho$, $K$, and $L$. Let $M= A+ \varepsilon (R_1+R_2)$. Then there exist constants $c_{\ref{thm:condnum}.1}, c_{\ref{thm:condnum}.2}, c_{\ref{thm:condnum}.3}, c_{\ref{thm:condnum}.4},c_{\ref{thm:condnum}.5}$ such that for any $\alpha < c_{\ref{thm:condnum}.1}$, $\nu = c_{\ref{thm:condnum}.2} \varepsilon$, $K=\left\lceil \frac{c_{\ref{thm:condnum}.3}}{\delta^2\alpha^3}\right\rceil,
L=\lceil 2eK\rceil$, and $n\ge \frac{c_{\ref{thm:condnum}.4}}{\delta^5\alpha^{9}}$, we have
    \[ \Pb \paren*{ \kappa(M)  \le \frac{c_{\ref{thm:condnum}.5}n} {\e^3 \delta^2\alpha^{3}} } \ge 1-\delta \]
\end{theorem}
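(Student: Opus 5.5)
The plan is to bound the two factors of $\kappa(M)=\|M\|\,\|M^{-1}\|=\|M\|/s_n(M)$ separately and then absorb all absolute constants into $c_{\ref{thm:condnum}.5}$.

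\textbf{Upper bound on $\|M\|$.} This factor is deterministic. By Remark~\ref{rem:norm}, which only uses property (1) of the pattern matrix together with the Schur test applied to $R_2$ (every row and column of $H$ has at most $L$ nonzeros, each equal to $\pm 1/L$), we have $\|R_1+R_2\|\le 2$ for every realization. Since $\|A\|=1$ and $\e<1$, this gives
\[
\|M\|\le \|A\|+\e\|R_1+R_2\|\le 3
\]
surely, with no probabilistic cost.

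\textbf{Lower bound on $s_n(M)$.} Take the constants $c_{\ref{thm:condnum}.1},\dots,c_{\ref{thm:condnum}.4}$ equal to $c_{\ref{thm:invsphere}.1},\dots,c_{\ref{thm:invsphere}.4}$. Then the hypotheses on $\alpha$, $\nu$, $K$, $L$ and $n$ are exactly those of Theorem~\ref{thm:invsphere}. That theorem gives, with probability at least $1-\delta$,
\[
s_n(M)\ge c_{\ref{thm:invsphere}.5}\,\e^3\delta^2\alpha^3 n^{-1}.
\]
This bound is positive, so $M$ is invertible on this event.

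\textbf{Combining.} On the same event,
\[
\kappa(M)\le \frac{3n}{c_{\ref{thm:invsphere}.5}\,\e^3\delta^2\alpha^3},
\]
so the claim holds with $c_{\ref{thm:condnum}.5}=3/c_{\ref{thm:invsphere}.5}$.

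The only points needing care are, first, that the constant thresholds match those of Theorem~\ref{thm:invsphere}, which we arrange by definition. Second, the norm bound must hold surely and not merely with high probability, so that no extra $\delta$ is lost; Remark~\ref{rem:norm} provides this. There is no genuine obstacle, since the substantive work is already contained in Theorem~\ref{thm:invsphere}.
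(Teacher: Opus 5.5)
Your proposal is correct and takes the same approach as the paper. The paper also bounds $s_1(M)\le 3$ deterministically via Remark~\ref{rem:norm} and then invokes Theorem~\ref{thm:invsphere} for the lower bound on $s_n(M)$; your explicit choice $c_{\ref{thm:condnum}.5}=3/c_{\ref{thm:invsphere}.5}$, and your observation that the norm bound holds surely so no extra failure probability is lost, simply spell out the paper's short argument.
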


\begin{proof}
    Recall that we wish to show, 
$$\P \paren*{\kappa(M) \le C(\varepsilon, \delta)n}\ge 1-\delta.$$
where $M = A + \varepsilon(R_1 + R_2)$ for a deterministic matrix $A$ with $\norm{A} \le 1$.

By Remark \ref{rem:norm}, we have $s_1(M) \le \norm{A} + \varepsilon \norm{R_1} + \varepsilon \norm{R_2} \le 3$, so it is sufficient to give a lower bound for
\[s_n(M)= \inf_{x \in \bS^{n-1}} \norm{Mx}_2 \]
which follows from Theorem \ref{thm:invsphere}.

\end{proof}

\begin{proof}[Proof of Theorem \ref{coro:pertur}]
The claim about the smallest singular value follows directly from Theorem \ref{thm:invsphere}, and we can assume $\norm{R} \le 1$ by working with $\e/2$ instead of $\e$. More precisely, if we set $R=\frac{R_1+R_2}{2}$ then $\norm{R} \le 1$ by Remark \ref{rem:norm} and then $A+\e R=A+(\e/2)(R_1+R_2)$. So we only need to calculate the number of random bits needed to generate the corresponding random perturbation model.

To calculate the number of random bits needed to construct $R_2$, recall that
\begin{equation*}
        R_1=\frac{1}{\rho \sqrt{n}} \cdot D_1 V D_2 
    \end{equation*}
for $D_1 = \diag(\eta_{1,1} \etc \eta_{1,n})$ and $D_2 = \diag(\eta_{2,1} \etc \eta_{2,n})$ where $\eta_{1,1} \etc \eta_{1,n}$ and $\eta_{2,1} \etc \eta_{2,n}$ are i.i.d. Rademacher random variables, and $V$ is a deterministic pattern matrix. The number of random bits to generate $D_1$ and $D_2$ are $O(n)$. Also, by Corollary \ref{cor:randomnessV}, the number of random bits needed to generate $V$ is $O((\log n)^2)$. So, in summary, the number of random bits needed to generate $R_1$ is $O(n)$.

    To calculate the number of random bits needed to construct $R_2$, let $\mathcal{Z}$ be the set of injections from $[k]$ to $[n]$, and we observe that there is a bijection $\Phi$ from $\mathcal X_{n,k}:=[n]\times[n-1]\times\cdots\times[n-k+1]$ to $\mathcal{Z}$. Let
		$
		U=(U_1,\dots,U_k)
		$
		be a random variable with values in $\mathcal X_{n,k}$ such that the coordinates are independent and satisfy
		\[
		U_1\sim \mathrm{Uniform}([n]), \quad U_2\sim\mathrm{Uniform}([n-1]), \quad \dots, \quad U_k\sim\mathrm{Uniform}([n-k+1]).
		\]
        Therefore, by \cite[Theorem 5.11.2]{cover2006elements}, generating each $U_i$ takes $O(\log (n))$ random bits.
		Define
		\[
		T:=\Phi(U).
		\]
		Let
		\[
		\mathcal S=(s_1,\dots,s_n)
		\]
		be a finite ordered sequence with pairwise distinct entries, and define
		\[
		J=\mathrm{Range}(s\circ T)=\{s_{T(1)},\dots,s_{T(k)}\}.
		\]
		Then by Definition \ref{def:orsamp}, the random sequence $T$ is a random sampling of an ordered $k$-element sequence uniformly from $[n]$ without replacement, and therefore, by Definition \ref{def:unorsamp}, the random set $J$ is a random sampling of an unordered $k$-element subset uniformly from $\mathcal J=\{s_1,\dots,s_n\}$ without replacement, As a result, we can generate $J$ from $O(K \log n)$ random bits. Since $K=\left\lceil \frac{c_{\ref{thm:condnum}.3}}{\delta^2\alpha^3}\right\rceil=O(\frac{1}{\delta^2})$, the total number of random bits need to generate $R_2$ is $O(\frac{n \log n}{\delta^2})$.

\end{proof}
\section{Solving Linear Systems in Linear Space}
\label{s:linear}
Here, we provide details regarding the application of our perturbation construction to solving linear systems in linear space up to some small backward error by using a perturbed variant of Conjugate Gradient. 

\subsection{Floating-point arithmetic model}
We start by describing the finite precision model we adopt, which is largely based on that of \cite{musco2018stability}. 
\begin{definition}[$\epsilon_{\mach}$-arithmetic]
    Let $\fl(x\circ y)$ be the output of computing $x\circ y$ in the $\epsilon_{\mach}$-arithmetic, where $\circ$ denotes one of the four basic arithmetic operations. We require that $\fl(x\circ y) = (1+\delta)(x\circ y)$, where $|\delta|\leq \epsilon_{\mach}$. Also, we require that $\fl(\sqrt{x}) = (1+\delta)\sqrt{x}$, where $|\delta|\leq \epsilon_{\mach}$.
\end{definition}
Here, $\epsilon_{\mach}$ is the relative precision in floating-point arithmetic, also known as the machine epsilon. This model closely follows well-established practice in numerical linear algebra for the stability analysis of matrix algorithms \cite{Higham:2002:ASNA}, and it can be implemented on any floating-point computer with $\geq \log_2(1/\epsilon_{\mach})$ bits of precision following the IEEE 754 standard \cite{muller2018handbook}, as long as operations do not overflow or underflow; see \cite[Section 5]{musco2018stability} for further discussion.

In our model, we allow only indirect access to the input matrix $A$ through matrix-vector products with $A$ and $A^\top$, which may or may not come from a direct floating-point computation. For instance, $A$ may be stored in a compressed format or only available implicitly as an operator. To simulate this, we define the following notion of an inexact matrix-vector (matvec) operation.
\begin{definition}[$\epsilon_{\mach}$-matvec]
    Given matrix $A\in\R^{n\times n}$ and $\epsilon_{\mach}\in(0,1)$, Algorithm $\mathcal{A}$ is called an $\epsilon_{\mach}$-matvec for $A$ if for any compatible vector $w$,
    \begin{align*}
        \|\mathcal{A}(w) -Aw\|_2\leq \epsilon_{\mach}\|A\|\|w\|_2.
    \end{align*}
\end{definition}
This notion of matvec is roughly consistent with our floating-point computation model in that a direct computation of $\fl(Aw)$ is a matvec algorithm up to a $\poly(n)$ factor in $\epsilon_{\mach}$ \cite{Higham:2002:ASNA}.
\begin{lemma}[From arithmetic to matvec]\label{l:arithmetic-to-matvec}
Given matrix $A\in \R^{n\times n}$ and $\epsilon_{\mach}\in(0,1/2n)$, if $\mathcal{A}(w)=\fl(Aw)$ is computed in some $\epsilon_{\mach}$-arithmetic, then it is a $2n^{3/2}\epsilon_{\mach}$-matvec.
\end{lemma}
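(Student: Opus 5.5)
The plan is to bound the rounding error of each output coordinate $(Aw)_i=\sum_{j=1}^n a_{ij}w_j$ separately, and then assemble the coordinates into an $\ell_2$ bound. I will assume $\fl(Aw)$ means each coordinate is computed by $n$ floating-point multiplications followed by a recursive (left-to-right) summation in $\epsilon_{\mach}$-arithmetic. Any other summation order gives the same final bound with the same argument.

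\textbf{Step 1: one coordinate.} Following the standard backward error analysis of inner products \cite{Higham:2002:ASNA}, each computed product is $a_{ij}w_j(1+\delta_j)$ with $|\delta_j|\le\epsilon_{\mach}$. Each partial sum picks up one more factor $(1+\delta')$. Hence the computed coordinate has the form
\[
\widehat{(Aw)}_i=\sum_{j=1}^n a_{ij}w_j(1+\theta_{ij}),
\]
where each $1+\theta_{ij}$ is a product of at most $n$ factors $(1+\delta)$ with $|\delta|\le\epsilon_{\mach}$. Since $n\epsilon_{\mach}<1/2$, the elementary bound $|\theta_{ij}|\le \frac{n\epsilon_{\mach}}{1-n\epsilon_{\mach}}\le 2n\epsilon_{\mach}$ applies. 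Checking this bound (by induction on the number of factors) is the only mildly technical point. It is also where the hypothesis $\epsilon_{\mach}<1/2n$ enters.

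\textbf{Step 2: absolute-value bound.} From Step 1 we get, entrywise,
\[
|\widehat{Aw}-Aw|\le 2n\epsilon_{\mach}\,|A|\,|w|,
\]
where $|A|$ and $|w|$ denote entrywise absolute values. Therefore
\[
\|\widehat{Aw}-Aw\|_2\le 2n\epsilon_{\mach}\,\||A|\|\,\|w\|_2.
\]

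\textbf{Step 3: pass from $|A|$ back to $A$.} The remaining task is to compare $\||A|\|$ with $\|A\|$. Using Frobenius norms,
\[
\||A|\|\le\||A|\|_F=\|A\|_F\le\sqrt n\,\|A\|.
\]
Combining with Step 2 gives
\[
\|\mathcal A(w)-Aw\|_2\le 2n^{3/2}\epsilon_{\mach}\|A\|\|w\|_2,
\]
which is exactly the definition of a $2n^{3/2}\epsilon_{\mach}$-matvec. I do not expect a real obstacle here. The only care needed is the constant in $|\theta_{ij}|\le 2n\epsilon_{\mach}$, and possibly noting that the first product incurs no extra addition error, which only improves the bound.
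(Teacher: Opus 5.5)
Your proof is correct and yields exactly the stated constant. The paper gives no proof of its own, only a pointer to Higham's standard rounding-error analysis, and your argument is that analysis: the componentwise inner-product bound $|\fl(Aw)-Aw|\le \frac{n\epsilon_{\mach}}{1-n\epsilon_{\mach}}|A||w|\le 2n\epsilon_{\mach}|A||w|$ (this is where $\epsilon_{\mach}<1/2n$ enters, and it holds for any summation tree), followed by $\|\,|A|\,\|\le\|A\|_F\le\sqrt n\,\|A\|$.
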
    
We next introduce some auxiliary lemmas for combining several matvec algorithms together in different ways, which will be useful in our analysis.
\begin{lemma}[Perturbed matvec]\label{l:perturbed-matvec}
    Given $\epsilon_{\mach}\in(0,1/2n)$, let $\mathcal{A}$ be an $\epsilon_{\mach}$-matvec for $A\in\R^{n\times n}$ and $\mathcal{E}$ be an $\epsilon_{\mach}$-matvec for $E\in\R^{n\times n}$ such that $\|E\|\leq \|A\|/2$. Then, $\mathcal{M}(w) = \fl(\mathcal{A}(w)+\mathcal{E}(w))$ is a $9\epsilon_{\mach}$-matvec for $A+E$.
\end{lemma}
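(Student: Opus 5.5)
The plan is to write the computed vector as the exact product plus a few error terms, and bound each error by $\epsilon_{\mach}$ times norms. Let $a=\mathcal{A}(w)$ and $e=\mathcal{E}(w)$. By the definition of an $\epsilon_{\mach}$-matvec,
\[
a = Aw + r_A,\quad \|r_A\|_2\le \epsilon_{\mach}\|A\|\|w\|_2, \qquad e = Ew + r_E,\quad \|r_E\|_2\le \epsilon_{\mach}\|E\|\|w\|_2\le \tfrac12\epsilon_{\mach}\|A\|\|w\|_2 .
\]
The floating-point addition is performed coordinatewise. Hence $\fl(a+e)_i=(1+\delta_i)(a_i+e_i)$ with $|\delta_i|\le\epsilon_{\mach}$, which gives $\fl(a+e)=a+e+r_{\fl}$ with $\|r_{\fl}\|_2\le \epsilon_{\mach}\|a+e\|_2$.

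Next I bound $\|a+e\|_2$. We have
\[
\|a+e\|_2\le \|Aw\|_2+\|Ew\|_2+\|r_A\|_2+\|r_E\|_2\le \bigl(\tfrac32+\tfrac32\epsilon_{\mach}\bigr)\|A\|\|w\|_2\le 2\|A\|\|w\|_2,
\]
using $\epsilon_{\mach}<1/2n\le 1/2$. The total error is then
\[
\|\mathcal{M}(w)-(A+E)w\|_2\le \|r_A\|_2+\|r_E\|_2+\|r_{\fl}\|_2\le \bigl(1+\tfrac12+2\bigr)\epsilon_{\mach}\|A\|\|w\|_2 = \tfrac72\epsilon_{\mach}\|A\|\|w\|_2 .
\]

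The one step that needs care, and the only real obstacle, is that the definition asks for the error relative to $\|A+E\|$, not $\|A\|$. Since $\|E\|\le\|A\|/2$, the reverse triangle inequality gives $\|A+E\|\ge\|A\|-\|E\|\ge\|A\|/2$, so $\|A\|\le 2\|A+E\|$. Substituting, the error is at most $7\epsilon_{\mach}\|A+E\|\|w\|_2\le 9\epsilon_{\mach}\|A+E\|\|w\|_2$. This is the claim with room to spare.
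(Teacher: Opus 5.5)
Your argument is correct and is essentially the paper's. Both split the error into the two matvec errors plus the rounding error $\epsilon_{\mach}\|\mathcal{A}(w)+\mathcal{E}(w)\|_2$, then trade $\|A\|$ for $\|A+E\|$ using $\|A+E\|\ge\|A\|/2$; the paper routes this last step through $\|A\|+\|E\|\le 3\|A+E\|$.

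There is one harmless slip: $\tfrac32+\tfrac32\epsilon_{\mach}\le 2$ needs $\epsilon_{\mach}\le 1/3$, which can fail when $n=1$. Using $\tfrac94$ instead still gives $\tfrac{15}{4}\epsilon_{\mach}\|A\|\|w\|_2\le\tfrac{15}{2}\epsilon_{\mach}\|A+E\|\|w\|_2\le 9\epsilon_{\mach}\|A+E\|\|w\|_2$, so the claim holds.
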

\begin{proof}
    Using that for any compatible vectors $u$ and $v$, $\|\fl(u+v)-(u+v)\|_2\leq \epsilon_{\mach}\|u+v\|_2$, 
    \begin{align*}
        \|\fl(\mathcal{A}(w) + \mathcal{E}(w)) - (A+E)w\|_2
        &\leq \|\fl(\mathcal{A}(w) + \mathcal{E}(w)) - (\mathcal{A}(w) +\mathcal{E}(w))\|_2\\
        &\quad + \|\mathcal{A}(w) - Aw\|_2 + \|\mathcal{E}(w) - Ew\|_2
        \\
        &\leq \epsilon_{\mach}\|\mathcal{A}(w) +\mathcal{E}(w)\|_2+ \|\mathcal{A}(w) - Aw\|_2 + \|\mathcal{E}(w) - Ew\|_2
        \\
        &\leq (1+\epsilon_{\mach})\Big(\|\mathcal{A}(w) - Aw\|_2 + \|\mathcal{E}(w) - Ew\|_2\Big) + \epsilon_{\mach}\|(A+E)w\|_2
        \\
        &\leq (1+\epsilon_{\mach})2\epsilon_{\mach}(\|A\|+\|E\|)\|w\|_2.
    \end{align*}
    Finally, observe that $\|A\|+\|E\|\leq (1+1/2)\|A\|$ and $\|A+E\|\geq \|A\|-\|E\|\geq (1-1/2)\|A\|$, which together imply that $\|A\|+\|E\| \leq \frac{1+1/2}{1-1/2}\|A+E\|$. Since $\epsilon_{\mach}\leq 1/2$, we get the claim.
\end{proof}

\begin{lemma}[Composed matvec]\label{l:composed-matvec}
For some $\epsilon_{\mach}\in(0,1)$, let $\mathcal{A}$ and $\mathcal{A}'$ be $\epsilon_{\mach}$-matvecs for $A\in\R^{n\times n}$ and $A^\top$, respectively. Then, $\mathcal{A}'(\mathcal{A}(\cdot))$ is a $3\epsilon_{\mach}$-matvec for $A^\top A$.
\end{lemma}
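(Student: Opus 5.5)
We need to show that $\mathcal{A}'(\mathcal{A}(w))$ is within $3\epsilon_{\mach}\|A^\top A\|\|w\|_2$ of $A^\top A w$. The plan is to split the error into two pieces with a triangle inequality, bound each using the two matvec guarantees, and then convert from $\|A\|^2$ to $\|A^\top A\|$.

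Fix $w$ and set $u=\mathcal{A}(w)$. Inserting the intermediate term $A^\top u$ gives
\begin{align*}
\|\mathcal{A}'(u)-A^\top A w\|_2 \le \|\mathcal{A}'(u)-A^\top u\|_2 + \|A^\top(u-Aw)\|_2 .
\end{align*}

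For the first term, apply the $\epsilon_{\mach}$-matvec property of $\mathcal{A}'$ for $A^\top$ to the vector $u$. This gives the bound $\epsilon_{\mach}\|A^\top\|\|u\|_2=\epsilon_{\mach}\|A\|\|u\|_2$.

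To control $\|u\|_2$, use the matvec property of $\mathcal{A}$ together with the triangle inequality:
\begin{align*}
\|u\|_2\le \|Aw\|_2+\epsilon_{\mach}\|A\|\|w\|_2\le(1+\epsilon_{\mach})\|A\|\|w\|_2 .
\end{align*}
Hence the first term is at most $\epsilon_{\mach}(1+\epsilon_{\mach})\|A\|^2\|w\|_2$.

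For the second term, bound it by $\|A^\top\|\,\|u-Aw\|_2$ and apply the matvec property of $\mathcal{A}$. This gives at most $\epsilon_{\mach}\|A\|^2\|w\|_2$.

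Adding the two bounds, the total error is at most $\epsilon_{\mach}(2+\epsilon_{\mach})\|A\|^2\|w\|_2$. Since $\epsilon_{\mach}<1$, we have $2+\epsilon_{\mach}\le 3$.

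The last step is to replace $\|A\|^2$ by $\|A^\top A\|$. This follows from the identity $\|A^\top A\|=\|A\|^2$, i.e.\ the $C^*$-identity, which holds because the largest singular value of $A^\top A$ is $s_1(A)^2$. This yields the claimed $3\epsilon_{\mach}$-matvec guarantee for $A^\top A$.

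The only points needing care are using the operator norm of $A^\top$ in both terms and the norm identity at the end. There is no real obstacle in the argument.
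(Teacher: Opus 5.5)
Your proof is correct and follows the paper's argument: you use the same triangle inequality through the intermediate vector $A^\top\mathcal{A}(w)$ and the same bound $\|\mathcal{A}(w)\|_2\le(1+\epsilon_{\mach})\|A\|\|w\|_2$ (the paper derives it via $\|v-Aw\|_2+\|Aw\|_2$), and you finish with the identity $\|A^\top A\|=\|A\|^2$. The only cosmetic difference is that you keep the sharper constant $\epsilon_{\mach}(2+\epsilon_{\mach})$ before rounding up to $3\epsilon_{\mach}$, whereas the paper bounds $1+\epsilon_{\mach}\le 2$ directly.
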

\begin{proof}
    Denoting $v = \mathcal{A}(w)$, we have:
    \begin{align*}
        \|\mathcal{A'}(v) - A^\top Aw\|_2 
        &\leq 
        \|\mathcal{A'}(v) - A^\top v\|_2 + \|A^\top v - A^\top Aw\|_2
        \\
        &\leq \epsilon_{\mach}\|A\|\|v\|_2 + \|A\|\|v-Aw\|_2
        \\
        &\leq (1 + \epsilon_{\mach})\|A\|\|v-Av\|_2 + \epsilon_{\mach}\|A\|\|Aw\|_2
        \\
        &\leq 2\|A\|\epsilon_{\mach}\|A\|\|w\|_2 + \epsilon_{\mach}\|A\|^2\|w\|_2
        \\
        & = 3\epsilon_{\mach}\|A^\top A\|\|w\|_2,
    \end{align*}
    which concludes the proof.
\end{proof}
We next give a lemma which shows that, under certain conditions, a matvec algorithm can be simulated using some floating-point arithmetic. We note that this is not true in general, but it can be ensured as long as the underlying matrix does not contain entries that are extremely close~to~zero.
\begin{lemma}[From matvec to arithmetic]\label{l:matvec-to-arithmetic}
Let $\mathcal{A}$ be an $\epsilon_{\mach}$-matvec for a matrix $A\in\R^{n\times n}$ whose each entry satisfies $|a_{ij}|\geq \|A\|/L$ for some $L\geq 1$. Then, for any $w\in\R^n$, the output of $\mathcal{A}(w)$ can be obtained by computing $\fl(Aw)$ in some $L\epsilon_{\mach}$-arithmetic.
\end{lemma}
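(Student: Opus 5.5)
We need to show that the error of $\mathcal{A}$ can be absorbed into entrywise relative perturbations of size $L\epsilon_{\mach}$ in a floating-point evaluation of $Aw$. Concretely, we aim to exhibit, for each fixed $w$, a valid $L\epsilon_{\mach}$-arithmetic whose rounding choices make the computed $\fl(Aw)$ equal exactly $\mathcal{A}(w)$.

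Fix $w$ and write $e=\mathcal{A}(w)-Aw$, so $\|e\|_2\le\epsilon_{\mach}\|A\|\|w\|_2$. Evaluate $(Aw)_i=\sum_j a_{ij}w_j$ in the standard order: multiply each pair, then add up the products. We let every addition be exact (relative error $0$, which is allowed), and we place all of the error into the multiplications. Thus we want $\fl(a_{ij}w_j)=(1+\delta_{ij})a_{ij}w_j$ with
\[
\sum_j \delta_{ij}a_{ij}w_j=e_i,\qquad |\delta_{ij}|\le L\epsilon_{\mach}.
\]
We take $\delta_{ij}=e_i\,\sign(w_j)\,s_{ij}/(a_{ij}\|w\|_1)$, where $s_{ij}=\sign(a_{ij})$. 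Note that $a_{ij}\neq 0$ by hypothesis. Then $\sum_j\delta_{ij}a_{ij}w_j=e_i\sum_j|w_j|/\|w\|_1=e_i$, as required. For the size bound,
\[
|\delta_{ij}|\le \frac{|e_i|}{|a_{ij}|\,\|w\|_1}\le \frac{\epsilon_{\mach}\|A\|\|w\|_2}{(\|A\|/L)\|w\|_1}\le L\epsilon_{\mach},
\]
using $\|w\|_2\le\|w\|_1$ and the hypothesis $|a_{ij}|\ge\|A\|/L$. Indices with $w_j=0$ contribute nothing and can be given $\delta_{ij}=0$. If $w=0$, then $e=0$ and the claim is trivial.

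Summation order does not matter, since all additions are exact. The only step needing care is the point that the model permits an adversarially chosen $\delta$ per operation, so the dependence of $\delta_{ij}$ on $w$ and $e$ is legitimate. I expect no real obstacle: the lower bound on the entries is exactly what prevents division by tiny $a_{ij}$.
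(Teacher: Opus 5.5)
Your argument is correct in substance and is essentially the paper's proof made explicit. The paper notes that in $\epsilon'$-arithmetic the $i$th coordinate of $\fl(Aw)$ can be steered anywhere within $\epsilon'|a_i|^\top|w|\ge\epsilon'\|A\|\|w\|_2/L$ of $(Aw)_i$. So the attainable set contains a cube, hence the Euclidean ball of radius $\epsilon_{\mach}\|A\|\|w\|_2$ once $\epsilon'=L\epsilon_{\mach}$. Your choice of $\delta_{ij}$ (exact additions, all error in the products) together with $|e_i|\le\|e\|_2$ is exactly that cube-contains-ball step.

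There is one slip in your formula for $\delta_{ij}$. Since $s_{ij}/a_{ij}=1/|a_{ij}|$, it gives $\delta_{ij}a_{ij}w_j=e_i s_{ij}|w_j|/\|w\|_1$. The row sum is then $e_i$ only when the $a_{ij}$ on the support of $w$ all share the positive sign. Drop the factor $s_{ij}$ and take $\delta_{ij}=e_i\sign(w_j)/(a_{ij}\|w\|_1)$. This gives $\sum_j\delta_{ij}a_{ij}w_j=e_i$ with the same bound $|\delta_{ij}|\le L\epsilon_{\mach}$.
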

\begin{proof}
    Let $a_i$ be the $i$th row of $A$. Then, the output in $\epsilon_{\mach}'$-arithmetic for $a_i^\top w$ can lie anywhere in the interval $[(1-\epsilon_{\mach}')|a_i|^\top|w|,(1+\epsilon_{\mach}')|a_i|^\top|w|]$, where $|w|$ denotes the entry-wise absolute value. Moreover, 
    \begin{align*}
        \frac{|a_i|^\top|w|}{\|w\|_2} \geq \sqrt{\sum_ja_{ij}^2\frac{w_j^2}{\|w_j\|_2^2}}\geq \sqrt{\min_j a_{ij}^2}\geq \|A\|/L.
    \end{align*}
    Thus, the set of values that the vector $\fl(Aw)$ can attain contains the shifted $n$-dimensional cube $Aw + [-\epsilon_{\mach}'\|A\|\|w\|_2/L,\epsilon_{\mach}'\|A\|\|w\|_2/L]^n$. Therefore, it also contains a ball centered at $Aw$ with radius $\epsilon_{\mach}'\|A\|\|w\|_2/L$. Setting $\epsilon_{\mach}'=L\epsilon_{\mach}$ yields the claim.
\end{proof}
Finally, we show that the condition in Lemma \ref{l:matvec-to-arithmetic} can be attained by applying a simple rank one perturbation to the matrix.
\begin{lemma}[Perturbation away from zero]\label{l:away-from-zero}
    Let $n\geq 1$, $\delta\in(0,1)$, and $L\geq 4n^3/\delta$. Using $O(\log(n/\delta))$ random bits, we can draw a random variable $\gamma$ such that $|\gamma|\leq 4\sqrt{n/(\delta L)}$, and for any $A\in\R^{n\times n}$, matrix $\tilde A = A + \gamma \alpha\|A\|\cdot\mathbf{1}_n\mathbf{1}_n^\top$ for $\alpha\in (0,1]$ satisfies: with probability $1-\delta$, every entry of $\tilde A^\top\tilde A$ has absolute value $\geq\alpha^2\|\tilde A\|^2/L$.
\end{lemma}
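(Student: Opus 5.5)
The plan is to expand $\tilde A^\top\tilde A$ entrywise, observe that each entry is a quadratic polynomial in $\gamma$ with a large leading coefficient, and union-bound over the entries for $\gamma$ drawn uniformly from a fine grid on an interval $[-B,B]$.

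\textbf{Step 1: the entries are quadratics in $\gamma$.} Write $t=\gamma\alpha\|A\|$ and let $a_i$ denote the $i$th column of $A$. Then the $i$th column of $\tilde A$ is $a_i+t\mathbf 1_n$, so
\[
(\tilde A^\top\tilde A)_{ij}=\langle a_i+t\mathbf 1_n,\,a_j+t\mathbf 1_n\rangle = n t^2 + t\,(s_i+s_j)+(A^\top A)_{ij},
\]
where $s_i=\langle a_i,\mathbf 1_n\rangle$. As a function of $\gamma$, this is a quadratic $q_{ij}(\gamma)$ with leading coefficient $a:=n\alpha^2\|A\|^2>0$.

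\textbf{Step 2: norm control.} Write $q_{ij}(\gamma)=a(\gamma-r_1)(\gamma-r_2)$ with $r_1,r_2\in\mathbb C$. For real $\gamma$ we have $|\gamma-r_k|\ge|\gamma-\mathrm{Re}\,r_k|$. Hence $|q_{ij}(\gamma)|<\eta$ forces $\min_k|\gamma-\mathrm{Re}\,r_k|<\sqrt{\eta/a}$. So the bad set for entry $(i,j)$ lies in two intervals of total length at most $4\sqrt{\eta/a}$.

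For the threshold, first bound $\|\tilde A\|$. If $|\gamma|\le B\le 2/n$, then $\|\gamma\alpha\|A\|\mathbf 1\mathbf 1^\top\|\le 2\|A\|$, so $\|\tilde A\|\le 3\|A\|$. The target bound $B=4\sqrt{n/(\delta L)}$ satisfies this precisely because $L\ge 4n^3/\delta$. It therefore suffices to take $\eta=9\alpha^2\|A\|^2/L$, which dominates $\alpha^2\|\tilde A\|^2/L$. With this choice $\sqrt{\eta/a}=3/\sqrt{nL}$, independent of $A$ and $\alpha$.

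\textbf{Step 3: randomness and union bound.} Let $\gamma$ be uniform on a grid of $N=\mathrm{poly}(n/\delta)$ equally spaced points in $[-B,B]$; this costs $O(\log(n/\delta))$ random bits. Each bad set is a union of at most two intervals of length $\le 2\sqrt{\eta/a}$. The fraction of grid points it can contain is therefore at most $(4\sqrt{\eta/a}+4B/N)/(2B)$. The $4B/N$ term accounts for boundary effects and is negligible once $N$ is large.

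Union-bounding over the $n(n+1)/2$ distinct entries (the matrix is symmetric) gives a failure probability of roughly $n^2\cdot 6/(B\sqrt{nL})$. We then want to choose $B$ so that this is $\le\delta$ while also $B\le 4\sqrt{n/(\delta L)}$.

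\textbf{Main obstacle.} The difficulty is reconciling the union bound with the stated range of $\gamma$. The crude count above forces $B\gtrsim n^{3/2}/(\delta\sqrt L)$, whereas the lemma promises $|\gamma|\le 4\sqrt{n/(\delta L)}$. My first attempt to close this gap is to exploit the Gram structure $q_{ij}=\langle a_i+t\mathbf 1,\,a_j+t\mathbf 1\rangle$. The roots of the $n^2$ quadratics are determined by only $n$ vectors, so one can try to bound the number of distinct root clusters by $O(n)$ rather than $O(n^2)$. Alternatively, one can use a second-moment or Markov argument on $\sum_{i,j}\mathbf 1\{|q_{ij}|<\eta\}$ in place of a plain union bound.

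If neither of these matches the stated constant, I would fall back on the cruder $B$. The $\|\tilde A\|\le 3\|A\|$ bound would then be rechecked under the hypothesis on $L$, strengthened if necessary, since downstream only $\|\tilde A-A\|=O(\|A\|)$ and $O(\log(n/\delta))$ random bits are actually used.
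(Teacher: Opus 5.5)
As a proof of the stated lemma, your proposal has a gap: it never reaches the range $|\gamma|\le 4\sqrt{n/(\delta L)}$. But your Steps 1--3 are correct, and the obstacle you isolate is real. The statement is false as written, so no refinement can close the gap.

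Take $\alpha=1$, $n\ge4$, $k=\lfloor n/4\rfloor$ and $D=4\sqrt{n/(\delta L)}\le 2/n$. Given any law of $\gamma$ on $[-D,D]$, cut $[-D,D]$ into $m=\lceil 8n/\sqrt\delta\,\rceil$ cells of length at most $1/\sqrt{nL}$. Let $t_1,\dots,t_k$ be the centers of the $k$ heaviest cells. Let $A$ have columns $a_i=-t_i\mathbf 1_n$ for $i\le k$, and orthonormal vectors of $\mathbf 1_n^\perp$ for $i>k$. Then $A^\top A=n\,tt^\top\oplus I_{n-k}$ with $n\|t\|_2^2\le nkD^2\le 1$, so $\|A\|=1$ and $\|\tilde A\|\ge 1$. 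Meanwhile $(\tilde A^\top\tilde A)_{ii}=n(\gamma-t_i)^2\le 1/(4L)<\|\tilde A\|^2/L$ whenever $\gamma$ falls in the $i$th chosen cell. So the failure probability is at least $k/m\ge\sqrt\delta/72$. This exceeds $\delta$ for $\delta<72^{-2}$, and it does so for every $L\ge 4n^3/\delta$.

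The example uses only $k\le n$ diagonal entries with disjoint bad windows, so neither of your refinements can help. An $O(n)$-cluster bound from the Gram structure would at best give a range of order $\sqrt n/(\delta\sqrt L)$. The same construction, with a general range $\bar\gamma\le 2/n$ in place of $D$, shows that order is necessary. A Markov or second-moment count cannot improve on a union bound that is exact for disjoint windows. The $\delta^{-1/2}$ in the stated range must be replaced by at least $\delta^{-1}$.

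The paper's proof obtains the stated range only through an invalid step. Its derivative bound shows that all but one point of $G_+$ are good when the vertex of $f_{ij}$ lies in $(-\infty,0]$, and symmetrically for $G_-$. It says nothing about the grid on the vertex's side. So for $\gamma$ uniform on $G_-\cup G_+$ it gives per-entry success probability only about $1/2$, not $1-1/(2Cn^2)$. That grid genuinely fails. If $f_{ii}(\gamma)=n(\gamma-t)^2$ with $t\in[-D,-D/2]$, then $|f_{ii}|<D^2/(4Cn)$ on a window of width $D/(n\sqrt C)$, which contains about $2n\sqrt C$ points of $G_-$. That is a per-entry failure of order $\sqrt\delta/n$, and of order $\sqrt\delta$ over $\asymp n$ such entries. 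This is exactly the vertex effect your Step 2 accounts for and the grid-spacing argument ignores.

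Your fallback is the correct repair. Take $\gamma$ uniform on $N\ge 4n^2/\delta$ equally spaced points of $[-\bar\gamma,\bar\gamma]$ with $\bar\gamma=12n^{3/2}/(\delta\sqrt L)$. Your union bound then gives failure probability at most $6n^{3/2}/(\bar\gamma\sqrt L)+2n^2/N\le\delta$, using $O(\log(n/\delta))$ bits. This requires strengthening the hypothesis to $L\ge 36n^5/\delta^2$, so that $n\bar\gamma\le 2$ and $\|\tilde A\|\le 3\|A\|$. The corrected lemma suffices for Lemma~\ref{l:perturbed-cg}, where $L$ is a large polynomial in $n\kappa(A)/(\varepsilon\delta)$. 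Note, however, that the proof there needs $\|\tilde A-A\|=n|\gamma|\alpha\|A\|$ to be polynomially small relative to $\|A\|$, not merely $O(\|A\|)$ as you wrote. Your $\bar\gamma\propto L^{-1/2}$ still delivers this.
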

\begin{proof}
    Let $a_i$ denote the $i$th column of $A$, and assume without loss of generality that $\|A\|\in[1,\alpha^{-1}]$ and drop $\alpha$ from the definition of $\tilde A$. Then, the $(i,j)$-th entry of $\tilde A^\top\tilde A$ can be written as the following function of $\gamma$:
    \begin{align*}
        f_{ij}(\gamma) =(a_i+\gamma\mathbf{1}_n)^\top(a_j+\gamma\mathbf{1}_n) = a_i^\top a_j + \gamma\mathbf{1}_n^\top(a_i+a_j) + \gamma^2n.
    \end{align*}
    Suppose that function $f_{ij}$ attains its minimum for $\gamma\leq 0$. This implies that the derivative at zero is non-negative, i.e., $\frac{\partial f_{ij}}{\partial \gamma}(0)=\mathbf{1}_n^\top(a_i+a_j)\geq 0$, and so for any $\gamma\in[D/2,D]$ where $D=4\sqrt{n/(\delta L)}$,
    \begin{align*}
        \frac{\partial f_{ij}}{\partial \gamma}(\gamma) \geq \frac{\partial f_{ij}}{\partial \gamma}(D/2)\geq Dn.
    \end{align*}
    Let $G_+$ be an evenly spaced grid of $Cn^2$ points covering the interval $[D/2,D]$, where $C=1/(2\delta)$. For any distinct points $x,y\in G_+$, we have $|x-y|\geq D/(2Cn^2)$, so since the derivative of $f_{ij}$ is at least $Dn$ on the interval, it follows that 
    \begin{align*}
        |f_{ij}(x) - f_{ij}(y)| \geq Dn\cdot |x-y| \geq \frac{D^2}{2Cn}.
    \end{align*}
    It follows that for all grid points $\gamma\in G_+$ except for at most one, $|f_{ij}(\gamma)|\geq D^2/(4Cn)$. By an exactly analogous argument, if $f_{ij}$ attains its minimum for $\gamma\geq 0$, then define the grid $G_-$ covering the interval $[-D,-D/2]$, and in this case all but one points in $G_-$ will satisfy $|f_{ij}(\gamma)|\geq D^2/(4Cn)$.

    Let $\gamma$ be a random variable distributed uniformly over $G_-\cup G_+$. Then, the above implies that for any $1\le i,j\leq n$, with probability at least $1 - \frac1{2Cn^2}$ the absolute value of the $(i,j)$-th entry of $\tilde A^\top\tilde A$ is at least $\frac{D^2}{4Cn}$. Note that $\|\tilde A\| \leq (1+|\gamma| n)\|A\|\leq (1+ Dn)/\alpha\leq 2/\alpha$, so the absolute value is at least $\frac{\alpha^2 D^2}{16Cn}\|\tilde A\|^2=\alpha^2\|\tilde A\|^2/L$. Taking a union bound over the entries concludes the proof.
\end{proof}

\subsection{Solving linear systems with perturbed CG}
We now provide some preliminary lemmas regarding the finite precision analysis of CG, and the properties of backward error. Our results build on a detailed stability analysis of CG provided by Greenbaum \cite{greenbaum1989behavior}, who uses the earlier work of Paige \cite{paige1976error} on the stability of the Lanczos algorithm.
\begin{lemma}[CG with perturbed matvecs]\label{l:perturbed-cg}
Consider an invertible matrix $A\in\R^{n\times n}$, $b\in\R^n$, and $\varepsilon,\delta\in(0,1)$. Let $B = \log(n\kappa(A)/\varepsilon\delta)$, and suppose that we have access to $\mathcal{A}$ and $\mathcal{A'}$, which are $2^{-\Omega(B)}$-matvecs for $A$ and $A^\top$, respectively. We can implement Conjugate Gradient using $O(n)$ numbers in $2^{-\Omega(B)}$-arithmetic and $O(\log n)$ random bits on a perturbed instance of the normal equations $A^\top Ax=A^\top b$ so that after $O(\kappa(A)\log(1/\varepsilon))$ steps, it returns $x$ such that with probability~$1-\delta$,
\begin{align*}
    \|Ax - b\|_2 \leq \varepsilon\|b\|_2.
\end{align*}
\end{lemma}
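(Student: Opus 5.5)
Our plan is to reduce to the finite-precision analysis of Conjugate Gradient due to Greenbaum \cite{greenbaum1989behavior}, as refined by Musco, Musco and Sidford \cite{musco2018stability}. That analysis applies to a symmetric positive definite system. Its conclusion is that CG run in $\epsilon'$-arithmetic with an $\epsilon'$-matvec for the SPD matrix behaves like exact CG on a nearby, slightly larger matrix whose spectrum lies in a small enlargement of the original spectrum. This holds as long as $\epsilon' \le 2^{-\Omega(\log(n\kappa))}$. Consequently, the residual in the $(A^\top A)$-norm decays like the Chebyshev bound $2\big((\sqrt{\kappa(A^\top A)}-1)/(\sqrt{\kappa(A^\top A)}+1)\big)^k$, up to additive terms of order $\poly(n\kappa)\epsilon'$. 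Since $\sqrt{\kappa(A^\top A)}=\kappa(A)$, after $k=O(\kappa(A)\log(\kappa(A)/\varepsilon))$ steps we get $\|A^\top A x - A^\top b\|_{(A^\top A)^{-1}} = \|Ax - b_{\mathrm{proj}}\|$ small. Because $A$ is invertible, $b$ lies in the range of $A$, so $\|Ax-b\|_2\le\varepsilon\|b\|_2$. We would absorb the extra $\log\kappa$ factor into the $\log(1/\varepsilon)$ term, or accept it in $B$. Storage is $O(n)$ numbers: the iterate, the residual, the search direction, and one work vector.

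The concrete steps are as follows. First, from $\mathcal A,\mathcal A'$ build a matvec for $A^\top A$ by composition (Lemma \ref{l:composed-matvec}), which is a $3\cdot 2^{-\Omega(B)}$-matvec. Compute $\fl(\mathcal A'(b))$ as the right-hand side. Second, Greenbaum's framework is phrased for CG whose matrix products are computed in floating-point arithmetic, not given by an abstract matvec. To bridge this gap we use Lemma \ref{l:matvec-to-arithmetic}, which needs every entry of the SPD matrix to be bounded away from zero relative to its norm. To arrange this, we apply the rank-one perturbation of Lemma \ref{l:away-from-zero}. It uses $O(\log(n/\delta))=O(\log n)$ random bits (for $\delta$ at least inverse polynomial) and has scale $\alpha$ small enough that the condition number and the solution change by at most a constant factor. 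The matvec with $\tilde A = A+\gamma\alpha\|A\|\mathbf 1\mathbf 1^\top$ can be simulated by Lemma \ref{l:perturbed-matvec}, using one extra inner product with $\mathbf 1$. An estimate of $\|A\|$ up to a constant factor is also needed; we would obtain it via a few power iterations, or by assuming a normalization. With probability $1-\delta$ every entry of $\tilde A^\top\tilde A$ is at least $\alpha^2\|\tilde A\|^2/L$ with $L=\poly(n/\delta)$. So the $2^{-\Omega(B)}$-matvec equals exact $\fl$-computation in $L\cdot 2^{-\Omega(B)} = 2^{-\Omega(B)}$-arithmetic, after adjusting constants in $B$. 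Third, we run CG on $\tilde A^\top\tilde A x=\tilde A^\top b$ and invoke the Greenbaum/MMS bound. The result is $\|\tilde A x-b\|\le(\varepsilon/2)\|b\|$. Finally, we transfer back to $A$ by choosing $\alpha$ small, namely $\alpha\le \varepsilon/(\poly(n)\kappa(A))$. This is affordable since only $\log(1/\alpha)$ bits of precision are consumed, and it gives $\|Ax-b\|\le\|\tilde Ax-b\|+\|\tilde A-A\|\|x\|\le\varepsilon\|b\|$, using $\|x\|\lesssim\|A^{-1}\|\|b\|$.

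The main obstacle is the second step. We must make the entrywise lower bound of Lemma \ref{l:matvec-to-arithmetic} compatible with a perturbation small enough not to spoil the final residual. The grid argument of Lemma \ref{l:away-from-zero} gives entries of size $\alpha^2\|\tilde A\|^2/L$. Hence the precision loss factor is $L/\alpha^2=\poly(n\kappa/\varepsilon\delta)$, which is only logarithmic in bits. Still, it has to be tracked carefully, and must be checked against the precision requirement $\epsilon'\le 2^{-c\log(n\kappa)}$ in Greenbaum's theorem. We would first try to fix $\alpha=\varepsilon/(4n\kappa(A))$ and verify that all the resulting polynomial factors fit within $2^{-\Omega(B)}$.
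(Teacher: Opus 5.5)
Your route is the paper's. You run CG on the normal equations of $\tilde A=A+\gamma\alpha\|A\|\mathbf 1\mathbf 1^\top$, build matvecs for $\tilde A^\top\tilde A$ via Lemmas \ref{l:perturbed-matvec} and \ref{l:composed-matvec}, convert them to arithmetic via Lemmas \ref{l:away-from-zero} and \ref{l:matvec-to-arithmetic}, invoke Greenbaum/MMS, and transfer the residual back to $A$. You keep $L=\poly(n/\delta)$ and shrink $\alpha$, whereas the paper takes $L=2^{\Theta(B)}$. These are equivalent: the size of the rank-one term ($\lesssim \alpha n^{3/2}\|A\|/\sqrt{\delta L}$) and the entrywise floor ($\alpha^2\|\tilde A\|^2/L$) depend only on $\alpha/\sqrt L$.

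Two steps, as written, fall short of the statement.

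\emph{Iteration count.} You arrive at $O(\kappa(A)\log(\kappa(A)/\varepsilon))$ and propose to absorb the $\log\kappa$ into $\log(1/\varepsilon)$ or into $B$. Neither is legitimate: $\kappa$ may be much larger than $1/\varepsilon$, and a factor in the iteration count cannot be traded for bits of precision. The factor is simply not there. With $M=\tilde A^\top\tilde A$ one has, for every $x$, $\|x-M^{-1}\tilde A^\top b\|_M=\|\tilde Ax-b\|_2$, and $\|M^{-1}\tilde A^\top b\|_M=\|b\|_2$. So Greenbaum's relative $M$-norm bound, started from $x_0=0$, gives $\|\tilde Ax-b\|_2\le\varepsilon\|b\|_2$ after $O(\sqrt{\kappa(M)}\log(1/\varepsilon))=O(\kappa(A)\log(1/\varepsilon))$ steps. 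Here $\kappa$ enters only through the right-hand-side error $\|M^{-1/2}\|\,\|v-\tilde A^\top b\|_2\le 2^{-\Omega(B)}\kappa(\tilde A)\|b\|_2$, which is what $B$ pays for. This matters downstream: in Theorem \ref{t:linear-main}, $\kappa=O(n/\varepsilon^3)$, so an extra $\log\kappa$ would cost the $O(n)$-matvec claim a $\log n$ factor.

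\emph{Norm estimate.} You do not need $\|A\|$ to within a constant factor. Power iteration from a random start uses far more than $O(\log n)$ random bits (the paper's Lemma \ref{l:power} uses $O(n\log n)$), and assuming a normalization adds a hypothesis the statement does not have. The paper instead uses one matvec with a pairwise-independent sign vector $r$, which costs $O(\log n)$ bits. Then $\E\|Ar\|_2^2=\|A\|_F^2\le n\|A\|^2$ and deterministically $\|Ar\|_2\ge\sqrt n\,\|A\|/\kappa(A)$. By Markov, $Z\sqrt{\delta/n}$ with $Z=\|Ar\|_2$ lies in $[\sqrt\delta\,\|A\|/\kappa(A),\,\|A\|]$ with probability $1-\delta$. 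In your $L/\alpha^2$ accounting, such a $\poly(\kappa/\delta)$-factor estimate costs only $O(\log(\kappa/\delta))$ additional bits. Shrink your $\alpha$ by the same factor so that the transfer-back bound $\|\tilde A-A\|\le\varepsilon\, s_n(A)/\poly(n)$ still holds.
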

\begin{proof}
    This claim is essentially a corollary of the backward stability analysis of CG by Greenbaum \cite{greenbaum1989behavior}, which was articulated in the floating-point arithmetic model by \cite{musco2018stability}. Here, we provide remaining details to address the matvec access to $A$ and the normal equations. 

    In order to apply Greenbaum's result, we run CG on the system $\tilde A^\top\tilde Ax = \tilde A^\top b$, where $\tilde A=A +\gamma\alpha\|A\|\cdot \mathbf{1}_n\mathbf{1}_n^\top$ as defined in Lemma \ref{l:away-from-zero} with $L=2^{-\Omega(B)}$. Here, to set the value of $\alpha\|A\|$, we need a rough estimate of $\|A\|$. For this, it suffices to run one step of Hutchinson's estimator, computing $Z \approx \|Ar\|_2$, where $r$ is a Rademacher random vector. Since, $\E[Z^2]  = \|A\|_F^2\leq n\|A\|^2$, using Markov's inequality with probability $1-\delta$ we have $Z \leq \sqrt{n/\delta}\|A\|$. We also know that $Z\geq \sqrt n \|A\|/\kappa(A)$, so by letting $\alpha\|A\|=Z\sqrt{\delta/n}$ we can guarantee that $\alpha\geq \sqrt\delta / \kappa(A)$. Finally, observe that for this argument we only need pairwise independence so $r$ can be constructed using $O(1)$ random bits.

    Combining Lemmas \ref{l:perturbed-matvec} and \ref{l:composed-matvec}, we can access $2^{-\Omega(B)}$-matvecs with the matrix $M = \tilde A^\top\tilde A$. Next, combining Lemmas \ref{l:away-from-zero} and \ref{l:matvec-to-arithmetic}, we can simulate these matvecs in $2^{-\Omega(B)}$-arithmetic. Next, let $v\approx \tilde A^\top b$ be also computed using a $2^{-\Omega(B)}$-matvec.

    From Greenbaum's analysis \cite[Theorem 3]{greenbaum1989behavior}, as stated in \cite[Theorem 25]{musco2018stability}, after $O(\sqrt{\kappa(M)}\log(1/\varepsilon))$ iterations, Conjugate Gradient solving $Mx=v$ in $2^{-\Omega(B)}$-arithmetic returns vector $x\in\R^n$ such that $\|x - M^{-1}v\|_{M}\leq \varepsilon \|M^{-1}v\|_M$. It follows that:
    \begin{align*}
        \|\tilde A x - b\|_2 
        &= \|x - M^{-1}\tilde A^\top b\|_M
        \\
        &\leq \|x - M^{-1} v\|_M + \|M^{-1}(v - \tilde A^\top b)\|_{M}
        \\
        &\leq \varepsilon\|M^{-1} v\|_M + \|M^{-1}(v - \tilde A^\top b)\|_{M}
        \\
        & \leq \varepsilon\|M^{-1}\tilde A^\top b\|_M + (1+\varepsilon)\|M^{-1/2}\|\|v-\tilde A^\top b\|_2
        \\
        &\leq \varepsilon\|b\|_2 + 2^{-\Omega(B)}\kappa(A)\|b\|_2 \leq 2\varepsilon\|b\|_2.
    \end{align*}
    Finally, we convert back from $\tilde A$ to $A$ by observing that:
    \begin{align*}
    \|Ax -b\|_2 
    &\leq \|\tilde Ax-b\|_2 + 2^{-\Omega(B)}\|A\|\|x\|_2
    \\
    &\leq 2\varepsilon\|b\|_2 + 2^{-\Omega(B)}\kappa(A)\|Ax\|_2
    \\
    &\leq 3\varepsilon\|b\|_2 + \varepsilon\|Ax-b\|_2.
    \end{align*}
    Rearranging the terms, we get $\|Ax-b\|_2 \leq \frac{3\varepsilon}{1-\varepsilon}\|b\|_2$. Adjusting the constants recovers the claim.
\end{proof}
We next provide some useful properties of backward error for square linear systems. Particularly helpful is the following standard characterization, which relates backward error to the residual.
\begin{lemma}[Lemma 1.1, \cite{Higham:2002:ASNA}]\label{l:backward-formula}
    Consider $A\in\R^{n\times n}$, $b\in\R^n$, and $\varepsilon\in(0,1)$. If vector $x\in\R^n$ satisfies $\|Ax-b\|_2\leq \varepsilon\|A\|\|x\|_2$, then $\tilde Ax=b$ for some $\tilde A$ such that $\|\tilde A-A\|\leq \varepsilon\|A\|$.
\end{lemma}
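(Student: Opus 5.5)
This is Higham's Lemma 1.1 (the Rigal--Gaches characterization), so the plan is to exhibit an explicit rank-one correction. Set $r := b - Ax$ and assume $x\neq 0$. If $x=0$, the hypothesis $\|Ax-b\|_2\le \varepsilon\|A\|\|x\|_2$ forces $b=0$, and then $\tilde A=A$ works trivially.

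Define
\[
\Delta A := \frac{r\,x^\top}{\|x\|_2^2},\qquad \tilde A := A+\Delta A .
\]

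First, check the equation: $\tilde A x = Ax + r\,\frac{x^\top x}{\|x\|_2^2} = Ax + (b-Ax) = b$.

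Second, bound the size of the correction. Since $\Delta A$ has rank one, its spectral norm factorizes as
\[
\|\Delta A\| = \frac{\|r\|_2\,\|x\|_2}{\|x\|_2^2} = \frac{\|Ax-b\|_2}{\|x\|_2}.
\]
By hypothesis this is at most $\varepsilon\|A\|$, which gives $\|\tilde A - A\|\le \varepsilon\|A\|$.

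There is no real obstacle here. The only points needing care are the degenerate case $x=0$, handled above, and the identity $\|uv^\top\| = \|u\|_2\|v\|_2$ for rank-one matrices.
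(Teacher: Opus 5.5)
Your proof is correct and is the standard argument behind the cited result: the paper gives no proof of its own and simply invokes Higham's Lemma~1.1, whose proof uses exactly your rank-one correction $\Delta A = r x^\top/\|x\|_2^2$ with $\|\Delta A\| = \|r\|_2/\|x\|_2$. Your separate treatment of $x=0$, which the hypothesis reduces to $b=0$, is a correct handling of the degenerate case.
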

The above lemma allows us to convert from a forward notion of error, such as the one attained by CG in Lemma \ref{l:perturbed-cg}, to a backward error bound.
\begin{lemma}[From forward to backward error]\label{l:forward-to-backward}
    Consider $A,E\in\R^{n\times n}$ such that $\|E\|\leq \varepsilon\|A\|$ for some $\varepsilon\in(0,1/2)$, and $b\in\R^n$. For $x\in\R^n$, if $\|(A+E)x-b\|_2\leq \varepsilon\|b\|_2$, then $\|Ax-b\|_2\leq 4\varepsilon\|A\|\|x\|_2$.
\end{lemma}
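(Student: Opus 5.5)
The plan is to bound $\|Ax-b\|_2$ by adding and subtracting $Ex$, and then to absorb $\|b\|_2$ into a multiple of $\|A\|\|x\|_2$. By the triangle inequality,
\[
\|Ax-b\|_2 \le \|(A+E)x-b\|_2 + \|Ex\|_2 \le \varepsilon\|b\|_2 + \varepsilon\|A\|\|x\|_2,
\]
where we use the hypothesis together with $\|E\|\le\varepsilon\|A\|$.

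The only remaining task is to control $\|b\|_2$. We write $b = (A+E)x - \big((A+E)x-b\big)$, which gives
\[
\|b\|_2 \le \|A+E\|\|x\|_2 + \varepsilon\|b\|_2 \le (1+\varepsilon)\|A\|\|x\|_2 + \varepsilon\|b\|_2 .
\]
Since $\varepsilon<1/2$, we can rearrange to get $\|b\|_2 \le \frac{1+\varepsilon}{1-\varepsilon}\|A\|\|x\|_2 \le 3\|A\|\|x\|_2$.

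Substituting this into the first display yields $\|Ax-b\|_2 \le 3\varepsilon\|A\|\|x\|_2 + \varepsilon\|A\|\|x\|_2 = 4\varepsilon\|A\|\|x\|_2$, which is the claimed bound.

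There is no real obstacle here; the only point needing care is the rearrangement step. It is exactly where the assumption $\varepsilon<1/2$ is used, since it guarantees $\frac{1+\varepsilon}{1-\varepsilon}\le 3$. We also note that if $x=0$, the hypothesis forces $\|b\|_2\le\varepsilon\|b\|_2$, so $b=0$, and the conclusion holds trivially; in any case the argument above does not require $x\neq 0$. The combination with Lemma \ref{l:backward-formula} (applied with $4\varepsilon$ in place of $\varepsilon$) then yields the backward error statement.
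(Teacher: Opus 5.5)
Your proof is correct and is essentially the paper's argument: rearrange $\|b\|_2\le(1+\varepsilon)\|A\|\|x\|_2+\varepsilon\|b\|_2$ to get $\|b\|_2\le\frac{1+\varepsilon}{1-\varepsilon}\|A\|\|x\|_2$, then combine this with $\|Ax-b\|_2\le\|(A+E)x-b\|_2+\varepsilon\|A\|\|x\|_2$. The one difference is that you avoid dividing by $\|A\|\|x\|_2$ as the paper does, so your version also covers the degenerate cases $x=0$ or $A=0$ without comment.
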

\begin{proof}
    Observe that
    \begin{align*}
        \|b\|_2\leq \|b-(A+E)x\|_2 + \|(A+E)x\|_2 \leq \varepsilon\|b\|_2+ (1+\varepsilon)\|A\|\|x\|_2,
    \end{align*}
    which implies that $\|b\|_2\leq \frac{1+\varepsilon}{1-\varepsilon}\|A\|\|x\|_2$. So, using that $\|Ax-b\|_2\leq \|(A+E)x-b\|_2 + \varepsilon\|A\|\|x\|_2$, we get:
    \begin{align*}
        \frac{\|Ax-b\|_2}{\|A\|\|x\|_2} &\leq \frac{\|(A+E)x-b\|_2 + \varepsilon\|A\|\|x\|_2}{\|A\|\|x\|_2} 
        \\
        &\leq \frac{\varepsilon\|b\|_2}{\|A\|\|x\|_2} + \varepsilon
        \leq \varepsilon\cdot  \frac{1+\varepsilon}{1-\varepsilon} + \varepsilon \leq 4\varepsilon.
    \end{align*}
\end{proof}
Finally, we are ready to state the main result of this section.
\begin{theorem}\label{t:linear-main}
    Consider $\varepsilon\in(0,1)$, $A\in\R^{n\times n}$, $b\in\R^n$, and let $B=\log(n/\varepsilon)$. Given $2^{-\Omega(B)}$-matvecs for $A$ and $A^\top$, and $Z=\Theta(\|A\|)$, using $O(n)$ numbers in $2^{-\Omega(B)}$-arithmetic and $O(n\log n)$ random bits, after $O(n\log(1/\varepsilon)/\varepsilon^3)$ matvec queries we can compute $x$ such that 
    \begin{align*}
        \tilde Ax = b\quad\text{for some $\tilde A$ such that }\|\tilde A-A\|\leq \varepsilon\|A\|.
    \end{align*}       
\end{theorem}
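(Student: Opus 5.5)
The plan is to perturb $A$ with the construction of Theorem~\ref{coro:pertur}, run the perturbed CG of Lemma~\ref{l:perturbed-cg} on $A+\varepsilon' \|A\| R$, and then turn the resulting forward residual bound into a backward error bound via Lemmas~\ref{l:forward-to-backward} and~\ref{l:backward-formula}.

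\emph{Step 1: normalize and perturb.} First rescale so that $\|A\|=\Theta(1)$ using the given estimate $Z=\Theta(\|A\|)$; this costs only a constant-factor change in $\varepsilon$. Next, generate $R$ as in Theorem~\ref{coro:pertur} with $\delta$ a small constant (say $0.05$), using $O(n\log n)$ random bits, and set $E=\varepsilon' Z R$ with $\varepsilon'=c\varepsilon$. Theorem~\ref{coro:pertur} gives $\|R\|\le 1$, hence $\|E\|\le \varepsilon\|A\|/8$. It also gives $s_n(A+E)\gtrsim \varepsilon^3 n^{-1}\|A\|$, so $\kappa(A+E)=O(n/\varepsilon^3)$. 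For $n$ below the threshold $c\delta^{-5}$ the dimension is constant, and any stable dense method suffices.

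\emph{Step 2: matvecs with $E$ in low precision.} $R_1=\frac{1}{\rho\sqrt n}D_1VD_2$ can be applied by regenerating the entries of $V$ on the fly from the $O((\log n)^2)$-bit seed (Corollary~\ref{cor:randomnessV}). $R_2$ is stored as $O(Kn)=O(n)$ index/sign pairs. Both are computed by $\fl$ in $2^{-\Omega(B)}$-arithmetic, so Lemma~\ref{l:arithmetic-to-matvec} makes them $2^{-\Omega(B)}$-matvecs with $B=\log(n/\varepsilon)$. Lemma~\ref{l:perturbed-matvec} then yields $2^{-\Omega(B)}$-matvecs for $A+E$, and the transposes are handled the same way. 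Storage remains $O(n)$ numbers.

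\emph{Step 3: run CG and convert the error.}
\begin{enumerate}
\item Apply Lemma~\ref{l:perturbed-cg} to $A+E$ with accuracy $\varepsilon''=c\varepsilon$. Since $\log(n\kappa(A+E)/\varepsilon\delta)=O(\log(n/\varepsilon))$, the precision $2^{-\Omega(B)}$ is adequate.
\item After $O(\kappa(A+E)\log(1/\varepsilon))=O(n\log(1/\varepsilon)/\varepsilon^3)$ iterations, each costing $O(1)$ matvecs, we obtain $\|(A+E)x-b\|_2\le c\varepsilon\|b\|_2$.
\item Lemma~\ref{l:forward-to-backward}, valid because $\|E\|\le c\varepsilon\|A\|$, gives $\|Ax-b\|_2\le 4c\varepsilon\|A\|\|x\|_2$.
\item Lemma~\ref{l:backward-formula} then produces $\tilde A$ with $\tilde Ax=b$ and $\|\tilde A-A\|\le\varepsilon\|A\|$.
\end{enumerate}
A union bound over the failure events of Theorem~\ref{coro:pertur}, Lemma~\ref{l:perturbed-cg} and the rescaling gives success probability at least $0.9$.

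\emph{Main obstacle.} The delicate step is Step 2: checking that the pattern matrix $V$ can be applied within $O(n)$ space and the stated bit budget while giving a valid matvec. The matvec must be exact up to relative error $2^{-\Omega(B)}$, since the $\pm1$ entries are generated exactly. A second point to check is that the hidden constants in Lemma~\ref{l:perturbed-cg}'s precision requirement depend only polynomially on $\kappa(A+E)=\mathrm{poly}(n/\varepsilon)$, so that $O(\log(n/\varepsilon))$ bits suffice.
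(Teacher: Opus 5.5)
Your proposal is correct and follows the paper's proof. Both perturb to $\hat A=A+\varepsilon' Z R$ with $\kappa(\hat A)=O(n/\varepsilon^3)$, build $2^{-\Omega(B)}$-matvecs via Lemmas~\ref{l:arithmetic-to-matvec} and~\ref{l:perturbed-matvec}, run the perturbed CG of Lemma~\ref{l:perturbed-cg}, and convert the residual bound into a backward-error bound with Lemmas~\ref{l:forward-to-backward} and~\ref{l:backward-formula}. Your extra bookkeeping (the small-$n$ case, on-the-fly generation of $V$) is detail the paper leaves implicit; one caveat is that $V_2,V_3$ need an independent seed per row or column, not a single $O((\log n)^2)$-bit seed, but that is $O(n)$ stored numbers of $O(\log n)$ bits and still fits the stated space and randomness budgets.
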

\begin{proof}
    Let $\hat A = A + \varepsilon Z R$, where $R$ is our random perturbation construction using $O(n\log n)$ random bits which with probability $0.99$  satisfies $\kappa(\hat A) = O(n/\varepsilon^3)$. By combining Lemmas \ref{l:arithmetic-to-matvec} and \ref{l:perturbed-matvec}, we can construct $2^{-\Omega(B)}$-matvec algorithms for $\hat A$ and $\hat A^\top$. Using Lemma \ref{l:perturbed-cg}, we can run perturbed CG in $2^{-\Omega(B)}$-arithmetic to produce $x$ that with probability $0.99$ satisfies $\|\hat Ax-b\|_2\leq \varepsilon\|b\|_2$. Since our perturbation satisfies $\|A-\hat A\|=O(\varepsilon)\|A\|$, we can apply Lemmas~\ref{l:backward-formula} and \ref{l:forward-to-backward} to obtain the claim after adjusting the constants. 
\end{proof}

Theorem \ref{t:linear-main} requires a constant-factor estimate for the spectral norm of $A$. This can be obtained most straightforwardly by running power iteration: letting $z_0\in\R^n$ be a random vector (e.g., i.i.d\ Rademacher entries), and iteratively computing $Z_{i+1}=\|A^\top Az_i\|_2$ and  $z_{i+1} = A^\top Az_i/Z_{i+1}$. If the vectors $z_i$ are normalized at every iteration, power iteration is generally considered stable in finite precision \cite{Higham:2002:ASNA}, and ensures that $Z_i$'s are within a constant factor of $\|A\|^2$ after $O(\log n)$ matvecs. However, we are not aware of a rigorous analysis that shows this in our finite precision model.

An alternative approach that does not require estimating the spectral norm is to perform binary search by starting with a known underestimate $Z_0$ (e.g., by computing $\|A\|_F$), and re-running the algorithm with each $Z_i= 2^iZ_0$ until the success condition $\|\hat Ax-b\|_2\leq \varepsilon\|b\|_2$ is satisfied. However, this introduces an additional $O(\log n)$ factor to the matvec complexity of Theorem~\ref{t:linear-main}, which we would like to avoid.

Instead, we are going to combine these two ideas by performing finite precision analysis of a ``regularized'' power iteration scheme, where the regularization parameter is chosen from a set of candidates obtained similarly as in the binary search described above. 
\begin{lemma}\label{l:power}
    Consider $A\in\R^{n\times n}$ and let $B = \log n$. Given $2^{-\Omega(B)}$-matvecs with $A$ and $A^\top$, using $O(n)$ numbers in $2^{-\Omega(B)}$-arithmetic and $O(n\log n)$ random bits, after $O(\log^2 n)$ matvec queries, with probability $0.9$ we can return $Z\in[\frac12\|A\|,2\|A\|]$.
\end{lemma}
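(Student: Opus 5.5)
The plan is to run power iteration on a regularized matrix and use a geometric grid of candidate regularization levels, so that the $O(\log n)$ overhead of binary search multiplies only the $O(\log n)$ cost of a single power iteration. This gives $O(\log^2 n)$ matvecs in total.

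\textbf{Step 1: a crude underestimate.} Draw a Rademacher vector $r$ and compute $Z_0=\|\mathcal{A}(r)\|_2/\sqrt n$. Since $\E\|Ar\|_2^2=\|A\|_F^2\in[\|A\|^2,n\|A\|^2]$, Paley--Zygmund and Markov give, with constant probability,
\[
  \|A\|/\poly(n)\le Z_0\le \poly(n)\|A\|.
\]
Only pairwise independence is needed here, so $r$ costs $O(\log n)$ random bits. The matvec error contributes only a relative $2^{-\Omega(B)}$ perturbation, which is harmless for this step.

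\textbf{Step 2: the candidate grid and the regularized operator.} Form the candidates $Z_i=2^iZ_0$ for $|i|\le O(\log n)$; one of them is within a factor $2$ of $\|A\|$. For each candidate, consider the regularized operator
\[
  M_i = A^\top A + Z_i^2 I .
\]
It has two useful properties.
\begin{itemize}
  \item Its spectrum lies in $[Z_i^2,\ \|A\|^2+Z_i^2]$, so $M_i$ is never badly scaled relative to the working arithmetic.
  \item Matvecs with $M_i$ are $2^{-\Omega(B)}$-matvecs by Lemmas~\ref{l:perturbed-matvec} and~\ref{l:composed-matvec}. In Lemma~\ref{l:perturbed-matvec} the roles of $A$ and $E$ are swapped when $Z_i$ dominates; this only changes constants.
\end{itemize}

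\textbf{Step 3: normalized power iteration with a finite-precision bound.} Run $T=O(\log n)$ steps of normalized power iteration on $M_i$ starting from a fresh random vector $z_0$, and record the Rayleigh-type estimate $\lambda_i=\|M_i z_T\|_2$.
\begin{itemize}
  \item \emph{Exact-arithmetic analysis.} With constant probability, $z_0$ has mass at least $1/\poly(n)$ on the eigenvectors whose eigenvalues exceed $\tfrac12\|M_i\|$. Then $T=O(\log n)$ iterations yield $\lambda_i\in[\tfrac14\|M_i\|,\|M_i\|]$. We only need a constant factor, not a gap-dependent rate, which is why $O(\log n)$ steps suffice.
  \item \emph{Finite-precision analysis.} Each step introduces an error of relative size $2^{-\Omega(B)}$ in the normalized iterate. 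Writing the computed iterate as the exact power iterate plus accumulated errors, the errors amplify by at most $\|M_i\|/\lambda_{\min}(M_i)\le 1+\|A\|^2/Z_i^2$ per step relative to the growth of the top component.
\end{itemize}

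\textbf{Main obstacle: controlling error accumulation.} For candidates with $Z_i\ll\|A\|$ the amplification ratio in Step 3 is large, and compounding it over $T$ steps could destroy the estimate. I would handle this by restricting attention to indices with $Z_i\ge\|A\|/\poly(n)$, which Step 1 guarantees. Then the ratio is at most $\poly(n)$, and over $T=O(\log n)$ steps the total amplification is $n^{O(\log n)}$. That is too much for $B=\log n$ precision if the errors are compounded naively.

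To avoid this, I would bound errors additively in the norm of the normalized vector instead. Normalization keeps $\|z_t\|=1$, and a backward-error view treats each computed step as an exact step with a perturbed matrix $M_i+E_t$ with $\|E_t\|\le 2^{-\Omega(B)}\|M_i\|$. The product of the $T$ perturbed matrices then differs from $M_i^T$ by at most $T\,2^{-\Omega(B)}\|M_i\|^T$. Since $z_0$ has top-eigenspace component at least $1/\poly(n)$, this keeps $\lambda_i$ within a constant factor of $\|M_i\|$.

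\textbf{Step 4: selecting the output.} Define $\hat Z_i=\sqrt{\lambda_i}$, so that $\hat Z_i^2$ approximates $\|A\|^2+Z_i^2$ within a factor $4$.
\begin{itemize}
  \item Pick the smallest $i$ with $\hat Z_i^2\le 8Z_i^2$. For this $i$ we get $\|A\|\lesssim Z_i$.
  \item For the preceding index, the test failed, which gives $\|A\|\gtrsim Z_{i-1}$.
  \item Hence the chosen $Z_i$ is within a constant factor of $\|A\|$. Tightening to $[\tfrac12\|A\|,2\|A\|]$ takes a finer grid or a final constant-factor correction, for example outputting $\sqrt{\hat Z_i^2-Z_i^2}$ when that quantity is well conditioned.
\end{itemize}
A union bound over the $O(\log n)$ candidates fails, because the constant per-trial success probability does not survive it. Instead, I would reuse a single starting vector $z_0$ for all candidates: the eigenvectors of $M_i$ do not depend on $i$, so one good event on $z_0$ covers every candidate. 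The overall success probability is then $0.9$, and the storage is $O(n)$ numbers.
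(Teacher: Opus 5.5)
Your route differs from the paper's in two places: the finite-precision analysis and the way the good candidate is selected. With two repairs (below) it works.

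\textbf{How the routes differ.} The paper takes $M=\tilde Z^{-2}A^\top A+\tfrac18 I$ and runs \emph{unnormalized} iteration $z_k=M^kz_0$. It bounds the computed error by $2^{-\Omega(B)}\|M\|^k\|z_0\|_2$ and compares this with $\|M^kz_0\|_2\ge\lambda_{\min}(M)^k\|z_0\|_2$. That comparison loses only $\poly(n)$ because, for the one candidate with $\tilde Z\in[\frac12\|A\|,2\|A\|]$, $M$ has $O(1)$ condition number. So the paper analyzes only that candidate. It outputs the maximum of $\|\mathcal A(\tilde z_k)\|_2/\|\tilde z_k\|_2$ over all runs, using the one-sided fact that this ratio never exceeds $(1+2^{-\Omega(B)})\|A\|$ for any vector.

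You instead treat each normalized step as an exact step with $M_i+E_t$, where $\|E_t\|\le 2^{-\Omega(B)}\|M_i\|$. You then compare the accumulated deviation, about $T2^{-\Omega(B)}\|M_i\|^T\|z_0\|_2$, with $\|M_i^Tz_0\|_2\ge|\langle z_0,v_1\rangle|\,\|M_i\|^T$. This argument never uses $\lambda_{\min}(M_i)$, so your detour through $Z_i\ge\|A\|/\poly(n)$ is unnecessary. It also holds for every candidate at once under the shared start, which your two-sided threshold test requires. In fact it applies verbatim to plain normalized power iteration on $A^\top A$. In this model, Step 1, the grid and the regularization could then be dropped, giving $O(\log n)$ matvecs. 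That is precisely the analysis the paper says it is not aware of.

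\textbf{Constants in your selection rule.} The rule needs two-sided accuracy for every candidate, and the constants must be handled carefully. With a factor-$2$ grid and $\lambda_i$ only within a factor $4$ of $\|M_i\|$, the rule pins $Z_i$ to $[\|A\|/\sqrt{31},\sqrt{4/7}\,\|A\|]$. That interval has ratio larger than $4$. No threshold $\tau$ fixes this, since the resulting ratio is $2\sqrt{(4\tau-1)/(\tau-1)}>4$. Your suggested correction $\sqrt{\hat Z_i^2-Z_i^2}$ can even be negative at this accuracy. Either of the following closes the gap: a finer grid, or $T=O(\log n/\eta)$ steps so that $\lambda_i\ge(1-\eta)\|M_i\|$.

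\textbf{Two steps that need repair.} First, Step 1 fails as stated. For $A=\tfrac{1}{\sqrt2}e_1(e_1+e_2)^\top$, one has $Ar=0$ with probability $\tfrac12$. Then every $Z_i=0$, every test fails, and nothing is output. Moreover, Paley--Zygmund needs fourth moments, so you need $4$-wise rather than pairwise independence. Even then it gives only a small constant probability, which is incompatible with the final $0.9$. Either fix works: average $O(1)$ independent copies and apply Chebyshev, as the paper does, or use $\|\mathcal A(z_0)\|_2$ with the Gaussian start.

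Second, the same example shows that the shared start must be anti-concentrated. A Rademacher $z_0$ is orthogonal to $v_1=(e_1+e_2)/\sqrt2$ half the time. You should specify a Gaussian discretized to $O(\log n)$ bits per entry, which costs $O(n\log n)$ random bits. This gives $|\langle z_0,v_1\rangle|\ge c$ with probability $0.95$, and that single event covers all candidates.
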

\begin{proof}
    Using $k=O(1)$ matvecs with independent Rademacher vectors $r_i$, we can use the Hutchinson's estimator $\hat Z^2 = \frac1k\sum_{i=1}^k\|Ar_i\|_2^2 \pm 2^{-\Omega(B)}\|A\|^2$ such that $\hat Z = \Theta(\|A\|_F)\pm 2^{-\Omega(B)}\|A\| = \Theta(\|A\|_F)$ with probability $0.95$. Since $\|A\|\in [n^{-1/2}\|A\|_F,\|A\|_F]$, by covering this range with logarithmically spaced values, we can produce $O(\log n)$ candidates $Z_i$ such that there is $i$ for which $Z_i\in[\frac12\|A\|,2\|A\|]$. If we could identify which one of these is the correct one, then we would be done. To do this, we are going to design a power iteration scheme which uses $Z_i$ as a parameter, and we will show that if $Z_i$ is indeed a factor $2$ estimate of the true spectral norm, then power iteration in finite precision will also produce a good estimate. This way, simply taking the maximum over the estimates produced from all of the power iteration runs, we will be able to find a good enough estimate.

    Suppose that we have an estimate $\tilde Z =\alpha \|A\|$. We define a $k$-step power iteration algorithm as follows:
    \begin{align*}
        z_k = M^k z_0, \quad \text{where}\quad M = \frac 1{\tilde Z^2}A^\top A + c I.
    \end{align*}
    Here, $c$ is a small constant to be chosen later, and the initial vector $z_0$ is a Gaussian random vector (which is later discretized to $O(\log n)$ precision). We will first show that if $\alpha\in[1/2,2]$, then in exact arithmetic for $k=O(\log n)$ with probability $0.95$ we have
    \begin{align}
        \frac{\|A z_k\|_2}{\|z_k\|_2} \geq \frac12 \|A\|.\label{eq:power-bound}
    \end{align}
    Observe that by standard power iteration analysis \cite{kuczynski1992estimating}, after $k=O(\log n)$ steps with probability $0.95$ we have $z_k^\top M z_k/\|z_k\|_2^2 \geq \frac12\|M\|$. Moreover, note that $\|M\| = \frac1{\alpha^2} + c$ and $z_k^\top M z_k/\|z_k\|_2^2 = \frac{\|Az_k\|_2^2}{\alpha^2\|A\|^2\|z_k\|_2} + c$, so putting this together, and setting $c=1/8$, we conclude that 
    \begin{align*}
        \frac{\|Az_k\|_2^2}{\|z_k\|_2^2} \geq \frac12\Big(\frac1{\alpha^2} - c\Big)\alpha^2\|A\|^2 \geq \frac14\|A\|^2.
    \end{align*}
    We next show that the algorithm can be implemented in $2^{-\Omega(B)}$-arithmetic and with $2^{-\Omega(B)}$-matvecs. First, using standard arguments we can discretize the Gaussian vector $z_0$ over the set $I^n$, where $I = [-\poly(n),-1/\poly(n)]\cup[1/\poly(n),\poly(n)]$ to a random vector $\tilde z_0$ in $O(\log n)$ precision, so that $\tilde z_0$ requires $O(n\log n)$ bits to generate, and with high probability $\tilde z_0 = (1+\delta)z_0$, where $\|\delta\| \leq 2^{-\Omega(B)}$. Note that we can reuse the same initial vector in each run of power iteration, since we only actually need one of the runs to succeed with sufficient probability.
    
    Next, note that by combining Lemmas \ref{l:composed-matvec} and \ref{l:perturbed-matvec}, we can obtain a $2^{-\Omega(B)}$-matvec for $M$. Now, choose $k$ such that $k=2^l$ for $l = O(\log \log n)$. Then, using Lemma \ref{l:composed-matvec} by repeatedly squaring, we can obtain a $3^l 2^{-\Omega(B)}$-matvec for $M^k = M^{2^l}$, and let us call it $\mathcal{M}_k$. Note that since $3^l$ is $\poly(\log n)$, this can be absorbed into the constant in $\Omega(B)$. Now, let $\tilde z_k = \mathcal{M}_k(\tilde z_0)$. Then,
    \begin{align*}
        \|\tilde z_k - z_k\|_2 \leq 2^{-\Omega(B)}\|M^k\|\|\tilde z_0\|_2 \leq 2^{-\Omega(B)}(4+c)^k\|z_0\|_2.
    \end{align*}
    Notice that since $k=O(\log n)$, then $(4+c)^k = \poly(n)$. Moreover, since $M\succeq cI$, then  $\|z_k\|_2 = \|M^kz_0\|_2\geq c^k\|z_0\|_2 = \|z_0\|_2/\poly(n)$. Putting this together we conclude that $\|\tilde z_k - z_k\|_2 \leq 2^{-\Omega(B)}\poly(n)\|z_k\|_2\leq 2^{-\Omega(B)}\|z_k\|_2$ by adjusting the constant in the exponent again. This in particular means that $\|\tilde z_k\|=(1\pm 2^{-\Omega(B)})\|z_k\|_2$. Finally, let $\mathcal{A}$ denote the $2^{-\Omega(B)}$-matvec for $A$. Then,
    \begin{align*}
        \|\mathcal{A}(\tilde z_k) - Az_k\|_2 
        \leq \|\mathcal{A}(\tilde z_k) - A\tilde z_k\|_2 + \|A\tilde z_k - Az_k\|_2 
        \leq 2^{-\Omega(B)}\|A\|\|z_k\|_2 \leq 2^{-\Omega(B)}\|Az_k\|_2,
    \end{align*}
    where in the last step we used \eqref{eq:power-bound}. This implies that $\|\mathcal{A}(\tilde z_k)\|_2=(1\pm 2^{-\Omega(B)})\|Az_k\|_2$. Together, these imply that the estimate produced in finite precision satisfies:
    \begin{align*}
        \fl\bigg(\frac{\|\mathcal{A}(\tilde z_k)\|_2}{\|\tilde z_k\|_2}\bigg) \geq \Big(1-2^{-\Omega(B)}\Big)
        \frac{\|Az_k\|_2}{\|z_k\|_2} \geq \Big(1-2^{-\Omega(B)}\Big) \frac12 \|A\|\geq \frac13\|A\|.
    \end{align*}
    We conclude that as long as the estimate $\tilde Z$ used to define the power iteration scheme satisfies $\tilde Z\in[\frac12\|A\|,2\|A\|]$, then with probability $0.9$ the algorithm in finite precision will produce an estimate in $[\frac13\|A\|,(1+2^{-\Omega(B)})\|A\|]$. Also, it is easy to see that regardless of the choice of $\tilde Z$, the algorithm will output an estimate less than $(1+2^{-\Omega(B)})\|A\|$. Therefore, running the procedure for each $\tilde Z = Z_i$ and taking the maximum will with probability $0.9$ return a good estimate. Altogether, this requires $O(\log n)$ power iteration runs, each using $O(\log n)$ matvec queries.
\end{proof}

 \subsection*{Acknowledgments} The authors thank Rikhav Shah for helpful conversations. 

\appendix

\printbibliography

\end{document}